\newtheorem{lemma}{Lemma}
\newtheorem{claim}{Claim}
\newtheorem{corollary}{Corollary}
\newtheorem{theorem}{Theorem}
\newtheorem{proposition}{Proposition}
\newtheorem{definition}{Definition}
\newcounter{example}
\newenvironment{example}{\refstepcounter{example}\par\bigskip
\noindent\textit{Example~\theexample.}\rmfamily}{\hfill$\dashv$\medskip}
\newcommand{\supp}{{\mathrm{Supp}}}
\newcommand{\tuples}{{\mathrm{Tup}}}
\newcommand{\domain}{{\mathrm{Dom}}}
\newcommand{\ftp}{transportation property}
\newcommand{\icp}{inner consistency property}
\newcommand{\ltgc}{local-to-global consistency property}
\newcommand{\norm}[1]{\Vert #1 \Vert}
\newcommand{\free}[1]{\mathbb{F}(#1)}
\newcommand{\transpose}{\mathrm{T}}
\newcommand{\Id}{\mathrm{I}}
\newcommand{\trace}{\mathrm{tr}}
\newcommand{\onto}{\stackrel{\scriptscriptstyle{s}}\rightarrow}
\newcommand{\standardjoin}[1]{\Join_{#1,\mathrm{S}}}
\newcommand{\vorobyevjoin}[1]{\Join_{#1,\mathrm{V}}}
\newcommand{\componentwisejoin}[2]{\Join^{#1}_{#2}}
\newcommand{\commentout}[1]{}
\newcommand{\newatop}[2]{\genfrac{}{}{0pt}{2}{#1}{#2}}
\begin{document}

\title{\bf Consistency of Relations over Monoids}

\author{Albert Atserias\footnote{Universitat Polit\`ecnica de Catalunya and Centre de Recerca Matem\`atica, Barcelona, Spain. }
\and
Phokion G. Kolaitis\footnote{University of California Santa Cruz and IBM Research,
 California, USA}}

\maketitle

\begin{abstract}

The interplay between local consistency and global consistency has
been the object of study in several different areas, including
probability theory, relational databases, and quantum information. For
relational databases, Beeri, Fagin, Maier, and Yannakakis showed that
a database schema is acyclic if and only if it has the local-to-global 
consistency property for relations, which means that every
collection of pairwise consistent relations over the schema is
globally consistent. More recently, the same result has been shown
under bag semantics. In this paper, we carry out a systematic study
of local vs.\ global consistency for relations over positive
commutative monoids, which is a common generalization of ordinary
relations and bags. Let $\mathbb K$ be an arbitrary positive
commutative monoid. We begin by showing that acyclicity of the schema
is a necessary condition for the local-to-global consistency property
for $\mathbb K$-relations to hold. Unlike the case of ordinary
relations and bags, however, we show that acyclicity is not always
sufficient. After this, we characterize the positive commutative
monoids for which acyclicity is both necessary and sufficient for the
local-to-global consistency property to hold; this characterization
involves a combinatorial property of monoids, which we call the
\emph{transportation property}. We then identify several different
classes of monoids that possess the transportation property. As our
final contribution, we introduce a modified notion of local
consistency of $\mathbb{K}$-relations, which we call \emph{pairwise
consistency up to the free cover}. We prove that, for all positive
commutative monoids $\mathbb{K}$, even those without the
transportation property, acyclicity is both necessary and sufficient
for every family of $\mathbb{K}$-relations that is pairwise consistent
up to the free cover to be globally consistent.

\end{abstract}

\newpage

\section{Introduction}  The interplay between local consistency and global consistency has been investigated
in several different settings. In each such setting, the concepts ``local", ``global", and ``consistent" are defined rigorously and  a study is carried out as to when objects that are locally consistent are also globally consistent. In probability theory, Vorob'ev
\cite{vorob1962consistent} studied when, for  a collection of probability distributions on overlapping sets of variables, there is  a global probability distribution whose marginals  coincide with the probability distributions in that  collection. In quantum mechanics, Bell's theorem \cite{bell1964einstein} is about \emph{contextuality} phenomena, where   empirical local measurements  may be locally consistent but there is no global explanation for  these measurements in terms of hidden local variables. In relational databases,
there has been an extensive study of 
 the universal relation problem \cite{DBLP:journals/tods/AhoBU79,DBLP:journals/ipl/HoneymanLY80,DBLP:conf/pods/Ullman82}: given  relations $R_1,\ldots,R_m$, is there a relation
$W$ such that, for each relation $R_i$, the projection of $W$ on the
attributes of $R_i$ is equal to $R_i$? 
If the answer is positive, the relations $R_1,\ldots,R_m$ are said to be \emph{globally consistent} and $W$ is  a \emph{universal relation} for them.
Note that if the relations  $R_1,\ldots,R_m$ are globally consistent, then they are
\emph{pairwise consistent} (i.e., every two of them are globally consistent), but the converse need not hold. 

Beeri, Fagin, Maier, and Yannakakis
\cite{BeeriFaginMaierYannakakis1983} showed that a relational schema is \emph{acyclic} 
if and only if the \emph{local-to-global consistency property for relations}
over that schema holds, which means that every collection of pairwise
consistent relations over the schema is globally consistent. Thus, for
acyclic schemas, pairwise consistency and global consistency coincide. Note that set semantics is used in this result, i.e., the result is about ordinary relations. More recently, in \cite{DBLP:conf/pods/AtseriasK21}  it was shown that an analogous result holds also under bag semantics:  a relational schema is acyclic if and only if the \ltgc~for bags holds, where in the definitions of pairwise consistency and global consistency for bags, the projection operation adds the multiplicities of all tuples in the relation that are projected to the same tuple. 
 It should be pointed out, however, that there are significant differences between set semantics and bag semantics as regards consistency properties. In particular, under set semantics, the relational join of two consistent relations is the largest witness of their consistency, while, under bag semantics, the join of two consistent bags  need not even be a witness of their consistency \cite{DBLP:conf/pods/AtseriasK21}.

During the past two decades and starting with the influential paper \cite{DBLP:conf/pods/GreenKT07}, there has been a growing study of $\mathbb K$-relations, where tuples in $\mathbb K$-relations are annotated with values from the universe of a fixed semiring $\mathbb K$.  Clearly, ordinary relations are $\mathbb B$-relations, where $\mathbb B$ is the Boolean semiring, while bags are $\mathbb N$-relations, where $\mathbb N$ is the semiring of non-negative integers. Originally, $\mathbb K$-relations were studied in the context of provenance in databases \cite{DBLP:conf/pods/GreenKT07}; since that time, the study has been expanded to other fundamental problems in databases, including the query containment problem \cite{DBLP:journals/mst/Green11,DBLP:journals/tods/KostylevRS14}. Note that in the study of both provenance and  query containment, the definitions of the basic concepts  involve  both the addition operation and the multiplication operation of the semiring $\mathbb K$.

Aiming to obtain a common generalization of the results in  \cite{BeeriFaginMaierYannakakis1983} and in \cite{DBLP:conf/pods/AtseriasK21},  we carry out a systematic investigation  of local consistency vs.\ global consistency for relations whose tuples are annotated with values from the universe of some suitable algebraic structure. At first sight, semirings appear to be the most general algebraic structures for this purpose. Upon closer reflection, however,  one realizes that the definition of a projection of $\mathbb K$-relation involves only the addition operation of the semiring (and not the multiplication operation), hence so do the definitions of the notions of local and global consistency for $\mathbb K$-relations. For this reason, we embark on a study of the interplay between local vs.\ global consistency for $\mathbb K$-relations, where ${\mathbb K}=(K,+,0)$ is a commutative monoid. In addition, we  require the monoid $\mathbb K$ to be \emph{positive}, which means that the  sum of non-zero elements from $K$ is non-zero. This condition is needed in  key technical results, but it also ensures that the support of the projection of a $\mathbb K$-relation is equal to the support of that relation.

 Let $\mathbb K$ be an arbitrary positive commutative monoid.
Our first result asserts that 
if a hypergraph $H$ is not acyclic, then there is a collection of pairwise consistent $\mathbb K$-relations over $H$ that  are not globally consistent; in other words, acyclicity is a necessary condition for the \ltgc~for $\mathbb K$-relations to hold. The construction of such $\mathbb K$-relations is similar to the one used for bags in \cite{DBLP:conf/pods/AtseriasK21}, which, in turn, was inspired from
 an earlier
construction of hard-to-prove tautologies in propositional logic by
Tseitin \cite{Tseitin1968}.

Unlike the Boolean monoid $\mathbb B$ (case of ordinary relations) and the monoid $\mathbb N$ of non-negative integers (case of bags), however, we show that there are positive commutative monoids $\mathbb K$ 
%(actually, both finite and infinite ones \albert{do we need this remark?}) 
for which acyclicity is not a sufficent condition  for the \ltgc~for $\mathbb K$-relations to hold.
%Actually, it turns out that there are both finite and infinite  positive commutative monoids for which acyclicity is not enough. 
We then go on to
characterize the positive commutative monoids for which acyclicity is both necessary and sufficient for the \ltgc~to hold.  In fact, we obtain two different characterizations, a semantic one, which we call the \emph{inner consistency property}, and a combinatorial one, which we call the 
\emph{transportation property}. The inner consistency property asserts that if two $\mathbb K$-relations have the same projection on the set of their common attributes, then they are consistent (note that the converse is always true). The transportation property asserts that every balanced instance of the transportation problem with values from $\mathbb K$ has a solution in $\mathbb K$; these concepts and the terminology are as in the well-studied transportation problem in linear programming.

We then identify several different classes of monoids that possess the transportation property. 
Special cases include the Boolean  monoid $\mathbb B$, the  monoid $\mathbb N$ of non-negative integers,  the monoid ${\mathbb R}^{\geq 0}$ of the non-negative real numbers with addition, the monoids obtained by restricting tropical semirings to their additive structure, 
%product monoids 
 various monoids of provenance polynomials, and the free commutative monoid on a set of indeterminates.  Furthermore, for each such class of monoids, we give either an explicit construction or a procedure for computing a witness to the consistency of two consistent $\mathbb K$-relations.

After this extended investigation of classes of positive commutative monoids with the transportation property, we revisit the broader
question of characterizing the \ltgc~for
collections of $\mathbb K$-relations on acyclic schemas for \emph{arbitrary} positive
commutative monoids $\mathbb K$. By the ``no-go examples" in the first part of the
paper, we know that any such characterization that applies to all
positive commutative monoids must  either require  more than just pairwise
consistency or settle for less than global consistency. 

In \cite{AtseriasK23}, 
the second scenario was explored. Specifically, by relaxing the notion of
consistency to what was called there \emph{consistency up to normalization}, it was shown
that the \ltgc~up to normalization holds precisely for the acyclic
schemas. While this result is a common generalization of the theorems  by  Vorob'ev \cite{vorob1962consistent} and by Beeri et al.~\cite{BeeriFaginMaierYannakakis1983} (because for ordinary relations and for probability distributions
the relaxed concept of consistency up to normalization agrees with the
standard one), it fails to generalize the \ltgc~for bags from \cite{DBLP:conf/pods/AtseriasK21}.
 Furthermore, the definition of this relaxed  notion of consistency
required $\mathbb K$ to come equipped with a multiplication operation making it
into a positive semiring, hence the result in \cite{AtseriasK23} does not apply to arbitrary positive commutative monoids.

Here, we explore the first scenario by
introducing a stronger notion of consistency, which we call 
 \emph{consistency up to the free cover} (the term reflects the role that the free commutative monoid plays in the definition of this notion). 
 First, we prove
that the \ltgc~with consistency strengthened to consistency up to the free
cover holds precisely for the acyclic schemas. Second and perhaps 
unexpectedly, by exploiting the universal property of the free
commutative monoid, we establish  that the  notion of
global consistency up to the free cover is \emph{absolute}, in the sense that
global consistency holds up to the free cover if and only if it holds in
the standard sense. As a consequence,
we have that for every positive commutative monoid $\mathbb K$, 
a schema $H$ is acyclic 
 precisely when
every collection of $\mathbb K$-relations over $H$ that is pairwise consistent up to the
free cover is indeed globally consistent. 
Vice versa,  every
 collection of $\mathbb K$-relations that is globally consistent is pairwise
 consistent up to the free cover.
We view these results as an answer to the
question of characterizing the global consistency of relations for
acyclic schemas in the broader setting of relations over arbitrary
positive commutative monoids.

\section{Preliminaries} \label{sec:prelims}

\paragraph{Positive Commutative Monoids} 
A \emph{commutative monoid} is a structure $\mathbb{K}=(K,+,0)$, where $+$ is a binary operation on the universe $K$ of $\mathbb K$  that is
 associative, commutative, and has $0$ as its neutral  element, i.e., $p+ 0 = p = 0 + p$ holds for all $p\in K$.  A \emph{positive commutative monoid} is a commutative monoid ${\mathbb K}=(K,+,0)$ such that for all elements $p,q\in K$ with
$p+q=0$, we have that  $p=0$ and $q=0$.  To avoid trivialities, we will assume that all commutative monoids considered have at least two elements in their universe.

As an example, the structure $\mathbb{B} = (\{0,1\},\vee,0)$ with disjunction $\vee$ as its operation and~$0$~(false) as its neutral element is a positive commutative monoid. Other examples of positive commutative monoids include the structures $\mathbb{N}=(Z^{\geq 0}, +, 0,)$, $\mathbb{Q}^{\geq 0}=(Q^{\geq 0}, +,  0)$, $\mathbb{R}^{\geq 0}=(R^{\geq 0}, +, 0)$, where $Z^{\geq 0}$ is the set of non-negative integers, $Q^{\geq 0}$ is the set of non-negative rational numbers, $R^{\geq 0}$ is the set of non-negative real numbers, and $+$ is the standard addition operation. In contrast, the structure ${\mathbb Z}=(Z,+,0)$, where $Z$ is the set of integers, is a commutative monoid, but not a positive one. Two examples of positive commutative monoids of different flavor are the structures ${\mathbb T}= (R\cup\{\infty\}, \min, \infty)$ and ${\mathbb V}=([0,1], \max, 0)$, where $R$ is the set of real numbers, and $\min$ and $\max$ are the standard minimum and maximum operations.
Finally, if $A$ is a set and $\mathcal{P}(A)$ is its powerset, then the structure
${\mathbb P}(A)= (\mathcal{P}(A), \cup,\emptyset)$ is a positive commutative monoid, where $\cup$ is the union operation on sets.

\paragraph{Definition of~$\mathbb K$-relations and their Marginals}
An \emph{attribute}~$A$ is a symbol with an associated
set~$\domain(A)$,  called its \emph{domain}. If~$X$ is a finite set of
attributes, then we write~$\tuples(X)$ for the set
of~\emph{$X$-tuples}, i.e.,~$\tuples(X)$ is the set of
functions that take each attribute~$A \in X$ to an element of its
domain~$\domain(A)$. Note that~$\tuples(\emptyset)$ is non-empty as it
contains the \emph{empty tuple}, i.e., the unique function with empty
domain. If~$Y \subseteq X$ is a subset of attributes and~$t$ is
an~$X$-tuple, then the \emph{projection of~$t$ on~$Y$}, denoted
by~$t[Y]$, is the unique~$Y$-tuple that agrees with~$t$ on~$Y$. In
particular,~$t[\emptyset]$ is the empty tuple.

Let~${\mathbb K} = (K,+,0)$ be a positive commutative monoid and let~$X$ be a finite set
of attributes. A~\emph{$\mathbb{K}$-relation over~$X$} is a
function~$R : \tuples(X) \rightarrow K$ that assigns a value~$R(t)$ in~$K$
to every~$X$-tuple~$t$ in~$\tuples(X)$. 
We will often write $R(X)$ to indicate that $R$ is a $\mathbb K$-relation over $X$, and we will refer to $X$ as the set of attributes of $R$.
These notions 
make sense even if~$X$ is the empty set of attributes, in which case a~$\mathbb{K}$-relation 
over~$X$ is simply a single value from~$K$ that
is assigned to the empty tuple. 
Clearly, the $\mathbb B$-relations are just the ordinary relations, while the $\mathbb N$-relations are the \emph{bags} or 
\emph{multisets}, i.e., each tuple has a non-negative integer associated with it that denotes the  \emph{multiplicity} of the tuple.

The \emph{support} of
a~$\mathbb{K}$-relation~$R(X)$, denoted by~$\supp(R)$, is the set
of~$X$-tuples~$t$ that are assigned non-zero value, i.e.,
\begin{equation}
\supp(R) := \{ t \in \tuples(X) : R(t) \not= 0 \}. \label{def:support}
\end{equation}
Whenever this does not lead to confusion, we write~$R'$ to
denote~$\supp(R)$. Note that~$R'$ is an ordinary relation
over~$X$. A~$\mathbb{K}$-relation is \emph{finitely supported} if its support is a
finite set. In this paper, all~$\mathbb{K}$-relations considered will be  finitely supported, 
and we omit the term; thus, from now on, a $\mathbb{K}$-relation is a finitely supported $\mathbb{K}$-relation. 
When~$R'$ is empty, we say that~$R$ is the empty~$\mathbb{K}$-relation over~$X$. 

If~$Y\subseteq X$, then the \emph{marginal $R[Y]$ of $R$ on $Y$}  is the~$\mathbb{K}$-relation 
over~$Y$ such that for every~$Y$-tuple~$t$, we have that
   \begin{equation}
R[Y](t) := \sum_{\newatop{r \in R':}{r[Y] = t}} R(r). \label{eqn:marginal}
\end{equation}
The value $R[Y](t)$ is  the \emph{marginal of $R$ over $t$}. In what follows and for notational simplicity, we will often write $R(t)$ for the marginal of $R$ over $t$, instead of $R[Y](t)$. It will be clear from the context (e.g., from the arity of the tuple $t$) if $R(t)$ is indeed the marginal of $R$ over $t$ (in which case $t$ must be a $Y$-tuple) or $R(t)$ is the actual value of $R$ on $t$ as a mapping from $\tuples(X)$ to $K$ (in which case $t$ must be an $X$-tuple).
   Note that if $R$ is an ordinary relation (i.e., $R$ is a $\mathbb B$-relation), then the marginal $R[Y]$ is the projection of $R$ on $Y$.

\begin{lemma} \label{lem:easyfacts1} Let 
$\mathbb K$ be a positive commutative monoid and let 
$R(X)$ be a~$\mathbb K$-relation. The
  following statements hold:
  \begin{enumerate} \itemsep=0pt
  \item For all~$Y \subseteq X$, we have~$R'[Y] = R[Y]'$.
  \item For all $Z \subseteq Y \subseteq X$, we have $R[Y][Z] = R[Z]$.
\end{enumerate}
\end{lemma}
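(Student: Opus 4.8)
The plan is to prove both statements by direct computation from the definition of the marginal in equation~(\ref{eqn:marginal}), together with the fact that $\mathbb K$ is a positive commutative monoid. The key observation underlying both parts is that, because $\mathbb K$ is positive, a sum of elements of $K$ is zero if and only if every summand is zero; equivalently, $R[Y](t) \neq 0$ if and only if at least one tuple $r \in R'$ with $r[Y] = t$ has $R(r) \neq 0$. Positivity is exactly what rules out the possibility of nonzero values cancelling to produce a spurious zero marginal, so it is what makes statement~(1) an equality of supports rather than merely an inclusion.

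For part~(1), I would argue by double inclusion, or more cleanly as a chain of equivalences. Fix a $Y$-tuple $t$. The statement $t \in R[Y]'$ means $R[Y](t) \neq 0$, which by~(\ref{eqn:marginal}) means that the sum $\sum_{r \in R',\, r[Y]=t} R(r)$ is nonzero. Since every term $R(r)$ in this sum is nonzero (the index set ranges over $r \in R' = \supp(R)$), positivity of $\mathbb K$ guarantees the sum is nonzero precisely when the index set is nonempty, i.e.\ precisely when there exists $r \in R'$ with $r[Y] = t$. On the other side, $t \in R'[Y]$ means $t$ is the projection onto $Y$ of some tuple in the ordinary relation $R'$, which is exactly the existence of such an $r$. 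Hence the two conditions coincide and $R'[Y] = R[Y]'$.

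For part~(2), fix a $Z$-tuple $t$ with $Z \subseteq Y \subseteq X$. Expanding the definition, $R[Y][Z](t) = \sum_{s \in R[Y]',\, s[Z]=t} R[Y](s)$, and then expanding each inner marginal gives a double sum $\sum_{s}\sum_{r} R(r)$ over pairs with $s \in R[Y]'$, $s[Z] = t$, $r \in R'$, and $r[Y] = s$. The goal is to collapse this to $R[Z](t) = \sum_{r \in R',\, r[Z]=t} R(r)$. The natural approach is to exhibit a bijection between the index sets: each $r \in R'$ with $r[Z] = t$ determines a unique $s = r[Y]$, and by part~(1) this $s$ lies in $R[Y]'$ and satisfies $s[Z] = (r[Y])[Z] = r[Z] = t$, using transitivity of projection on ordinary tuples; conversely every pair $(s,r)$ in the double sum arises this way. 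Thus the double sum reindexes to the single sum and the two marginals agree.

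The main obstacle, to the extent there is one, is bookkeeping rather than conceptual: I must be careful that part~(1) is genuinely used, since the inner summation in part~(2) ranges over $s \in R[Y]'$, and I need to know that this is the same as summing over the $s$ of the form $r[Y]$ for $r \in R'$ with $r[Z]=t$ without double-counting or omitting index values. Invoking $R'[Y] = R[Y]'$ from part~(1) resolves this cleanly, as it certifies that the ``outer'' support $R[Y]'$ is exactly the projection of the ``inner'' support $R'$. Once the two index sets are matched by the projection-composition identity $(r[Y])[Z] = r[Z]$ for ordinary tuples, associativity and commutativity of $+$ let the nested sums be rearranged into the desired single sum, completing the proof.
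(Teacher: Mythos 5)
Your proof is correct and takes essentially the same route as the paper's: part~(1) is the same positivity-based comparison of supports, and your bijective reindexing in part~(2) is precisely the paper's step of replacing $R[Y]'$ by $R'[Y]$ via part~(1) and then partitioning the tuples of $R'$ by their projection on $Y$ (using $Z \subseteq Y$ so that $(r[Y])[Z] = r[Z]$). You also correctly identify where positivity is genuinely needed, namely the inclusion $R'[Y] \subseteq R[Y]'$, which is the same point the paper flags.
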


\begin{proof}
  For the first part, the inclusion~$R[Y]' \subseteq R'[Y]$ is obvious. For the converse, assume that~$t \in R'[Y]$, so
  there exists~$r$ such that~$R(r) \not= 0$ and~$r[Y] =
  t$. By~\eqref{eqn:marginal} and the positivity of~$\mathbb K$, we have
  that~$R(t) \not= 0$. Hence~$t \in R[Y]'$.  
  
  For the second part, we have
  \begin{equation}
  R[Y][Z](u) = \sum_{\newatop{v \in R[Y]':}{v[Z]=u}} R[Y](v) =
  \sum_{\newatop{v \in R'[Y]:}{v[Z]=u}} \sum_{\newatop{w \in R':}{w[Y]=v}} R(w) =
  \sum_{\newatop{w \in R':}{w[Z]=u}} R(w) = R[Z](u)
\end{equation}
where the first equality follows from~\eqref{eqn:marginal}, the second
follows from the first part of this lemma to replace~$R[Y]'$ by~$R'[Y]$, and
again~\eqref{eqn:marginal}, the third follows from partitioning the
tuples in~$R'$ by their projection on~$Y$, together
with~$Z \subseteq Y$, and the fourth follows from~\eqref{eqn:marginal}
again.
\end{proof}

If~$X$ and~$Y$ are sets of attributes, then we write~$XY$ as
shorthand for the union~$X \cup Y$. Accordingly, if~$x$ is
an~$X$-tuple and~$y$ is a~$Y$-tuple with the property
that~$x[X \cap Y] = y[X \cap Y]$, then we write~$xy$ to denote
the~$XY$-tuple that agrees with~$x$ on~$X$ and on~$y$ on~$Y$.  We
say that~\emph{$x$ joins with~$y$}, and that~\emph{$y$ joins
  with~$x$}, to \emph{produce} the tuple~$xy$.

A \emph{schema} is a sequence~$X_1,\ldots,X_m$ of sets of attributes. A  \emph{collection of
$\mathbb{K}$-relations over the schema~$X_1,\ldots,X_m$} is a sequence $R_1(X_1),\ldots,R_m(X_m)$ of~$\mathbb{K}$-relations,
where~$R_i(X_i)$ is a~$\mathbb{K}$-relation over~$X_i$, for $i = 1,\ldots,m$.

\paragraph{Homomorphisms, Subalgebras, Products, and Varieties}  
For later reference, we introduce some basic 
terminology from universal algebra for the particular case of monoids. 

If  $\mathbb{M}_1 = (M_1,+_1,0_1)$ and $\mathbb{M}_2 = (M_2,+_2,0_2)$
are monoids, then a \emph{homomorphism} from $\mathbb{M}_1$ to $\mathbb{M}_2$ is a map $h : M_1 \to M_2$ such
that $h(0_1) = 0_2$ and
\begin{equation*}
    h(a +_1 b) = h(a) +_2 h(b)
\end{equation*}
holds for all $a,b \in M_1$. The homomorphism
is \emph{surjective} if $h$ is surjective, i.e., if for all $b \in M_2$ there exists $a \in M_1$ such that
$h(a) = b$. If $h$ is a surjective homomorphism from $\mathbb{M}_1$ to $\mathbb{M}_2$ then we say
that $\mathbb{M}_2$ is a \emph{homomorphic image} of $\mathbb{M}_1$, and we write $h : \mathbb{M}_1 \onto \mathbb{M}_2$ to denote this fact.
An \emph{isomorphism} is a bijection $h : M_1 \to M_2$ such that both $h$ and its its inverse  $h^{-1}$ are homomorphisms.
We say that $\mathbb{M}_1$ is a \emph{subalgebra} of $\mathbb{M}_2$ if $M_1 \subseteq M_2$ with $0_1 = 0_2$
and $M_1$ is closed under $+_2$, that is, for all $a,b,c \in M_1$, if $a +_2 b = c$, then $c \in M_1$. If $I$ is a finite or infinite set of indices and 
$(\mathbb{M}_i : i \in I)$ is an indexed set of monoids,
then the \emph{product} monoid~$\prod_{i \in I} \mathbb{M}_i$
is defined as follows. The domain of~$\prod_{i \in I} \mathbb{M}_i$ is the product set $\prod_{i \in I} M_i$, where $M_i$ is the domain of $\mathbb{M}_i$, that is,
the elements  of the product monoid are
the maps $f$ with domain $I$ that map each index $i \in I$ to an element $f(i) \in M_i$; 
the operation $+$ of the product monoid is defined pointwise:
for two maps $f$ and $g$ in $\prod_{i \in I} M_i$, the sum $f+g$
is defined by the equation
\begin{equation}
    (f + g)(i) = f(i) +_i g(i) \label{eqn:pointwise}
\end{equation}
for all $i \in I$, where the addition operation $+_i$
on the right-hand side is over $\mathbb{M}_i$; finally, the neutral element $0$ of the product monoid is the map that maps $i \in I$ to $0_i$, where 
$0_i$ is the neutral element of $\mathbb{M}_i$. The special case of a product monoid
in which every factor $\mathbb{M}_i$ is the same monoid $\mathbb{M}$ is called an \emph{$I$-power
of $\mathbb{M}$} and is denoted by $\mathbb{M}^I$; furthermore, its domain is denoted by $M^I$. In
the special case in which the index set $I$ has the form $[k] = \{1,\ldots,k\}$ for some 
natural number $k$, we write $\mathbb{M}^k$ and
$M^k$, instead of $\mathbb{M}^{[k]}$ and $M^{[k]}$, respectively.

A \emph{variety of monoids} is a class of monoids that is closed 
under homomorphic images, subalgebras, and products. 
By Birkhoff's  HSP theorem \cite{birkhoff1935structure}, a class of monoids is a variety if and only if it is the class of all monoids that satisfy a set of identities (for a modern exposition of this classical result, see \cite{DBLP:books/daglib/0067494}).
For example, 
the class of commutative monoids is a variety.
In contrast, the class of positive commutative monoids is not a variety because it is not closed under homomorphic images. Indeed, the map
that sends each non-negative integer $n$ to its residue class mod~$2$ is
a surjective homomorphism from the positive commutative monoid $\mathbb{N} = (Z^{\geq 0},+,0)$
onto the structure $\mathbb{Z}/2\mathbb{Z} = (\{0,1\},\oplus,0)$, where $\oplus$ is 
addition mod~$2$. The latter is a commutative monoid but it 
is not positive because $1 \oplus 1 = 0$.

\section{Consistency over Positive Commutative Monoids} 
\label{sec:semiring-cons}
  
The following definitions are the direct generalizations of the standard notions of consistency
for collections of ordinary relations to collections of $\mathbb{K}$-relations, where $\mathbb{K}$ is
an arbitrary positive commutative monoid.
Recall that a \emph{schema} is a collection $X_1,\ldots,X_m$ of sets of attributes. 

\begin{definition}\label{defn:consistency}
Let $\mathbb{K}$ be a positive commutative monoid, let
$X_1,\ldots,X_m$ be a schema, let
$R_1(X_1),\ldots,R_m(X_m)$ be a collection
of~$\mathbb{K}$-relations over $X_1,\ldots,X_m$, 
and let~$k$ be a positive integer.
We say that the collection $R_1,\ldots,R_m$ is \emph{$k$-wise consistent} if
for all $q \in [k]$ and $i_1, \ldots, i_q  \in [m]$ 
there exists a~$\mathbb{K}$-relation~$W(X_{i_1} \cdots X_{i_q})$ such 
that~$W[X_i] = R_i$ holds for all $i \in [q]$. 
If~$k = 2$, then we
say that the collection~$R_1,\ldots,R_m$ is \emph{pairwise consistent}.
If~$k = m$,
then we say that the collection~$R_1,\ldots,R_m$ is \emph{globally consistent}. In all such cases we say
that~$W(X_{i_1}\cdots X_{i_q})$ \emph{witnesses} the consistency
of~$R_{i_1},\ldots,R_{i_q}$. 
\end{definition}

From Definition \ref{defn:consistency}, 
it follows that if a collection
of~$\mathbb{K}$-relations is~$(k+1)$-wise consistent,
then it is also~$k$-wise consistent.  In particular, if a collection
of~$\mathbb{K}$-relations is globally consistent, then it is also pairwise
consistent. Our goal in this paper is to investigate when the converse is true. In other words, 
we focus on the following question:
under what
conditions on the positive commutative monoid ~$\mathbb K$ and on the schema~$X_1,\ldots,X_m$ is it
the case that every collection of~$\mathbb{K}$-relations of
schema~$X_1,\ldots,X_m$ that is pairwise consistent is also globally
consistent?  Our investigation begins by identifying a very broad necessary condition.

\subsection{Acyclicity is Always Necessary}

To formulate  the necessary condition, we need to introduce some
terminology. A \emph{hypergraph} is a pair~$H = (V,E)$, where~$V$ is a
set of \emph{vertices} and~$E$ is a set of \emph{hyperedges}, each of
which is a non-empty subset of~$V$.  Every collection~$X_1,\ldots,X_m$
of sets of attributes can be identified with a hypergraph~$H=(V,E)$,
where~$V = X_1\cup \cdots \cup X_m$ and~$E
=\{X_1,\ldots,X_m\}$. Conversely, every hypergraph~$H = (V,E)$ gives
rise to a collection~$X_1,\ldots,X_m$ of sets of attributes,
where~$X_1,\dots,X_m$ are the hyperedges of~$H$. Thus, we can move
seamlessly between collections of sets of attributes and hypergraphs.

\paragraph{Acyclic Hypergraphs} The notion of an \emph{acyclic}
hypergraph generalizes the notion of an acyclic graph. Since we will
not work directly with the definition of an acyclic hypergraph, we
refer the reader to~\cite{BeeriFaginMaierYannakakis1983} for the
precise definition. Instead, we focus on other notions that are
equivalent to hypergraph acyclicity and will be of interest to us in
the sequel.

\paragraph{Conformal and Chordal Hypergraphs} The \emph{primal} graph
of a hypergraph~$H = (V,E)$ is the undirected graph that has~$V$ as
its set of vertices and has an edge between any two distinct vertices
that appear together in at least one hyperedge of~$H$. A
hypergraph~$H$ is \emph{conformal} if the set of vertices of every
clique (i.e., complete subgraph) of the primal graph of~$H$ is
contained in some hyperedge of~$H$.  A hypergraph~$H$ is
\emph{chordal} if its primal graph is chordal, that is, if every cycle
of length at least four of the primal graph of~$H$ has a chord.  To
illustrate these concepts, let~$V_n=\{A_1,\ldots,A_n\}$ be a set
of~$n$ vertices and consider the hypergraphs
\begin{eqnarray}
P_n &  = &  (V_n, \{ {A_1,A_2}\}, \ldots, \{A_{n-1},A_n\})  \label{path-hyper}\\
C_n & = & (V_n, \{A_1,A_2\}, \ldots, \{A_{n-1},A_n\}, \{A_n,A_1\}) \label{cycle-hyper}\\
H_n & = & (V_n, \{ V_n\setminus \{A_i\}: 1\leq i\leq n \}) \label{clique-hyper}
\end{eqnarray}
If~$n\geq 2$, then the hypergraph~$P_n$ is both conformal and chordal.
The hypergraph~$C_3 = H_3$ is chordal, but not conformal. For
every~$n\geq 4$, the hypergraph~$C_n$ is conformal, but not chordal,
while the hypergraph~$H_n$ is chordal, but not conformal.

\paragraph{Running Intersection Property} We say that a hypergraph~$H$
has the \emph{running intersection property} if there is a
listing~$X_1,\ldots,X_m$ of all hyperedges of~$H$ such that for
every~$i \in [m]$ with~$i \geq 2$, there exists a~$j \in \{1,\ldots,i-1\}$ such
that~$X_i \cap (X_1 \cup \cdots \cup X_{i-1}) \subseteq X_j$.

\paragraph{Join Tree} A \emph{join tree} for a hypergraph~$H$ is an
undirected tree~$T$ with the set~$E$ of the hyperedges of~$H$ as its
vertices and such that for every vertex~$v$ of~$H$, the set of
vertices of~$T$ containing~$v$ forms a subtree of~$T$, i.e., if~$v$
belongs to two vertices~$X_i$ and~$X_j$ of~$T$, then~$v$ belongs to
every vertex of~$T$ in the unique simple path from~$X_i$ to~$X_j$ in~$T$.

\paragraph{Local-to-Global Consistency Property for Relations}
Let~$H$ be a hypergraph and let~$X_1,\dots,X_m$ be a listing of all
hyperedges of~$H$. We say that~$H$ has the \emph{\ltgc~for relations}
if every collection~$R_1(X_1),\ldots,R_m(X_m)$ of relations of
schema~$X_1,\ldots,X_m$ that is pairwise consistent is also globally
consistent.

We are now ready to state the main result in Beeri et al.~\cite{BeeriFaginMaierYannakakis1983}.

\begin{theorem} [Theorem 3.4 in \cite{BeeriFaginMaierYannakakis1983}] \label{thm:BFMY}
Let~$H$ be a hypergraph. The following statements are equivalent:
\begin{enumerate} \itemsep=0pt
\item[(a)] $H$ is an acyclic hypergraph.
\item[(b)] $H$ is a conformal and chordal hypergraph.
\item[(c)] $H$ has the running intersection property.
\item[(d)] $H$ has a join tree.
\item[(e)] $H$ has the \ltgc~for relations.
\end{enumerate}
\end{theorem}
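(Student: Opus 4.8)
The plan is to separate the five conditions into the four purely combinatorial ones, (a)--(d), which speak only about the hypergraph $H$, and the consistency condition (e). First I would prove the equivalence of (a)--(d) among themselves, and then tie (e) in by establishing (d) $\Rightarrow$ (e) together with the contrapositive of (e) $\Rightarrow$ (a): if $H$ is not acyclic, then some pairwise consistent collection of relations over $H$ fails to be globally consistent.

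For the combinatorial core I would run the cycle (a) $\Rightarrow$ (b) $\Rightarrow$ (c) $\Rightarrow$ (d) $\Rightarrow$ (a). The step (c) $\Rightarrow$ (d) is the most transparent: given a running-intersection listing $X_1, \ldots, X_m$, I attach each $X_i$ with $i \ge 2$ to a hyperedge $X_j$ ($j < i$) witnessing $X_i \cap (X_1 \cup \cdots \cup X_{i-1}) \subseteq X_j$; the running intersection property is exactly what makes the resulting tree a join tree, since it forces the set of hyperedges containing any fixed vertex to be connected. Conversely, (d) $\Rightarrow$ (c) follows by rooting a join tree and listing its vertices in a pre-order, so that each node's intersection with the earlier nodes is funneled through its parent. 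The equivalence (b) $\Leftrightarrow$ (c) I would obtain from chordality (which yields a perfect elimination ordering of the primal graph) and conformality (which lets the closed neighborhood clique of each eliminated vertex be absorbed into a single hyperedge), assembling these in reverse into a running-intersection listing; for (a) $\Leftrightarrow$ (b) I would invoke the classical characterization of hypergraph acyclicity via the GYO reduction, as established in \cite{BeeriFaginMaierYannakakis1983}, since the paper does not work with the bare definition of acyclicity.

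For (d) $\Rightarrow$ (e) I would argue by induction along a join tree. Rooting the tree and processing it bottom-up, I maintain at each node $X_i$ a relation $W$, over the union of the attributes in the subtree at $X_i$, with the invariant that $W$ projects exactly onto each $R_j$ in that subtree. To extend the invariant past an internal node I take the natural join of $W$ with $R_i$: the join-tree property guarantees that the attributes shared between the subtree and $R_i$ lie in the intersection $X_i \cap X_c$ of two adjacent hyperedge labels, so pairwise consistency of the relevant pair $R_i,R_c$ forces their marginals on this separator to agree and the natural join to lose no tuple of either side, preserving the invariant. Propagating it to the root yields a global witness. The point that makes this clean is special to set semantics: for ordinary relations the natural join of two consistent relations is itself a witness of their consistency, so no multiplicity bookkeeping is needed and the induction closes.

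The remaining direction, (e) $\Rightarrow$ (a), is the one I expect to be the main obstacle, and I would prove it contrapositively. If $H$ is not acyclic, then by (a) $\Leftrightarrow$ (b) it is either not chordal or not conformal, and each case furnishes a concrete obstruction: a chordless cycle $A_1, \ldots, A_\ell$ with $\ell \ge 4$ in the primal graph, or a clique that no hyperedge contains. On the vertices of the obstruction I would place the two-element domain $\{0,1\}$ and impose parity (XOR) constraints on the relevant hyperedges, padding all other relations so as to be trivially consistent. A Tseitin-style count then shows that every pair of relations agrees on its shared attributes, so the collection is pairwise consistent, whereas the parities accumulated around the obstruction are forced to be jointly unsatisfiable, so no universal relation exists. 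The crux, and the place where care is genuinely needed, is to give one construction that covers both the non-chordal and the non-conformal obstructions and to verify that the failure of global consistency is not already detected at the pairwise level; this is where the combinatorics of \cite{BeeriFaginMaierYannakakis1983}, in the spirit of Tseitin's tautologies \cite{Tseitin1968}, does the real work.
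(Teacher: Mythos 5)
First, a point of comparison that matters: the paper does not prove this theorem at all — it is quoted verbatim as Theorem~3.4 of \cite{BeeriFaginMaierYannakakis1983} and used as a black box, so there is no internal proof to match your attempt against. Judged on its own terms, your architecture is a faithful reconstruction of the classical argument, and it also parallels the machinery the paper builds for its generalizations: your join-tree induction for (d)~$\Rightarrow$~(e) is exactly the Boolean specialization of the implication (2)~$\Longrightarrow$~(3) in Theorem~\ref{thm:allequivalent} (the separator argument via $W[\mathrm{sep}] = R_c[\mathrm{sep}] = R_i[\mathrm{sep}]$ is sound, and for sets the natural join of inner consistent relations is indeed a witness); your Tseitin-style obstruction for (e)~$\Rightarrow$~(a) is the $d=2$ shadow of Lemma~\ref{lem:newtseitin}; and your ``padding'' of the relations outside the obstruction is Lemma~\ref{lem:cons-preservv} read in the lifting direction. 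Deferring (a)~$\Leftrightarrow$~(b) to the literature is legitimate, since the paper itself never works with the bare definition of acyclicity.

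There is, however, one concrete step that fails as stated: your claim that the parity constraints around the obstruction are ``forced to be jointly unsatisfiable.'' This is true for chordless cycles (summing the constraints gives $2\sum_C t(C) \equiv 1 \pmod 2$) and for $H_n$ with $n$ odd, but it is \emph{false} for the non-conformal obstruction $H_n$ with $n$ even. Take $H_4$, with hyperedges $X_i = V \setminus \{A_i\}$, constraints $\sum_{C \in X_i} t(C) \equiv 0 \pmod 2$ for $i=1,2,3$ and $\equiv 1$ for $i=4$: summing gives $3\sum_i t(A_i) \equiv 1$, i.e., $\sum_i t(A_i) \equiv 1$, and the tuple $t = (1,1,1,0)$ satisfies all four congruences. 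So the refutation of global consistency cannot be ``no common tuple.'' It can be repaired: since any witness $W$ must satisfy $W[X_i] = R_i$ \emph{exactly}, its support lies inside the solution set of the system, which here is a single tuple, and a single tuple cannot project onto the full parity relation $R_1$ of size $2^{n-2} \geq 4$ — global consistency still fails, but for a different reason than the one you give. Alternatively, replace mod~$2$ by congruences mod~$d$ over the domain $\{0,\ldots,d-1\}$ with $d$ the regularity degree ($d = n-1$ for $H_n$), as the paper does in Lemma~\ref{lem:newtseitin}: then summing yields $d\sum_{C} t(C) \equiv 1 \pmod d$, which is genuinely absurd for every $d \geq 2$ and handles both obstructions uniformly. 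You correctly flagged this case analysis as the crux, but the mechanism you propose for it does not go through without one of these two fixes.
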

As an illustration, if~$n\geq 2$, the hypergraph~$P_n$ is acyclic,
hence it has the \ltgc~for relations. In contrast, if~$n\geq 3$, the
hypergraphs~$C_n$ and~$H_n$ are cyclic, hence they do not have the
\ltgc~for relations.

We now generalize the notion of local-to-global consistency from relations to $\mathbb K$-relations.

\begin{definition} \label{defn:local-to-global}
Let $\mathbb K$~be a positive commutative monoid, and let~$X_1,\dots,X_m$ be a listing of all 
the hyperedges of a hypergraph~$H$. We say that~$H$ has the \emph{\ltgc~for~$\mathbb K$-relations} if every
collection~$R_1(X_1),\ldots,R_m(X_m)$ of~$\mathbb K$-relations that is
pairwise consistent is also globally consistent.
\end{definition}

In what follows, we will show that the implication
(e)~$\Rightarrow$ (a) in  Theorem \ref{thm:BFMY} holds
more generally for~$\mathbb K$-relations, where~$\mathbb K$ is an arbitrary positive commutative monoid. To
prove this result, we will need to find a more general construction than the
one devised in \cite{BeeriFaginMaierYannakakis1983} since the
construction given there uses some special properties of ordinary (set-theoretic)
relations that are not always shared by~$\mathbb K$-relations when $\mathbb K$ is an arbitrary positive commutative monoid. We are now ready to state the main result of this section.

\begin{theorem} \label{thm:necessary} Let~$\mathbb K$ be a positive commutative monoid
  and let~$H$ be a hypergraph.  If~$H$ has the
  \ltgc~for~$\mathbb K$-relations, then~$H$ is acyclic.
\end{theorem}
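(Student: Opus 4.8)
The plan is to prove the contrapositive: assuming $H$ is \emph{not} acyclic, I will construct a collection of $\mathbb{K}$-relations over $H$ that is pairwise consistent but not globally consistent, thereby showing that $H$ fails the local-to-global consistency property. The starting point is Theorem~\ref{thm:BFMY}, which tells us that if $H$ is not acyclic, then $H$ is either not conformal or not chordal. These two failure modes should be handled separately, since each gives a concrete combinatorial obstruction: a non-conformal hypergraph contains a clique in its primal graph whose vertex set is not covered by any single hyperedge, while a non-chordal hypergraph contains a chordless cycle of length at least four in its primal graph. In both cases the obstruction localizes the source of inconsistency to a bounded substructure, which is where the relations witnessing the failure will live.

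The core idea, following the Tseitin-style construction alluded to in the introduction, is the following. Since $\mathbb{K}$ has at least two elements and is positive, fix some nonzero element $a \in K$. The construction assigns $\mathbb{K}$-relations along a cyclic structure so that, when one attempts to build a global witness $W$, the marginal constraints propagate around the cycle and force a parity-like contradiction that cannot be satisfied, even though every pair (indeed every proper subcollection along the cycle) can be consistently realized. Concretely, I would attach Boolean-valued attributes around the cycle and define each $R_i$ on an edge so that its support encodes an XOR constraint; the product of these constraints around the cycle is unsatisfiable, which is exactly what blocks global consistency, while any pairwise (or local) check succeeds because one degree of freedom remains free. The positivity of $\mathbb{K}$ is essential here: by Lemma~\ref{lem:easyfacts1}(1), supports behave well under marginalization ($R'[Y] = R[Y]'$), so the support-level (ordinary-relation) argument that drives the contradiction transfers faithfully to the $\mathbb{K}$-relation level, and a nonzero marginal can never arise from summing zeros.

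The main steps, in order, are: (i) invoke Theorem~\ref{thm:BFMY} to reduce non-acyclicity to non-conformality or non-chordality and extract the corresponding witness (uncovered clique, or chordless cycle $C_n$ with $n \geq 4$, noting also the small non-conformal cases such as $H_3$); (ii) on the extracted cyclic/clique substructure, define the $\mathbb{K}$-relations using the chosen nonzero element $a$ and the XOR-parity encoding, assigning appropriate multiplicities/values away from the cycle so the relations are total $\mathbb{K}$-relations over the full hyperedges; (iii) verify pairwise consistency by exhibiting, for each pair $X_i, X_j$, an explicit witness $W$, typically a suitably normalized join or a relation supported on the satisfying assignments of the two local constraints, and check that its marginals recover $R_i$ and $R_j$; and (iv) prove that no global witness $W$ over $X_{i_1}\cdots X_{i_m}$ can exist, by arguing that any such $W$ would induce, at the support level, an ordinary relation satisfying all the XOR constraints simultaneously, contradicting the unsatisfiability of the parity system around the cycle.

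The hard part will be step~(iii) combined with the generality of $\mathbb{K}$: in the Boolean and bag cases one has a canonical witness (the join, or a carefully reweighted join), but for an arbitrary positive commutative monoid one cannot rely on a multiplication operation or on cancellation, so producing genuine $\mathbb{K}$-relation witnesses whose marginals match exactly requires care. The delicate point is that a witness must reproduce \emph{both} the support and the precise $\mathbb{K}$-values of each $R_i$ under marginalization; since marginals are defined purely additively via \eqref{eqn:marginal}, I expect to define the witnesses so that each target tuple's value is "placed" on a single tuple of the witness's support (avoiding any need to split or combine values across the monoid), which sidesteps the absence of multiplication and cancellation. Getting this placement to be simultaneously consistent for both relations in a pair, for every pair, while still failing globally, is the crux; the positivity assumption and Lemma~\ref{lem:easyfacts1} are the tools that keep the support bookkeeping clean enough for this to go through uniformly across all positive commutative monoids $\mathbb{K}$.
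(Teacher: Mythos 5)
Your overall architecture matches the paper's: argue the contrapositive, use Theorem~\ref{thm:BFMY} to extract a chordless cycle or an uncovered clique, build Tseitin-style congruence-constrained relations there, and transfer the support-level obstruction to $\mathbb{K}$-relations via positivity and Lemma~\ref{lem:easyfacts1}. However, there is a genuine gap in your step~(iv): the XOR (mod~$2$) encoding only yields an unsatisfiable parity system when the extracted substructure is $2$-regular. That is fine for the non-chordal branch, where the obstruction is $C_n$, but the non-conformal branch produces $H_n$ (every hyperedge omits exactly one vertex), which is $(n-1)$-uniform and $(n-1)$-regular, and for \emph{even} $n$ the mod-$2$ system $\sum_{C \in X_i} t(C) \equiv a_i \pmod 2$ with $\sum_i a_i$ odd is in fact satisfiable: writing $S = \sum_C t(C)$, each constraint forces $t(A_i) \equiv S - a_i$, and the consistency condition $(n-1)S \equiv \sum_i a_i \pmod 2$ can be met by choosing $S$ appropriately when $n-1$ is odd. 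So there is no support-level contradiction to transfer, and your global-inconsistency argument fails on exactly half of the clique cases. The paper's Lemma~\ref{lem:newtseitin} avoids this by running the construction modulo $d$, where $d$ is the \emph{regularity} of the extracted $k$-uniform $d$-regular schema: attributes take values in $\{0,\ldots,d-1\}$, every hyperedge imposes sum $\equiv 0 \bmod d$ except one imposing $\equiv 1$, and summing all constraints gives $0 \equiv 1 \bmod d$ by $d$-regularity, uniformly in $d \geq 2$. Your worry about the crux of step~(iii) is legitimate but resolvable: your ``place each value on a single joint tuple'' idea does work (the compatibility graph between the two supports is biregular, hence admits a perfect matching), while the paper instead takes a witness that is constant on the joint solutions with value $d^{|Z|+1}$ copies of a fixed nonzero $c$, so each marginal comes out to $d^{k-|Z|-1}\cdot d^{|Z|+1} = d^k$ copies of $c$ by pure counting --- both routes sidestep multiplication and cancellation. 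The fatal issue is the mod-$2$ encoding, not the witness construction.

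A secondary, fixable gap is the lifting step~(ii). You gesture at ``assigning appropriate values away from the cycle,'' which corresponds to the paper's vertex-deletion lift (padding with a default value $u_0$), but the reduction $R(H[W])$ also deletes \emph{covered} hyperedges, and you need the lifted family on $H$ to be pairwise consistent \emph{if and only if} the small family is, and likewise to transfer the failure of global consistency back down. This two-way preservation under both safe-deletion operations is exactly the content of Lemma~\ref{lem:cons-preservv} and Corollary~\ref{lem:preserv1here} (for a covered edge $X \subseteq X_j$ one lifts by $S_j[X]$ and verifies the equivalence using the marginal calculus of Lemma~\ref{lem:easyfacts1}); your proposal asserts the localization but does not supply this machinery.
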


Before embarking on the proof of Theorem~\ref{thm:necessary}, we need
some additional notions about hypergraphs. The hypergraph~$H$ is called~\emph{$k$-uniform}
  if every hyperedge of~$H$ has exactly~$k$ vertices. It is called
  \emph{$d$-regular} if any vertex of~$H$ appears in exactly~$d$
  hyperedges of~$H$.  We show that hypergraphs that have such
  properties with $k \geq 1$ and $d \geq 2$ do not have the \ltgc~for
  any positive commutative monoid. After this is proved, we will show how
  to reduce the general case of an arbitrary acyclic hypergraph $H$ to
  the $k$-uniform and $d$-regular case. If a schema
  $X_1,\ldots,X_m$ is the set of hyperedges of a $k$-uniform or $d$-regular hypergraph,
  then we say that the schema $X_1,\ldots,X_m$ is $k$-uniform or $d$-regular, respectively.

\begin{lemma} \label{lem:newtseitin}
Let~$\mathbb{K}$ be a positive commutative monoid and
let~$X_1,\ldots,X_m$ be a schema that is~$k$-uniform and~$d$-regular
with~$k \geq 1$ and~$d \geq 2$. Then, there exists a collection
of~$\mathbb{K}$-relations over~$X_1,\ldots,X_m$ that is pairwise
consistent but not globally consistent.
\end{lemma}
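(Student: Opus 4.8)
The plan is to construct the pairwise-consistent-but-not-globally-consistent collection by placing a single ``forbidden parity'' constraint on the hypergraph, in the spirit of Tseitin's tautologies as done for bags in \cite{DBLP:conf/pods/AtseriasK21}. Since $\mathbb{K}$ need not have any invertible elements, I cannot literally talk about parities or XOR; instead I would fix two distinct nonzero-related values in $K$ and encode a global obstruction combinatorially. Concretely, since $\mathbb{K}$ has at least two elements, pick $a \in K$ with $a \neq 0$. I would let each attribute have domain $\{0,1\}$ and, for each hyperedge $X_i$, define the $\mathbb{K}$-relation $R_i$ to be supported exactly on those $X_i$-tuples whose number of $1$-coordinates has a prescribed parity, assigning each such tuple the value $a$ (or a carefully chosen value in $K$). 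The global obstruction is that, because the hypergraph is $k$-uniform and $d$-regular with $d \geq 2$, summing the parity constraints around the structure forces a contradiction: a globally consistent witness $W$ would induce a single $0/1$-assignment to the vertices whose local parities must all be satisfied simultaneously, which is impossible by a counting/parity argument.

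\textbf{Key steps, in order.} First I would make precise the parity labels: assign to each hyperedge $X_i$ a target bit $b_i \in \{0,1\}$ such that $\sum_i b_i \not\equiv \sum_{v} (\text{deg of } v) \pmod 2$ is violated in the way Tseitin needs, using $d$-regularity and $k$-uniformity to control the total degree count; the standard trick is to choose the $b_i$ so that no global $0/1$-assignment to $V$ simultaneously realizes all the prescribed local parities, which is possible precisely because of a parity mismatch. Second, I would define $R_i(t) = a$ when $t$ has the prescribed parity relative to $b_i$ and $R_i(t) = 0$ otherwise. Third, to establish \emph{pairwise} consistency, for any two hyperedges $X_i, X_j$ I must exhibit a witness $W(X_iX_j)$ with $W[X_i]=R_i$ and $W[X_j]=R_j$; here I would use that two local parity constraints on overlapping variable sets are always jointly satisfiable (no parity obstruction arises from just two edges), and I would define $W$ so that its marginals recover exactly the value $a$ on each admissible tuple. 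This is where positivity of $\mathbb{K}$ is essential: positivity guarantees that the support of a marginal equals the marginal of the support (Lemma~\ref{lem:easyfacts1}), so controlling supports suffices to control which marginal entries are nonzero, and I can arrange the nonzero entries to carry value exactly $a$. Fourth, for the \emph{failure} of global consistency, I would argue that any witness $W$ over $X_1\cdots X_m = V$ would have its support consist of $V$-tuples simultaneously satisfying all local parities, but no such tuple exists, forcing $\supp(W)$ empty and hence all marginals $R_i$ empty, contradicting $R_i(t)=a\neq 0$.

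\textbf{The main obstacle} I expect is the pairwise-consistency witness construction in the monoid setting, since I cannot simply take the join and renormalize as one might with bags or reals. The difficulty is that I need the marginal values to come out to exactly $a$ on every admissible tuple, not merely to be nonzero; with an arbitrary positive monoid I have no division or cancellation to fix up over- or under-counting when several $X_iX_j$-tuples project to the same $X_i$-tuple. The resolution I would aim for is to choose the domains and the constraint so that the projection map from admissible $X_iX_j$-tuples to admissible $X_i$-tuples is a \emph{bijection} on supports after collapsing the coordinates in $X_j \setminus X_i$ in a parity-preserving way — for instance by having the ``extra'' coordinates in $X_j\setminus X_i$ be pinned to a unique value given the $X_i$-part and the parity target, so that each marginal sum~\eqref{eqn:marginal} has exactly one nonzero summand equal to $a$. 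Getting this pinning to work uniformly for every pair $(i,j)$, and simultaneously ensuring the global parity obstruction survives, is the delicate combinatorial core; $d \geq 2$ and $k \geq 1$ are exactly the regularity hypotheses that let me both pin pairwise witnesses and force the global contradiction.
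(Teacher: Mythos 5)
There is a genuine gap, and it sits exactly at the ``first key step'' of your plan: the mod-$2$ parity encoding cannot always be made globally unsatisfiable. Your counting argument (``summing the parity constraints around the structure forces a contradiction'') only works when every vertex appears an \emph{even} number of times in the sum of all constraints. Here each vertex appears in exactly $d$ hyperedges, so summing all equations over $\mathrm{GF}(2)$ gives $d\sum_{C \in V} t(C) \equiv \sum_i b_i \pmod 2$; when $d$ is even this is $0 \equiv \sum_i b_i$ and choosing $\sum_i b_i$ odd gives the obstruction, but when $d$ is odd it reduces to the satisfiable condition $\sum_C t(C) \equiv \sum_i b_i \pmod 2$ and no contradiction is forced. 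Worse, for some of the very hypergraphs the lemma must handle there is \emph{no} choice of target bits $b_i$ that works: for $H_4$ (which is $3$-uniform and $3$-regular, one of the minimal non-conformal cases the paper needs), the $\mathrm{GF}(2)$ system has coefficient matrix $J - I$, which is invertible over $\mathrm{GF}(2)$, so \emph{every} target vector $(b_1,\ldots,b_4)$ is realized by some $0/1$-assignment. Your hedge that the $b_i$ ``can be chosen so that no global assignment realizes all local parities'' is therefore false in general. The paper's proof avoids this precisely by working modulo $d$ rather than modulo $2$: attributes take values in $\{0,\ldots,d-1\}$, hyperedges $i \neq m$ demand $\sum_{C \in X_i} t(C) \equiv 0 \pmod d$ and hyperedge $m$ demands $\equiv 1 \pmod d$; then $d$-regularity makes the sum of all left-hand sides identically $0 \pmod d$ while the right-hand sides sum to $1$, a contradiction for every $d \geq 2$, odd or even.

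Your treatment of the other delicate point --- pairwise witnesses without division --- is, by contrast, essentially sound and in fact differs from the paper in an interesting way. The paper does \emph{not} pin extensions bijectively: it supports the witness $T$ on \emph{all} tuples satisfying both congruences, notes that each admissible $X_i$-tuple extends in exactly $d^{k-|Z|-1}$ ways (where $Z = X_i \cap X_j$), and compensates the multiplicity by annotating each tuple with $b = d^{|Z|+1}c$, so that $d^{k-|Z|-1}$ copies of $b$ sum to $d^k c = a$. Your bijection idea also works once transplanted to the mod-$d$ setting: for each $Z$-tuple $z$, the admissible extensions of $z$ on each side number exactly $d^{k-|Z|-1}$, so a per-block matching exists and yields a witness with a single nonzero summand $a$ in each marginal sum, with the bonus of a sparse support. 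But as written, with mod-$2$ constraints, your construction cannot prove the lemma for odd $d$, so the global-obstruction step must be replaced by the mod-$d$ device before the rest of your argument can be salvaged.
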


\begin{proof}
    Let $c$ be an element of the universe $K$ of $\mathbb{K}=(K,+,0)$ 
    such that $c\not =0$ (recall that we have made the blanket assumption that the universes of 
    the positive commutative monoids considered have at least two elements). 
  Let~$a := c + \cdots + c$ with~$c$ appearing~$d^k$ times in the sum.
  Since $c\not =0$, the positivity of
  $\mathbb K$ implies that $a$~is a non-zero element of $K$; i.e..,~$a \not= 0$. 
  The $\mathbb{K}$-relations that we build will have all its attributes
  valued in the set $\{0,\ldots,d-1\}$. Therefore, if $Z$ is a set of
  attributes, then a $Z$-tuple $t$ is a map 
  \begin{equation}
      t : Z \to \{0,\ldots,d-1\}. \label{eqn:tuple}
  \end{equation}
  For each~$i \in [m]$ with~$i \not= m$, let~$R_i(X_i)$ be 
  defined by $R_i(t) = a$ for every $X_i$-tuple $t$ 
  whose
  total sum~$\sum_{C \in X_i} t(C)$ as integers is congruent to~$0$
  mod~$d$, and $R(t) = 0$ for every other $X_i$-tuple $t$. For~$i = m$, let~$R_m(X_m)$ be
  defined by $R_m(t) = a$ for every $X_i$-tuple $t$ whose
  total sum~$\sum_{C \in X_m} t(C)$ as integers is congruent to~$1$ mod~$d$,
  and $R_m(t) = 0$  for every other~$X_m$-tuple~$t$.

  To show that the collection~$R_1,\ldots,R_m$ of~$\mathbb K$-relations is pairwise
  consistent, fix any two indices~$i,j \in [m]$ and
  let~$a_i,a_j \in \{0,1\}$ be such that the supports of
  the~$\mathbb K$-relations~$R_i$ and~$R_j$ are, respectively, the set
  of~$X_i$-tuples~$t$ that satisfy the congruence
  equation~$\sum_{C \in X_i} t(C) \equiv a_i \text{ mod } d$, and the
  set of~$X_j$-tuples~$t$ that satisfy the congruence
  equation~$\sum_{C \in X_j} t(C) \equiv a_j \text{ mod } d$.
  Let~$X = X_i \cup X_j$ and~$Z = X_i \cap X_j$, and
  let~$b := c + \cdots + c$ with~$c$ appearing~$d^{|Z|+1}$ times in
  the sum.  Again,~$b$ is an element of~$\mathbb K$, and~$b \not= 0$ because~$\mathbb K$
  is a positive commutative monoid. Let~$T(X)$ be the~$\mathbb K$-relation defined 
  by $T(t) = b$ for every $X$-tuple $t$ that satisfies the system
  of two congruence
  equations
  \begin{align}
      & \textstyle{\sum_{C \in X_i} t(C) \equiv a_i \text{ mod } d}, \label{eqn:sys1} \\
      & \textstyle{\sum_{C \in X_j} t(C) \equiv a_j \text{ mod } d}, \label{eqn:sys2}
  \end{align} 
  and $T(t) = 0$
  for every other $X$-tuple $t$. We claim that~$T$ witnesses the consistency
  of~$R_i$ and~$R_j$.  Indeed,
  each~$X_i$-tuple~$u$ that
  satisfies the congruence
  equation~$\sum_{C \in X_i} u(C) \equiv a_i \text{ mod }d$ extends in
  exactly~$d^{k-|Z|-1}$ ways to an $X$-tuple $t$ that is a solution
  to the system of two
  congruence
  equations~\eqref{eqn:sys1}--\eqref{eqn:sys2}. Symmetrically,
  each~$X_j$-tuple~$v$ that
  satisfies the congruence
  equation~$\sum_{C \in X_j} v(C) \equiv a_j \text{ mod } d$ extends
  in exactly~$d^{k-|Z|-1}$ ways to an $X$-tuple $t$ that is a solution
  to the same system of
  two congruence equations. The consequence of this is that for
  each~$u \in R'_i$ and each~$v \in R'_j$ we
  have~$T[X_i](u) = T[X_j](v) = b + \cdots + b$ with~$b$ appearing~$d^{k-|Z|-1}$
  times in the sum. Recalling now that~$b = c + \cdots + c$ with~$c$
  appearing~$d^{|Z|+1}$ times in the sum we see
  that~$T[X_i](u) = T[X_j](v) = c + \cdots + c$ with~$c$
  appearing~$d^{k-|Z|-1} d^{|Z|+1} = d^k$ times in the sum, which
  equals~$a = R_i(u) = R_j(v)$.

  To argue that the relations~$R_1,\ldots,R_m$ are not globally
  consistent, we proceed by contradiction. If~$R$ were a~$\mathbb K$-relation
  that witnesses their consistency, then  its
  support would contain a tuple~$t$ such that the projections~$t[X_i]$
  belong to the supports~$R'_i$ of the~$R_i$, for each~$i \in [m]$. In
  turn this means that
  \begin{align}
    & \textstyle{\sum_{C \in X_i} t(C) \;\equiv\; 0 \text{ mod } d}, \;\;\;\;\;\
 \text{ for $i \not = m$ } \label{eqn:restluhere} \\
    & \textstyle{\sum_{C \in X_i} t(C) \;\equiv\; 1 \text{ mod } d}, \;\;\;\;\;\
 \text{ for $i = m$. } \label{eqn:a0luhere}
  \end{align}
  Since by~$d$-regularity each~$C \in V$ belongs to exactly~$d$ 
  sets~$X_i$, adding up all the equations in~\eqref{eqn:restluhere}
  and~\eqref{eqn:a0luhere} gives
\begin{equation}
    \textstyle{\sum_{C \in V} dt(C) \;\equiv\; 1 \text{ mod } d},
  \end{equation}
  which is absurd since the left-hand side is congruent to~$0$
  mod~$d$, the right-hand side is congruent to~$1$ mod~$d$,
  and~$d \geq 2$ by assumption. This completes the proof of
  Theorem~\ref{thm:necessary}.
\end{proof}

Building towards the proof of Theorem~\ref{thm:necessary}, 
in what follows we show how to reduce the general case of an 
arbitrary acyclic schema to a special case
of Lemma~\ref{lem:newtseitin}. We need some more terminology 
about hypergraphs, and two more lemmas.

Let~$H = (V,E)$ be a hypergraph. The \emph{reduction} of~$H$ is the
hypergraph~$R(H)$~whose set of vertices is~$V$ and whose hyperedges
are those hyperedges~$X \in E$ that are not included in any other
hyperedge of~$H$. A hypergraph~$H$ is \emph{reduced} if~$H=R(H)$.
If~$W \subseteq V$, then the \emph{hypergraph induced by~$W$ on~$H$}
is the hypergraph~$H[W]$~whose set of vertices is~$W$ and whose
hyperedges are the non-empty subsets of the form~$X \cap W$,
where~$X \in E$ is a hyperedge of~$H$; in
symbols,
\begin{equation*}
    H[W] = (W, \{X \cap W: X \in E\}\setminus \{\emptyset\}).
\end{equation*}
For a vertex~$u \in V$, we
write~$H\setminus u$ for the hypergraph induced by~$V\setminus\{u\}$
on~$H$. For an edge~$e \in E$, we write~$H\setminus e$ for the
hypergraph with~$V$ as the set of its vertices and
with~$E \setminus \{e\}$ as the set of its edges. 
We say that another hypergraph~$H'$ is obtained from~$H$ by a
\emph{vertex-deletion} if~$H' = H\setminus u$ for some~$u \in V$. We
say that~$H'$ is obtained from~$H$ by a \emph{covered-edge-deletion}
if~~$H' = H\setminus e$ for some~$e \in E$ such that~$e \subseteq f$
for some~$f \in E\setminus \{e\}$. In either case, we say that~$H'$ is
obtained from~$H$ by a \emph{safe-deletion operation}. We say that a
sequence of safe-deletion operations \emph{transforms~$H$ to~$H'$}
if~$H'$ can be obtained from~$H$ by starting with~$H$ and applying the
operations in order.

Note that if~$W$~is a subset of~$V$,
then the hypergraph~$R(H[W])$~is obtained from~$H$ by a sequence of
safe-deletion operations. Indeed, we can first obtain the
hypergraph~$H[W]$ from~$H$ by a sequence of vertex-deletions in which
the vertices of the set of~$V\setminus W$ are removed one by one;
after this, we can obtain the hypergraph~$R(H[W])$ from~$H[W]$ by a
sequence of covered-edge deletions.

\begin{lemma} \label{lem:characconf} For every hypergraph~$H=(V,E)$
  the following statements hold:
\begin{enumerate} \itemsep=0pt
\item $H$ is not chordal if and only if there
  exists~$W \subseteq V$~with $|W|\geq 4$ 
  and~$R(H[W]) \cong C_{|W|}$.
\item $H$ is not conformal if and only if there
  exists~$W \subseteq V$~with $|W|\geq 3$ 
  and~$R(H[W]) \cong H_{|W|}$.
\end{enumerate}
Moreover, there is a polynomial-time algorithm that, given a
hypergraph~$H$ that is not chordal or not conformal, finds both a
set~$W$ as stated in (1) or (2) and a sequence of safe-deletion
operations that transforms~$H$ to~$R(H[W])$.
\end{lemma}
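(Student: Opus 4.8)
The plan is to reduce both biconditionals and the algorithm to a single structural observation about primal graphs. Write $G$ for the primal graph of $H$. First I would establish that for every $W \subseteq V$, the primal graph of $H[W]$ and the primal graph of $R(H[W])$ both coincide with the induced subgraph $G[W]$: the first equality holds because two vertices $u,v \in W$ share a hyperedge $X \cap W$ of $H[W]$ exactly when they share a hyperedge $X$ of $H$, and the second holds because deleting a non-maximal hyperedge destroys no adjacency, since any pair it contains also lies in the maximal hyperedge covering it. A companion observation is that each hyperedge $X \cap W$ of $H[W]$ is a clique of $G[W]$, its vertices pairwise occurring in the single hyperedge $X$.

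For the two ``if'' directions I would argue through these facts, using that a hypergraph isomorphism preserves primal graphs. If $R(H[W]) \cong C_{|W|}$ with $|W| \geq 4$, then $G[W]$ is a chordless $|W|$-cycle (the primal graph of $C_{|W|}$); being induced in $G$, it is a hole, so $H$ is not chordal. If $R(H[W]) \cong H_{|W|}$ with $|W| \geq 3$, then $G[W]$ is $K_{|W|}$, so $W$ is a clique of $G$; moreover $W$ lies in no hyperedge of $H$, for otherwise $X \cap W = W$ would be a maximal hyperedge of $R(H[W])$ of size $|W|$, contradicting that every hyperedge of $H_{|W|}$ has size $|W|-1$. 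Hence $H$ is not conformal.

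For the ``only if'' directions I would exhibit $W$ directly. If $H$ is not chordal, let $W$ be the vertex set of a chordless cycle of $G$, so $G[W]$ is the $|W|$-cycle; its cliques are its vertices and edges, each edge is realized as some $X \cap W$ and cannot be enlarged, and the reduction discards the singleton hyperedges, leaving exactly the cycle edges, whence $R(H[W]) \cong C_{|W|}$. If $H$ is not conformal, let $W$ be an inclusion-minimal clique of $G$ contained in no hyperedge; since every edge of $G$ lies in a hyperedge we get $|W| \geq 3$, every proper subset of $W$ lies in a hyperedge by minimality, so each $(|W|-1)$-subset is realized exactly as some $X \cap W$ while no hyperedge of size $|W|$ exists, making the maximal hyperedges of $H[W]$ precisely the $(|W|-1)$-subsets of $W$, i.e.\ $R(H[W]) \cong H_{|W|}$.

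Finally, for the algorithmic ``moreover'', once $W$ is found the safe-deletion sequence is immediate from the remark preceding the lemma: delete the vertices of $V \setminus W$ one by one, then delete the covered edges. The real work is finding $W$ in polynomial time. Non-chordality is standard, as a chordless cycle of $G$ can be located in polynomial time. For non-conformality the difficulty is that one must exhibit a minimal non-contained clique while general clique-finding is intractable; the idea I would use is to test $G$ for chordality first and search for a non-contained clique only when $G$ is chordal (which, under the hypothesis, must then be the failing condition). A chordal graph on $n$ vertices has at most $n$ maximal cliques, all enumerable in polynomial time from a perfect elimination ordering, so I can check each for containment in a hyperedge, take one that fails, and greedily delete its vertices---each deletion preserving the clique property and testing containment against the $O(|E|)$ hyperedges---until it becomes inclusion-minimal among non-contained cliques. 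I expect this conformality step, specifically avoiding the enumeration of exponentially many cliques, to be the main obstacle, and reducing to the chordal case is precisely what makes it tractable.
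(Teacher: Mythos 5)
Your proof is correct, but it takes a genuinely different and far more self-contained route than the paper's. The paper disposes of the two equivalences almost without proof---part (1) is declared straightforward and part (2) is delegated to a citation of Brault-Baron's survey---and its algorithm treats chordality testing (Rose--Tarjan--Lueker) and conformality testing (Gilmore's theorem, via Berge's book) as black boxes: it repeatedly deletes any vertex whose removal keeps the primal graph non-chordal (respectively, keeps the hypergraph non-conformal) until no further deletion is possible, at which point vertex-minimality together with the equivalences in (1) and (2) forces $R(H[W])$ to be the desired $C_{|W|}$ or $H_{|W|}$. You instead prove both biconditionals from scratch via the single identity that the primal graph of $R(H[W])$ is the induced subgraph $G[W]$, you extract a chordless cycle directly in the non-chordal case, and---the most interesting divergence---you avoid the conformality-testing oracle altogether by observing that in this disjunctive setting you may assume $G$ is chordal before hunting for a bad clique, so that the maximal cliques number at most $|V|$, are enumerable from a perfect elimination ordering, and greedy shrinking (which, as your argument correctly exploits, only needs the codimension-one subsets of $W$ to be covered by hyperedges, exactly what the greedy termination condition delivers) runs in polynomial time. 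The paper's route buys brevity and a symmetric, modular treatment of the two cases; yours buys explicit certificates and independence from Gilmore's criterion, at the cost of carrying the structural bookkeeping yourself. One cosmetic caveat: both your argument and the paper's implicitly assume every vertex lies in some hyperedge (as is automatic for hypergraphs arising from schemas, where $V = X_1 \cup \cdots \cup X_m$); otherwise an isolated vertex is a singleton clique contained in no hyperedge, and your sentence ``since every edge of $G$ lies in a hyperedge we get $|W| \geq 3$'' rules out size-two minimal witnesses but not singletons.
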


\begin{proof}
  The proof of (1) is straightforward. For the proof of (2)
  see~\cite{DBLP:journals/csur/Brault-Baron16}. Since there exist
  polynomial-time algorithms that test whether a graph is chordal
  (see, e.g.,~\cite{RoseTarjanLueker1976}), an algorithm to find a~$W$
  as stated in (1), when~$H$ is not chordal, is to iteratively delete
  vertices whose removal leaves a hypergraph with a non-chordal primal
  graph until no more vertices can be removed. Also, since there exist
  polynomial-time algorithms that test whether a hypergraph is
  conformal (see, e.g., Gilmore's Theorem in page~31
  of~\cite{Berge1989book}), an algorithm to find a~$W$ stated in (2),
  when~$H$ is not conformal, is to iteratively delete vertices whose
  removal leaves a non-conformal hypergraph until no more vertices can
  be removed. In both cases, once the set~$W$ is found, a sequence of
  safe-deletion operations that transforms~$H$~to~$R(H[W])$ is
  obtained by first deleting all vertices in~$V\setminus W$, and then
  deleting all covered edges.
\end{proof}

\begin{lemma} \label{lem:cons-preservv} Let~$\mathbb K$ be a positive commutative monoid,
  and let~$H_0$ and~$H_1$ be hypergraphs such that~$H_0$ is obtained
  from~$H_1$ by a sequence of safe-deletion operations. For every
  collection~$D_0$ of~$\mathbb K$-relations over~$H_0$, there exists a
  collection~$D_1$ of~ $\mathbb K$-relations over~$H_1$ such that, for every~$k \geq 1$, it holds that~$D_0$ is~$k$-wise consistent if
  and only if~$D_1$ is~$k$-wise consistent.
\end{lemma}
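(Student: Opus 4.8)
The plan is to reduce to a single safe-deletion operation and then compose. Since $H_0$ is obtained from $H_1$ by a finite sequence of safe-deletions, it suffices to treat one step: assuming $H_0$ arises from $H_1$ by a single vertex-deletion or a single covered-edge-deletion, and given a collection $D_0$ over $H_0$, I will construct a collection $D_1$ over $H_1$ whose $k$-wise consistency matches that of $D_0$ for every $k$; iterating this along the sequence (each step lifting the collection from the smaller hypergraph to the next larger one) then yields the desired $D_1$ over $H_1$. Two facts will be used throughout. First, $1$-wise consistency is automatic (each relation witnesses itself), so only $k \geq 2$ is nontrivial. Second, for $k \geq 2$, $k$-wise consistency implies pairwise consistency, and pairwise consistency forces all relations in a collection to share a common total mass, since $R_i[\emptyset] = W[\emptyset] = R_j[\emptyset]$ for any witness $W$ of a pair $R_i,R_j$.

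\emph{Covered-edge case.} Suppose $H_0 = H_1 \setminus e$ with $e \subseteq f$ for some edge $f \neq e$ of $H_1$; then $f$ is also an edge of $H_0$, carrying a relation $R_f$ in $D_0$. I set $D_1 := D_0 \cup \{R_e\}$ with $R_e := R_f[e]$. The reverse implication is immediate because $D_0 \subseteq D_1$. For the forward implication, take a subcollection $S$ of $D_1$ of size at most $k$: if $R_e \notin S$ a witness comes straight from $D_0$, and if $R_e \in S$ I augment $S \setminus \{R_e\}$ with $R_f$ (when not already present), obtaining a subcollection of $D_0$ of size at most $k$ with a witness $W$. Since $W[f] = R_f$ and $e \subseteq f$, part (2) of Lemma~\ref{lem:easyfacts1} gives $W[e] = W[f][e] = R_f[e] = R_e$, so the appropriate marginal of $W$ onto the attributes of $S$ witnesses $S$.

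\emph{Vertex-deletion case.} Suppose $H_0 = H_1 \setminus u$. I give $u$ the singleton domain $\domain(u) = \{0\}$, so that appending the forced value $u = 0$ is a bijection between tuples, hence between $\mathbb{K}$-relations, over any $Z \not\ni u$ and over $Z \cup \{u\}$; this \emph{inert lift} commutes with marginals by Lemma~\ref{lem:easyfacts1}. I build $D_1$ edgewise: for an edge $X$ of $H_1$ with $u \notin X$ I keep the relation of $D_0$ on $X$; for an edge $X$ with $u \in X$ and $Y := X \setminus \{u\} \neq \emptyset$ (so $Y$ is an edge of $H_0$) I let $R_X$ be the inert lift of $R_Y$, i.e.\ $R_X(t) := R_Y(t[Y])$; and for a singleton edge $X = \{u\}$ I set its single value to the total mass of a fixed relation of $D_0$ (equivalently the common mass $M$ whenever $D_0$ is pairwise consistent), or to an arbitrary nonzero value if $D_0$ is empty. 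Because the lift commutes with marginals, a witness for a subcollection of $D_0$ inert-lifts to a witness for the corresponding subcollection of $D_1$, and conversely a witness $W$ over $Z_1$ for a subcollection of $D_1$ restricts via $W \mapsto W[Z_1 \setminus \{u\}]$ to a witness for the matching $D_0$-subcollection; for the reverse direction one selects a single preimage edge per $H_0$-edge, which is needed because $X \mapsto X \setminus \{u\}$ may collapse several edges of $H_1$ to one edge of $H_0$. The only genuinely new relation is that on a singleton edge $\{u\}$, and its prescribed value is correct because $W[\{u\}](\langle u = 0\rangle)$ equals the total mass of the inert lift $W$, which equals $M$ whenever $D_0$ is pairwise consistent.

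The main obstacle is the vertex-deletion case, for two intertwined reasons. First, a monoid has no multiplication, so I cannot ``join'' the new relation on a singleton edge $\{u\}$ with the others by forming a product; the singleton-domain device sidesteps this by making every extension inert, and the positivity-driven fact that pairwise consistency forces a common total mass is exactly what makes the single prescribed value on $\{u\}$ simultaneously correct for all $k \geq 2$ with one fixed construction. Second, since $X \mapsto X \setminus \{u\}$ need not be injective, matching subcollections in both directions requires care; the resolution is that inert lifting and the restriction $W \mapsto W[Z_1 \setminus \{u\}]$ are mutually inverse and preserve all relevant marginals, which is precisely what Lemma~\ref{lem:easyfacts1} supplies.
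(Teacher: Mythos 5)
Your proposal is correct and follows essentially the same route as the paper's proof: reduce to a single safe-deletion step and iterate, handle a covered edge $e \subseteq f$ by setting $R_e := R_f[e]$ and transferring witnesses through Lemma~\ref{lem:easyfacts1}(2), and handle vertex deletion by lifting each relation of $D_0$ to the slice where the deleted attribute takes a fixed value. The one liberty you should repair is that you cannot \emph{give} $u$ the singleton domain $\{0\}$: an attribute comes with its associated domain $\domain(u)$, and $D_1$ must be a collection over $H_1$ as given, so instead fix a default value $u_0 \in \domain(u)$ and concentrate all supports on the slice $t(u) = u_0$ (this is exactly the paper's move); your ``inert lift'' is then a bijection onto that slice rather than onto all tuples, and every subsequent step of your argument goes through unchanged, via the marginal computations the paper carries out in Claims~\ref{claim:removeattribute} and~\ref{claim:removecoverededge} (note in particular that your restriction map $W \mapsto W[Z_1 \setminus \{u\}]$ yields a witness without needing $W$ itself to be concentrated on the slice). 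Beyond that, your explicit treatment of the two corner cases---assigning a singleton edge $\{u\}$ the common total mass forced by pairwise consistency, and resolving the non-injectivity of $X \mapsto X \setminus \{u\}$ by selecting one preimage edge per $H_0$-edge so the subcollection size stays at most $k$---is sound, and in fact makes explicit what the paper's bookkeeping (indexing $D_0$ by all of $[m]$, including an $S_i$ of empty schema $Y_i = \emptyset$ when $X_i = \{A\}$) leaves implicit.
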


\begin{proof}
  We define~$D_1$ when~$H_0$ is
  obtained from~$H_1$ by a single safe-deletion operation. The general case follows from iterating the construction. 
  In what follows, suppose
  that~$H_1 = (V_1,E_1)$, where~$V_1 = \{A_1,\ldots,A_n\}$
  and~$E_1 = \{X_1,\ldots,X_m\}$.

  Assume first that~$H_0 = H_1 \setminus X$ where~$X \in E_1$ is such
  that~$X \subseteq X_j$ for some~$j \in [m]$ with~$X \not= X_j$;
  i.e.,~$H_0$ is obtained from~$H_1$ by deleting a covered edge. In
  particular,~$V_0 = V_1$ and~$E_0 = E_1 \setminus \{X\}$. If
  the~$\mathbb K$-relations of~$D_0$ are~$S_i(X_i)$ for~$i \in [m]$
  with~$X_i \not= X$, then~$D_1$ is defined as the collection with
  $\mathbb K$-relations~$R_i(X_i)$ for~$i \in [m]$ defined as follows: For
  each~$i \in [m]$, if~$X_i \not= X$, then 
  let~$R_i := S_i$, else let~$R_i := S_j[X]$.

  Assume next that~$H_0 = H_1 \setminus A$ where~$A \in V_1$;
  i.e.,~$H_0$ is obtained from~$H_1$ by deleting a vertex. In
  particular,~$V_0 = V_1\setminus\{A\}$ and~$E_0 = \{Y_1,\ldots,Y_m\}$
  where~$Y_i = X_i\setminus\{A\}$ for~$i = 1,\ldots,m$. Fix a default
  value~$u_0$ in the domain~$\mathrm{Dom}(A)$ of the attribute~$A$. If
  the~$\mathbb K$-relations of~$D_0$ are~$S_i(Y_i)$ for~$i \in [m]$,
  then~$D_1$ is defined as the collection
  with~$\mathbb K$-relations~$R_i(X_i)$ for~$i \in [m]$ defined as follows:
  For each~$i \in [m]$, if~$A \not\in X_i$, then let~$R_i := S_i$; else
  let~$R_i$ be the~$\mathbb K$-relation of schema~$X_i = Y_i \cup \{A\}$
  defined for every~$X_i$-tuple~$t$ by~$R_i(t) := S_i(t[Y_i])$
  if~$t(A) = u_0$ and~$R_i(t) := 0$ if~$t(A) \not= u_0$. Here,~$0$
  denotes the neutral element of addition in~$\mathbb K$. We note
  that in case~$X_i = \{A\}$, the~$\mathbb K$-relation~$R_i$ has empty
  schema~$Y_i = \emptyset$ and consists of the empty tuple
  with~$\mathbb K$-value~$S_i(u_0)$.

  We prove the main property by cases. Fix an integer $k \geq 1$.

  \begin{claim} \label{claim:removeattribute}
    Assume~$H_0 = H_1 \setminus A$ for some vertex~$A \in V_1$. Then,
    the $\mathbb K$-relations~$S_i(Y_i)$ of~$D_0$ are~$k$-wise consistent if
    and only if the $\mathbb K$-relations~$R_i(X_i)$ of~$D_1$ are~$k$-wise
    consistent.
  \end{claim}

  \begin{proof}
    Fix~$I \subseteq [m]$ with~$|I| \leq k$,
    let~$X = \bigcup_{i \in I} X_i$ and~$Y = \bigcup_{i \in I}
    Y_i$. Observe that~$Y = X \setminus \{A\}$. In particular~$Y = X$
    if~$A$ is not in~$X$.

    (If): Let~$R$ be a $\mathbb K$-relation over~$X$ that witnesses the
    consistency of~$\{ R_i : i \in I \}$, and let~$S := R[Y]$. We
    claim that~$S$ witnesses the consistency of~$\{ S_i : i \in I \}$.
    Indeed, 
    \begin{equation*}
    S[Y_i] = R[Y][Y_i] = R[Y_i] = R_i[Y_i] = S_i,
    \end{equation*}
    where the
    first equality follows from the choice of~$S$, the second equality
    follows from~$Y_i \subseteq Y$, the third equality follows from
    the facts that~$R[X_i] = R_i$ and~$Y_i \subseteq X_i$, and the
    fourth equality follows from the definition of~$R_i$.

    (Only if): Consider the two cases:~$A \not\in X$ or~$A \in X$.
    If~$A \not\in X$, then~$R_i = S_i$ for every~$i \in I$ and
    there is nothing to prove.
    If~$A \in X$, then let~$S$ be a $\mathbb K$-relation over~$Y$ that
    witnesses the consistency of the
    $\mathbb K$-relations~$\{ S_i : i \in I\}$, and let~$R$ be the
    $\mathbb K$-relation over~$X$ defined for every~$X$-tuple~$t$
    by~$R(t) := 0$ if~$t(A) \not= u_0$ and by~$R(t) := S(t[Y])$
    if~$t(A) = u_0$. We claim that~$R$ witnesses the consistency of
    the $\mathbb K$-relations~$R_i$ for~$i \in I$. We show that~$R_i = R[X_i]$
    for~$i \in I$.  Towards this, first we argue
    that~$S[Y_i] = R[Y_i]$.  Indeed, for every~$Y_i$-tuple~$r$ we have
     \begin{align}
      S[Y_i](r) = \sum_{\newatop{s \in S':}{s[Y_i] = r}} S(s) =
      \sum_{\newatop{t \in \tuples(X):}{\newatop{t[Y_i] = r,}{t(A) = u_0}}} S(t[Y]) =
      \sum_{\newatop{t \in S':}{t[Y_i] = r}} R(t) = R[Y_i](r), \label{eqn:lslsggg}
    \end{align}
    where the first equality is the definition of marginal, the
    second equality follows from the fact that the map~$t \mapsto t[Y]$ is a
    bijection between the set of~$X$-tuples~$t$ such that~$t[Y_i]=r$
    and~$t(A) = u_0$ and the set of~$Y$-tuples~$s$ such
    that~$s[Y_i]=r$, the third equality follows from the definition of~$R$, and
    the fourth equality is the definition of marginal.

    In case~$A \not\in X_i$, we have that~$Y_i = X_i$, hence~\eqref{eqn:lslsggg} already shows
    that~$R_i = S_i = S[Y_i] = R[Y_i] = R[X_i]$. In case~$A \in X_i$,
    we use the fact that~$S_i = S[Y_i]$ to show that~$R_i =
    R[X_i]$. For every~$X_i$-tuple~$r$ with~$r(A) \not= u_0$, we
    have~$R_i(r) = 0$ and also~$R[X_i](r) = \sum_{t : t[X_i] = r} R(t) = 0$
    since the conditions that~$t[X_i] = r$ and~$A \in X_i$
    imply that~$t(A) = r(A) \not= u_0$. Thus, $R_i(r) = R[X_i](r) = 0$ 
    in this
    case. For every~$X_i$-tuple~$r$ with~$r(A) = u_0$, we have
    \begin{align}
      R_i(r) =  S_i(r[Y_i]) =  S[Y_i](r[Y_i]) =  R[Y_i](r[Y_i]), \label{lem:interr}
\end{align}
    where the first equality follows from the definition of~$R_i$ and
    the assumption that~$r(A) = u_0$, the second equality follows from
    $S_i = S[Y_i]$, and the third equality follows from~\eqref{eqn:lslsggg}.
    Continuing from the right-hand side of~\eqref{lem:interr}, we have
    \begin{align}
     R[Y_i](r[Y_i]) =  \sum_{\newatop{t \in R':}{t[Y_i] = r[Y_i]}} R(t) =
     \sum_{\newatop{t \in R':}{t[X_i] = r}} R(t) = R[X_i](r),
    \label{lem:retniy}
    \end{align}
    where the first equality is the definition of marginal, the
    second equality follows from the assumption that~$A \in X_i$
    and~$r(A) = u_0$ together with~$R(t) = 0$ in
    case~$t(A) \not= u_0$, and the third equality is the definition
    of marginal. Combining~\eqref{lem:interr}
    with~\eqref{lem:retniy}, we get~$R_i(r) = R[X_i](r)$ also in this
    case. This proves that~$R_i = R[X_i]$.
    \end{proof}

    \begin{claim} \label{claim:removecoverededge}
      Assume~$H_0 = H_1 \setminus X$ for some edge~$X \in E_1$ that is
      covered in~$H_1$. Then, the $\mathbb K$-relations~$S_i(X_i)$ of~$D_0$
      are~$k$-wise consistent if and only if the $\mathbb K$-relations~$R_i(Y_i)$
      of~$D_1$ are~$k$-wise consistent.
    \end{claim}

     \begin{proof}
       Let~$l \in [m]$ be such that~$X = X_l \subseteq X_j$
       for some~$j \in [m]\setminus\{l\}$,
       so~$E_0 = \{ X_i : i \in [m]\setminus \{l\}\}$.

       (If): Fix~$I \subseteq [m]\setminus\{l\}$ with~$|I| \leq k$ and
       let~$X = \bigcup_{i \in I} X_i$. Let~$R$ be a $\mathbb K$-relation
       over~$X$ that witnesses the consistency of~$\{ R_i : i \in I\}$
       and let~$S = R$. Since~$S_i = R_i$ for
       every~$i \in [m]\setminus\{l\}$, it is obvious that~$S$
       witnesses the consistency of~$\{ S_i : i \in I \}$.

       (Only if): Fix~$I \subseteq [m]$ with~$|I| \leq k$ and
       let~$X = \bigcup_{i \in I} X_i$. Let~$S$ be a $\mathbb K$-relation
       over~$X$ that witnesses the consistency
       of~$\{ S_i : i \in I\setminus\{l\} \}$ and let~$R = S$. We
       have~$R_l = S_j[X_l] = S[X_j][X_l] = R[X_j][X_l] = R[X_l]$
       where the first equality follows from the definition of~$R_l$,
       the second equality follows from the fact that~$S_j = S[X_j]$,
       the third equality follows from the choice of~$R$, and the
       fourth equality follows from~$X_l \subseteq X_j$.
     \end{proof}

     The proof of  Lemma \ref{lem:cons-preservv} is now complete.
\end{proof}

Lemma \ref{lem:cons-preservv} implies that the local-to-global
consistency property for~$\mathbb K$-relations is preserved under induced
hypergraphs and under reductions.

\begin{restatable}{corollary}{preservelemma} \label{lem:preserv1here}
  Let~$\mathbb K$ be a positive commutative monoid and let~$H$ be a hypergraph. If~$H$
  has the \ltgc~for~$\mathbb K$-relations, then, for every subset~$W$ of the
  set of vertices of~$H$, the hypergraph~$R(H[W])$ also has the
  \ltgc~for~$\mathbb K$-relations.
\end{restatable}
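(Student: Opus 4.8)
The plan is to prove the corollary by reducing it to the lemma just proved, namely Lemma~\ref{lem:cons-preservv}. The key observation, which is stated explicitly in the paragraph immediately preceding Lemma~\ref{lem:cons-preservv}, is that for any subset $W$ of the vertex set $V$ of $H$, the hypergraph $R(H[W])$ is obtainable from $H$ by a sequence of safe-deletion operations: first delete the vertices in $V \setminus W$ one by one to reach $H[W]$, then delete the covered edges one by one to reach $R(H[W])$. So I would set $H_0 := R(H[W])$ and $H_1 := H$ and invoke Lemma~\ref{lem:cons-preservv} in this setting.

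\medskip

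I would argue by contraposition. Suppose $R(H[W])$ does \emph{not} have the \ltgc~for $\mathbb K$-relations. Unwinding Definition~\ref{defn:local-to-global}, this means there is a collection $D_0$ of $\mathbb K$-relations over $H_0 = R(H[W])$ that is pairwise consistent but not globally consistent. Writing $m_0$ for the number of hyperedges of $H_0$, pairwise consistency is $2$-wise consistency and failure of global consistency is failure of $m_0$-wise consistency in the notation of Definition~\ref{defn:consistency}. Now I apply Lemma~\ref{lem:cons-preservv} to this $D_0$: it yields a collection $D_1$ of $\mathbb K$-relations over $H_1 = H$ such that, for every $k \geq 1$, the collection $D_0$ is $k$-wise consistent if and only if $D_1$ is $k$-wise consistent.

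\medskip

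From this biconditional I extract what I need. Taking $k = 2$ and using that $D_0$ is pairwise consistent, I conclude that $D_1$ is pairwise consistent. Taking $k = m_0$ and using that $D_0$ is \emph{not} $m_0$-wise consistent, I conclude that $D_1$ is not $m_0$-wise consistent either; since $m_0 \leq m$ (the number of hyperedges can only drop under safe-deletion operations, as vertex-deletions never add edges and covered-edge-deletions remove them), and since $(m)$-wise consistency implies $m_0$-wise consistency by the monotonicity remark following Definition~\ref{defn:consistency}, the failure of $m_0$-wise consistency forces the failure of $m$-wise consistency, i.e., $D_1$ is not globally consistent over $H$. Thus $D_1$ is a pairwise-but-not-globally-consistent collection over $H$, witnessing that $H$ fails the \ltgc~for $\mathbb K$-relations. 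By contraposition, if $H$ has the \ltgc~for $\mathbb K$-relations, then so does $R(H[W])$, as claimed.

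\medskip

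I expect the only delicate point to be the bookkeeping with the consistency indices: one must be careful that ``pairwise'' ($k=2$) is preserved in the direction needed and that ``not globally'' is preserved in the other direction, keeping track of the fact that the hyperedge counts $m_0$ and $m$ of $H_0$ and $H_1$ may differ. The clean way to handle this, and the way I would write it, is to note that Lemma~\ref{lem:cons-preservv} gives the equivalence of $k$-wise consistency \emph{for all} $k \geq 1$ simultaneously, so applying it at the two relevant values of $k$ — together with the elementary monotonicity that $(k+1)$-wise consistency implies $k$-wise consistency — closes the argument without any further computation. There is no substantive obstacle here; the corollary is a direct consequence of the lemma.
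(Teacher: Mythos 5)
Your proposal is correct and is essentially the paper's own argument: both invoke Lemma~\ref{lem:cons-preservv} with $H_0 = R(H[W])$ and $H_1 = H$, use the equivalence of $k$-wise consistency at $k=2$ and at the hyperedge counts, and handle the index bookkeeping via $m_0 \leq m$ together with the monotonicity of $k$-wise consistency. The only difference is presentational: you argue by contraposition, while the paper runs the same reduction directly (pairwise consistent over $R(H[W])$ lifts to pairwise consistent over $H$, global consistency over $H$ descends back), which is logically the same proof.
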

\begin{proof}
  As discussed earlier, the hypergraph~$R(H[W])$ is obtained from the
  hypergraph~$H$ by a sequence of safe-deletion operations. We will
  apply Lemma~\ref{lem:cons-preservv} with~$H_0= R(H[W])$
  and~$H_1=H$. Let~$m$ be the number of hyperedges of~$R(H[W])$ and
  let~$m'$ be the number of hyperedges of~$H$; clearly, we have
  that~$m\leq m'$. Let~$R_1,\ldots,R_{m}$ be a collection
  of~$\mathbb K$-relations over~$R(H[W])$ that are pairwise consistent. We
  have to show that this collection is globally consistent. By
  Lemma~\ref{lem:cons-preservv}, there is a collection
  of~$\mathbb K$-relations~$S_1,\dots,S_{m'}$ over~$H$ that are pairwise
  consistent. Since~$H$ has the \ltgc~for~$\mathbb K$-relations, it follows
  that the collection~$S_1,\dots,S_{m'}$ is globally consistent, i.e.,
  it is~$m'$-wise consistent. Since~$m\leq m'$, we have that the
  collection~$S_1,\dots,S_{m'}$ is also~$m$-wise
  consistent. By Lemma~\ref{lem:cons-preservv} (but in the reverse
  direction this time), we have that the collection~$R_1,\dots,R_m$ is~$m$-wise consistent, which means
  that it is globally consistent, as it was to be shown.
\end{proof}

We are now ready to give the proof of Theorem~\ref{thm:necessary}.

\begin{proof}[Proof of Theorem~\ref{thm:necessary}] 
  Assume that the hypergraph~$H$ is not acyclic, so in particular~$H$
  is not both chordal and conformal. By Lemma~\ref{lem:characconf},
  there is a subset~$W$ of~$V$ such that~$|W| \geq 3$
  and~$R(H[W])=C_{|W|}$ or there is a
  subset~$W$ of~$V$ such that~$|W| \geq 4$
  and~$R(H[W]) = H_{|W|}$.   
  Now note that for $n \geq 3$ the (hyper)graph $C_n$ is 
  $k$-uniform and $d$-regular for $k = 2 \geq 1$ and $d = 2$, and
  for $n \geq 4$ the hypergraph $H_n$ is $k$-uniform and $d$-regular for
  $k = n-1 \geq 1$ and $d = n-1 \geq 2$. Therefore, Lemma~\ref{lem:newtseitin}
  applies to conclude that $R(H[W])$ does not have the \ltgc~for $\mathbb{K}$-relations, 
  and Corollary~\ref{lem:preserv1here} implies that $H$ does not have it either.
\end{proof}

\subsection{Acyclicity is Not Always Sufficient} \label{sec:not-sufficient}

In this section, we show that there are positive commutative monoids $\mathbb K$ and acyclic schemas $H$ such that $H$ does \emph{not} have the \ltgc~for $\mathbb K$-relations. In other words, the acyclicity of a schema is not a sufficient condition for  the \ltgc~to hold for arbitrary positive commutative monoids.

Let $\mathbb{N}_2=(\{0,1,2\}, \oplus, 0)$ be the structure with the set $\{0,1,2\}$ as its universe and  addition  rounded to $2$ as its operation, i.e., $1\oplus 1 = 2\oplus 1 = 2\oplus 2 = 2$, and $0 \oplus x = x \oplus 0 = x$ for all $x \in \{0,1,2\}$. It is easy to verify that $\mathbb {N}_2$ is a positive commutative monoid.

Let $P_3$ be the \emph{path-of-length-3} hypergraph whose vertices form the set
 $\{A,B,C\}$ and whose edges form the set 
$\{\{A,B\},\{B,C\},\{C,D\}\}$. 
Clearly, $P_3$ is an acyclic hypergraph. 
 
\begin{proposition} \label{prop:path3} 
 The path-of-length-3 hypergraph $P_3$ does not 
 have the \ltgc~for~$\mathbb {N}_2$-relations.
\end{proposition}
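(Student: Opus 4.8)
The plan is to exhibit three $\mathbb{N}_2$-relations $R_1(A,B)$, $R_2(B,C)$, $R_3(C,D)$ on the edges of $P_3$ that are pairwise consistent but admit no global witness, with the failure driven by a balanced-but-unsolvable transportation instance over $\mathbb{N}_2$. The engine is the following observation: over $\mathbb{N}_2$, there is no matrix with entries in $\{0,1,2\}$ whose row sums are $(1,1,1)$ and whose column sums are $(2,2)$, even though the instance is \emph{balanced} since $1\oplus 1\oplus 1 = 2 = 2\oplus 2$. Indeed, a row whose $\oplus$-sum is $1$ must be a unit vector (a single $1$), so three rows supply only three units, whereas two columns each summing to $2$ need at least two units apiece, hence at least four units in total. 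This counting obstruction, hidden by the truncation of $\oplus$ at $2$, is what I would force a global witness to run into.

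Concretely, I would take the domain $\{1,2,3\}$ for $A$ and $\{1,2\}$ for $B,C,D$, and set $R_1(a,b)=1$ for all $a\in\{1,2,3\}$ and $b\in\{1,2\}$; $R_2(1,1)=R_2(2,2)=2$ with $R_2(1,2)=R_2(2,1)=0$; and $R_3(1,1)=R_3(1,2)=2$, $R_3(2,1)=R_3(2,2)=1$. One checks that all shared marginals agree, namely $R_1[B]=R_2[B]=(2,2)$ and $R_2[C]=R_3[C]=(2,2)$. The key design choice is that $R_2$ is supported only on the diagonal cells $(1,1)$ and $(2,2)$. By positivity of $\mathbb{N}_2$, any global witness $W(A,B,C,D)$ with $W[BC]=R_2$ must vanish on every tuple with $(b,c)\in\{(1,2),(2,1)\}$, which decouples $W$ into a $(b{=}1,c{=}1)$-block and a $(b{=}2,c{=}2)$-block. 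Then $W[AB]=R_1$ forces the $(1,1)$-block to have row sums $R_1(a,1)=(1,1,1)$, while $W[CD]=R_3$ forces its column sums $R_3(1,d)=(2,2)$ -- exactly the impossible instance above. Applying the unit-vector counting argument to this forced block yields the contradiction establishing global inconsistency.

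For pairwise consistency I would treat the three pairs separately. For the two adjacent pairs $(R_1,R_2)$ and $(R_2,R_3)$ (sharing $B$, respectively $C$), the off-diagonal zeros of $R_2$ again pin the remaining coordinate, so a witness is written down directly: e.g. $W_{12}(a,1,1)=W_{12}(a,2,2)=1$ witnesses $(R_1,R_2)$, and the analogous diagonal $\mathbb{N}_2$-relation witnesses $(R_2,R_3)$; no transportation branching occurs because the forbidden off-diagonal cells remove the freedom. The subtle pair is the non-adjacent one $(R_1,R_3)$, whose witness is \emph{not} constrained by $R_2$ and may therefore route mass through precisely the cells $(1,2),(2,1)$ that a global witness is barred from using. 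I would choose $R_3(2,\cdot)=(1,1)$ exactly so that the full $R_1\to R_3$ transportation becomes feasible: the six unit sources of $R_1$ meet the sink demands $(2,2,1,1)$ of $R_3$ with no unit to spare, and one exhibits an explicit assignment (two sources each to $(1,1)$ and $(1,2)$, one each to $(2,1)$ and $(2,2)$).

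The main obstacle is calibrating $R_1,R_2,R_3$ so that these two requirements coexist: the zeros of $R_2$ must be strong enough to force the bad block in the global problem, yet must not render any single pair infeasible -- and one must watch the $\mathbb{N}_2$ totals, since truncation can make a transportation look balanced while concealing a shortage of units (as is exactly the danger for the non-adjacent pair). Once the tables are chosen so that the three pairwise witnesses exist and all shared marginals agree, the global impossibility follows cleanly from the unit-vector counting argument applied to the forced $(1,1)$-block, completing the proof that $P_3$ does not have the \ltgc~for $\mathbb{N}_2$-relations.
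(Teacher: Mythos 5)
Your proposal is correct and follows essentially the same route as the paper: three $\mathbb{N}_2$-relations on $P_3$ whose middle relation has diagonal support, so that any global witness decouples into blocks and the $(b,c)=(1,1)$ block is forced to solve a balanced-but-infeasible transportation instance over $\mathbb{N}_2$ (yours is $3\times 2$ with row sums $(1,1,1)$ and column sums $(2,2)$; the paper's is $2\times 3$ with row sums $(1,1)$ and column sums $(1,1,1)$), while the pairwise witnesses are exhibited directly. The only cosmetic difference is that you close the argument with a unit-counting observation where the paper does a short case analysis, and your explicit witnesses (including the non-adjacent pair $(R_1,R_3)$ and the diagonal witness for $(R_2,R_3)$) all check out.
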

\begin{proof}
Consider the following three $\mathbb {N}_2$-relations $R_1(AB),R_2(BC),R_3(CD)$:
  
  \begin{center}
  \begin{tabular}{llllllllllllllll}
  $A$ & $B$ & : & $R_1$ & & $B$ & $C$ & : & $R_2$ & & $C$ & $D$ & : & $R_3$ \\
  \cmidrule(lr){1-4}\cmidrule(lr){6-9}\cmidrule(lr){11-14}
  $a_1$ & $b_1$ & : & $1$ & & $b_1$ & $c_1$ & : & $2$ & & $c_1$ & $d_1$ & : & $1$ \\
  $a_2$ & $b_1$ & : & $1$ & & $b_2$ & $c_2$ & : & $2$ & & $c_1$ & $d_2$ & : & $1$ \\ 
  $a_3$ & $b_2$ & : & $2$ & &       &       &   &     & & $c_1$ & $d_3$ & : & $1$ \\
        &       &   &     & &       &       &   &     & & $c_2$ & $d_4$ & : & $2$
  \end{tabular}
  \end{center}
  
\noindent The $\mathbb{N}_2$-relations $R_{12}(ABC), R_{23}(BCD), R_{13}(ABCD)$ 
that follow witness the pairwise consistency of the $\mathbb{N}_2$-relations $R_1(AB),R_2(BC),R_3(CD)$.

  \begin{center}
  \begin{tabular}{llllllllllllllllll}
  $A$ & $B$ & $C$ & : & $R_{12}$ & & $B$ & $C$ & $D$ & : & $R_{23}$ & & $A$ & $B$ & $C$ & $D$ & : & $R_{13}$ \\
  \cmidrule(lr){1-5}\cmidrule(lr){7-11}\cmidrule(lr){13-18}
  $a_1$ & $b_1$ & $c_1$ & : & $1$ & & $b_1$ & $c_1$ & $d_1$ & : & $1$ & & $a_1$ & $b_1$ & $c_1$ & $d_1$ & : & $1$ \\
  $a_2$ & $b_1$ & $c_1$ & : & $1$ & & $b_1$ & $c_1$ & $d_2$ & : & $1$ & & $a_2$ & $b_1$ & $c_1$ & $d_2$ & : & $1$ \\
  $a_3$ & $b_2$ & $c_2$ & : & $2$ & & $b_1$ & $c_1$ & $d_3$ & : & $1$ & & $a_3$ & $b_2$ & $c_1$ & $d_3$ & : & $1$ \\
        &       &       &   &     & & $b_2$ & $c_2$ & $d_4$ & : & $2$ & & $a_3$ & $b_2$ & $c_2$ & $d_4$ & : & $2$
  \end{tabular}
  \end{center}

We now show that the relations $R_1,R_2,R_3$ are not globally consistent.
Towards a contradiction, assume that there is a $\mathbb{N}_2$-relation $W(ABCD)$ witnessing their global consistency.  For each $i=1,2,3$,  the support of $R_i$ must be equal to the support of  the projection of $W$ on the attributes of $R_i$; thus, $W(ABCD)$ must be of the form:

\begin{center}
\begin{tabular}{lllllllll}
  $A$ & $B$ & $C$ & $D$ & : & $W$ \\
  \cmidrule(lr){1-6}
  $a_1$ & $b_1$ & $c_1$ & $d_1$ & : & $x_1$ \\
  $a_1$ & $b_1$ & $c_1$ & $d_2$ & : & $x_2$  \\
  $a_1$ & $b_1$ & $c_1$ & $d_3$ & : & $x_3$  \\
  $a_2$ & $b_1$ & $c_1$ & $d_1$ & : & $x_4$  \\
  $a_2$ & $b_1$ & $c_1$ & $d_2$ & : & $x_5$  \\
  $a_2$ & $b_1$ & $c_1$ & $d_3$ & : & $x_6$  \\
  $a_3$ & $b_2$ & $c_2$ & $d_4$ & : & $x_7$.
\end{tabular}
\end{center}

For example, the support of $W(ABCD)$ cannot contain the tuple $(a_3,b_2,c_1,d_3)$ because the pair $(b_2,c_1)$ does not belong to the support of $R_2(BC)$.
Since $W$ witnesses the global consistency of $R_1, R_2, R_3$ and since
$R_1(a_1,b_1)  =  R_1(a_2,b_1)= 1$,  we must have
that
\begin{eqnarray}
 x_1 \oplus x_2 \oplus x_3 & = & 1 \label{eq:1}\\
 x_4 \oplus x_5 \oplus x_6 & = & 1. \label{eq:2}
\end{eqnarray}
Similarly and since $R_3(c_1,d_1) = R_3(c_1,d_2) = R_3(c_1,d_3) = 1$, we must have that 
\begin{eqnarray}
 x_1 \oplus x_4 & = &  1 \label{eq:3}\\
  x_2 \oplus x_5 &  = &  1 \label{eq:4}\\
  x_3 \oplus x_6 & = & 1. \label{eq:5}
\end{eqnarray}
By Equation (\ref{eq:3}),  we must have either $x_1 = 1$ and $x_4=0$,  or $x_1=0$ and $x_4 = 1$. 
If $x_1 = 1$ and $x_4=0$, then,  by Equations (\ref{eq:1}) and (\ref{eq:2}), we have that $x_2 = x_3 = 0$ and $x_5 \oplus x_6 = 1$. But then, by Equations (\ref{eq:4}) and (\ref{eq:5}), we have that $x_5 = 1 = x_6$, hence $x_5 \oplus x_6=2$, a contradiction. 
If $x_1 = 0$ and $x_4=1$,
then,  by Equations (\ref{eq:1}) and (\ref{eq:2}), we have that $x_2 \oplus x_3 = 1$ and $x_5 =x_6 = 0$. But then, by Equations (\ref{eq:3}) and (\ref{eq:4}), we have that $x_2 = 1 = x_3$, hence $x_2 \oplus x_3=2$, a contradiction. Therefore, the $\mathbb {N}_2$-relations $R_1, R_2, R_3$ are not globally consistent. 
\end{proof}

\section{Acyclicity and the Transportation Property}  \label{sec:sufficient}

As seen in the previous section, there exist positive commutative monoids $\mathbb{K}$ for which 
 acyclicity of a hypergraph is not a sufficient condition for the hypergraph to have the
\ltgc~for~$\mathbb {K}$-relations. In this
section we ask: under what conditions on the monoid is acyclicity sufficient?  We
introduce a property of commutative monoids, which we call the \emph{transportation property}, and show that it characterizes 
the positive commutative monoids $\mathbb{K}$ for which acyclicity of a hypergraph $H$ is sufficient 
for $H$ to have the \ltgc~for $\mathbb{K}$-relations.
Then, in the next section, we  show that many positive commutative monoids of interest have the 
transportation property. 

\subsection{Transportation Property and Inner Consistency Property}

Let~$\mathbb {K}$ be a positive commutative monoid. Recall that if~$R(X)$ and~$S(Y)$
are~$\mathbb K$-relations, then, by definition,~$R(X)$ and~$S(Y)$ are consistent
if there is a~$\mathbb K$-relation~$T(XY)$ such that~$T[X] = R$
and~$T[Y] = S$. It is not difficult to see that if~$R(X)$ and~$S(Y)$
are consistent, then~$R[X \cap Y] = S[X \cap Y]$, i.e., 
$R(X)$ and $S(Y)$ have the same marginals on the set of their common attributes. 
Motivated by this, we introduce the following two notions.

\begin{definition} \label{defn:inner} 
Let $\mathbb K$ be a positive commutative monoid.
Two~$\mathbb {K}$-relations~$R(X)$ and~$S(Y)$ are \emph{inner
  consistent} if~$R[X \cap Y] = S[X \cap Y]$ holds. 
The \emph{inner consistency
  property holds for $\mathbb K$-relations}
  if whenever two $\mathbb {K}$-relations $R(X)$ and $S(Y)$ are
inner consistent, then $R(X)$ and $S(Y)$ are also consistent.
\end{definition}

The main result of this section asserts that  the inner consistency property holds for $\mathbb K$-relations if and only if every  acyclic hypergraph has the \ltgc~for $\mathbb K$-relations. Rather unexpectedly, it turns out  that this last property is equivalent to just 
 the path-of-length three hypergraph $P_3$  having the \ltgc~for $\mathbb K$-relations.
To prove this result, we will  introduce a combinatorial property of monoids whose definition
involves only elements from the universe of  the monoid, i.e., no relations are involved in the definition of this combinatorial property.

\begin{definition}\label{defn:transportationproblem}
 Let ${\mathbb K}= (K, +,0)$ be a positive commutative monoid.
 The \emph{transportation problem for $\mathbb K$} is the following decision problem: given 
  two positive integers~$m$ and~$n$, a
 column~$m$-vector~$b = (b_1,\ldots,b_m) \in K^m$ with entries in~$K$,  and a
  row~$n$-vector~$c = (c_1,\ldots,c_n) \in K^n$ with entries
 in~$K$, does  there
exist an~$m \times n$
matrix~$D = (d_{ij} : i \in [m], j \in [n]) \in K^{m \times n}$ with
 entries in~$K$ such that~$d_{i1} + \cdots + d_{im} = b_i$ for all~$i \in [m]$
 and~$d_{1j} + \cdots + d_{mj} = c_j$ for all~$j \in [n]$?
 In words, this means that the rows of $D$ sum to $b$ and the columns of $D$ sum to $c$. 
 \end{definition}

An instance $b = (b_1,\ldots,b_m)$ and $c = (c_1,\ldots,c_n)$  of the transportation problem can be viewed as a system of linear equations having $mn$ variables and $m+n$ equations.
 Graphically, we represent the first $m$ equations
horizontally and the next $n$ equations vertically, in accordance with the convention
that $b$ is a column vector and $c$ is a row vector:
\begin{equation}
\begin{array}{ccccccccc}
x_{11} & + & x_{12} & + & \cdots & + & x_{1n} & =      & b_1  \\
+      &   & +      &   &        &   & +      &        &      \\
x_{21} & + & x_{22} & + & \cdots & + & x_{2n} & =      & b_2  \\
+      &   & +      &   &        &   & +      &        & \\
\vdots &   & \vdots &   & \ddots &   & \vdots &        &  \\
+      &   & +      &   &        &   & +      &        & \\
x_{m1} & + & x_{m2} & + & \cdots & + & x_{mn} & =      & b_m  \\
\shortparallel  &   & \shortparallel     &   &        &   & \shortparallel     &        & \\
c_1    &   & c_2    &   &        &   & c_n    &        &
\end{array}
\label{eqn:systemofequations}
\end{equation}

The term ``transportation problem" comes from linear programming, where this problem has the following interpretation. Suppose a product is manufactured in $m$ different factories, where factory $i$ produces $b_i$ units of the product, $i \in [m]$. The units produced have to be transported to $n$ different markets, where   the demand of the product at market $j$ is $c_j$ units, $j\in [n]$. 
The question is whether there is a way to ship every unit produced 
at each factory,  so that the demand at each market is met; thus, the variable $x_{ij}$ represents the number of units produced in factory $i$ that are shipped to market  $j$, where $i \in  [m]$ and $ j\in [n]$.

Suppose that an instance of the transportation problem  has a solution~$(d_{ij} : i \in [m], j \in [n])$ in $\mathbb K$. By summing over all rows of the system  (\ref{eqn:systemofequations}), we have that $\sum_{i=1}^m\sum_{j=1}^n d_{ij}= b_1+ \cdots +b_m$. Similarly, by summing over all columns of the system (\ref{eqn:systemofequations}), we have that $\sum_{j=1}^n\sum_{i=1}^md_{ij}= c_1+ \cdots +c_n$. The commutativity of $\mathbb K$ implies that $\sum_{i=1}^m\sum_{j=1}^n d_{ij}= \sum_{j=1}^n\sum_{i=1}^md_{ij}$, hence $b_1+ \cdots +b_m = c_1+ \cdots +c_n$.
Thus, a necessary condition for an instance of the transportation problem to have a solution is that this instance is \emph{balanced}, i.e., $b_1 + \cdots + b_n = c_1 + \cdots + c_m$. In words, if an instance of the transportation problem has a solution, then the total supply must be equal to the total demand. 

We are now ready to introduce the notion of the transportation property.

\begin{definition}\label{defn:ftp}
 Let ${\mathbb K}= (K, +,0)$ be a positive commutative monoid.
 We say that~$\mathbb K$ has the \emph{\ftp}~if
 for
 every two positive integers~$m$ and~$n$, every
 column~$m$-vector~$b = (b_1,\ldots,b_m) \in K^m$ with entries in~$K$ and
 every row~$n$-vector~$c = (c_1,\ldots,c_n) \in K^n$ with entries
 in~$K$ such that~$b_1 + \cdots + b_m = c_1 + \cdots + c_n$ holds, we have that there
exists an~$m \times n$
matrix~$D = (d_{ij} : i \in [m], j \in [n]) \in K^{m \times n}$ with
entries in~$K$ whose 
rows sum to $b$ and whose columns sum to~$c$,
 i.e.,~$d_{i1} + \cdots + d_{im} = b_i$ for all~$i \in [m]$
 and~$d_{1j} + \cdots + d_{mj} = c_j$ for all~$j \in [n]$.

 In words, $\mathbb K$ has the \emph{\ftp}~if every balanced instance of the transportation problem has a solution in $\mathbb K$.
\end{definition}

The following three examples will turn out to be special cases of more general results that will be established in Section~\ref{sec:examples}, where
many additional examples of positive commutative monoids that have the \ftp~will be provided.

\begin{example}
    The monoid $\mathbb{B} = (\{0,1\},\vee,0)$ of Boolean truth-values with disjunction has the \ftp.
    To see this, consider a system of equations as in~\eqref{eqn:systemofequations} 
    where $b_1 + \cdots + b_m = c_1 + \cdots + c_n$; moreover, here we have that  each $b_i$ or $c_j$ is a truth-value, 
    and $+$ is $\vee$. This means  that either every $b_i$ and every $c_j$ is equal to $0$,
    or at least one $b_i$ is equal to $1$ and at least one $c_j$ is equal to  $1$. To find a solution, set $x_{ij} = b_i \wedge c_j$ for all $i \in [m]$ 
    and $j \in [n]$, where $\wedge$ is the standard Boolean conjunction.
    It is easy to see that this candidate solution satisfies all equations.
\end{example}

\begin{example} \label{ex:nonnegativereals}
    The monoid $\mathbb{R}^{\geq 0} = (R^{\geq 0},+,0)$ of non-negative reals with addition
     has the \ftp. To see this, consider a system of equations as in~\eqref{eqn:systemofequations} 
    and consider the matrices defined by
    $d_{ij} = b_i c_j / \sum_{k=1}^n c_k$ and $e_{ij} = b_i c_j / \sum_{k=1}^m b_k$ for all $i \in [m]$
    and $j \in [n]$, with the convention that $0/0 = 0$. It is straightforward to see
    that the $d_{ij}$ matrix satisfies all horizontal equations and the $e_{ij}$ matrix satisfies all
    vertical equations. Furthermore, if the instance is balanced so that $b_1 + \cdots + b_m = c_1 + \cdots + c_n$ holds,
    then $d_{ij} = e_{ij}$ and then both matrices are equal and satisfy all equations.
\end{example}

\begin{example} \label{ex:bagmonoid}
  The monoid $\mathbb{N} = (Z^{\geq 0},+,0)$ of non-negative integers with addition
  has the \ftp. This will follow from results established in subsequent sections. For now,
  an appealing but indirect way to see this is to notice that if we write the
  system of equations~\eqref{eqn:systemofequations} in the form $Ax = b$, where $A$ is an $(m+n) \times mn$ matrix
  with $0$-$1$ entries and $b$ is an $(m+n)$-vector with non-negative integer entries, 
  then $A$ is the incidence matrix of a bipartite graph and hence 
  a \emph{totally unimodular} matrix (see Example~1 in page~273 of Schrijver's book \cite{Schrijver-book}).
  The main result about totally unimodular matrices implies that if the linear program given 
  by $Ax = b$ and $x \geq 0$ 
  has a solution over $\mathbb{R}$, then it has a  solution with integer entries (see Corollary~19.2a 
  in~\cite{Schrijver-book} and the discussion immediately following its proof).  
  Since the \ftp~holds for $\mathbb{R}^{\geq 0}$,
  the conclusion of this is that the \ftp~for $\mathbb{N}$ follows from the \ftp~for $\mathbb{R}^{\geq 0}$
  from Example~\ref{ex:nonnegativereals}.
\end{example}

\subsection{Transportation Property and Acyclicity}

With all definitions in place, we are ready to state and prove the main result of this section.

\begin{theorem} \label{thm:allequivalent}
Let $\mathbb{K}$ be a positive commutative monoid. Then, the following statements are equivalent:
\begin{enumerate} \itemsep=0pt
\item[(1)] $\mathbb{K}$ has the \ftp.
\item[(2)] The inner consistency property holds for $\mathbb{K}$-relations.
\item[(3)] Every acyclic hypergraph has
the  \ltgc~for $\mathbb K$-relations. 
\item[(4)] The hypergraph $P_3$ has 
 the \ltgc~for $\mathbb K$-relations.
\end{enumerate}
\end{theorem}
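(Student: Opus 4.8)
The plan is to prove the cycle of implications $(1) \Rightarrow (2) \Rightarrow (3) \Rightarrow (4) \Rightarrow (1)$, since the forward direction around the cycle groups the easier semantic steps together and isolates the one genuinely combinatorial reversal at the end.

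For $(1) \Rightarrow (2)$, suppose $R(X)$ and $S(Y)$ are inner consistent, so $R[X \cap Y] = S[X \cap Y]$. I would construct a witness $T(XY)$ by ``tensoring along the overlap.'' Set $Z = X \cap Y$, and for each $Z$-tuple $t$ in the common marginal, the $X$-tuples projecting to $t$ contribute a vector of $\mathbb{K}$-values (the $b_i$'s) and the $Y$-tuples projecting to $t$ contribute another vector (the $c_j$'s), and by inner consistency these two vectors have the same total sum $R[Z](t) = S[Z](t)$. This is exactly a balanced transportation instance, so the \ftp~supplies a matrix $D^{(t)} = (d^{(t)}_{uv})$ whose rows recover the $R$-side values and whose columns recover the $S$-side values. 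I would define $T(xy) := d^{(t)}_{xy}$ when $x[Z] = y[Z] = t$ (and $T = 0$ on tuples where the $Z$-parts disagree, which contribute nothing), then verify $T[X] = R$ and $T[Y] = S$ by summing each $D^{(t)}$ across columns and rows respectively and summing over $t$.

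For $(2) \Rightarrow (3)$, I would use the join-tree characterization from Theorem~\ref{thm:BFMY}: an acyclic $H$ has a join tree $T$. Given a pairwise consistent collection, I would build a global witness by ordering the edges via the running intersection property and joining them one at a time; at each step I attach $R_i$ to the partial witness $W$ built so far, where $X_i \cap (X_1 \cup \cdots \cup X_{i-1}) \subseteq X_j$ for some earlier $X_j$. The inner consistency property, together with $W[X_j] = R_j$ and the pairwise consistency of $R_i$ with $R_j$ (which via the running intersection transfers to inner consistency of $R_i$ with $W$ on the relevant overlap), lets me consistently merge $R_i$ into $W$ using Lemma~\ref{lem:easyfacts1}(2) to track marginals. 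The implication $(3) \Rightarrow (4)$ is immediate, since $P_3$ is acyclic.

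The main obstacle is the reversal $(4) \Rightarrow (1)$, which must manufacture an arbitrary balanced transportation instance out of a pairwise-consistency configuration on $P_3$. Given vectors $b = (b_1, \ldots, b_m)$ and $c = (c_1, \ldots, c_n)$ with $\sum b_i = \sum c_j =: s$, I would encode them as three $\mathbb{K}$-relations on $P_3 = (\{A,B,C,D\}, \{AB, BC, CD\})$: use a single ``hub'' value on $B$ and $C$ to route all mass through the middle edge (carrying total $s$), let $R_1(AB)$ record the $b_i$ on distinct $A$-values paired with that hub, and let $R_3(CD)$ record the $c_j$ on distinct $D$-values paired with that hub, with $R_2(BC)$ carrying the single value $s$. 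The pairwise consistency of this collection follows from the balance condition $\sum b_i = s = \sum c_j$. A global witness $W(ABCD)$ would then, by the support constraints, concentrate on tuples whose $B,C$ coordinates are the hub, so $W$ restricted to the $A$ and $D$ coordinates is precisely an $m \times n$ matrix over $\mathbb{K}$; the marginal conditions $W[AB] = R_1$ and $W[CD] = R_3$ force its row sums to be $b$ and its column sums to be $c$, yielding the desired transportation solution. The delicate point is arranging the gadget so that the supports force exactly this matrix structure (much as the support analysis in the proof of Proposition~\ref{prop:path3} pinned down the shape of $W$), and confirming that positivity of $\mathbb{K}$ guarantees the hub mass cannot leak onto spurious tuples.
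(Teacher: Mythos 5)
Your implications (1)\,$\Rightarrow$\,(2), (2)\,$\Rightarrow$\,(3), and (3)\,$\Rightarrow$\,(4) match the paper's proof essentially step for step: the per-fiber transportation matrices glued along the common marginal, and the induction along a running-intersection ordering using inner consistency to merge $R_i$ into the partial witness, are exactly the arguments given there. The problem is in (4)\,$\Rightarrow$\,(1), and it is a genuine gap, not a detail. In your hub gadget, consider the pair $R_1(AB)$ and $R_3(CD)$: their attribute sets are disjoint, and by positivity (Lemma~\ref{lem:easyfacts1}) any witness $W(ABCD)$ of their consistency must be supported on tuples $(u_i,h,h',v_j)$; the conditions $W[AB]=R_1$ and $W[CD]=R_3$ then say precisely that the matrix $x_{ij}=W(u_i,h,h',v_j)$ has row sums $b$ and column sums $c$. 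So consistency of $R_1$ and $R_3$ is \emph{literally equivalent} to solvability of the given balanced transportation instance --- the very conclusion you are trying to extract. Your assertion that ``pairwise consistency follows from the balance condition'' is therefore false for exactly the monoids at issue: for $\mathbb{N}_2$ or $\mathbb{R}_1$ and an unsolvable balanced instance, your three relations are simply not pairwise consistent, so hypothesis (4) never applies and the argument is circular. A sanity check confirms this: if balance alone implied pairwise consistency of your gadget, your own support analysis would show every positive commutative monoid has the \ftp, contradicting Proposition~\ref{prop:path3}.

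The paper's gadget is engineered precisely to dodge this. It uses \emph{two} channels and a crossing rather than a single hub: $R(AB)$ carries the supplies $b_i$ on tuples $(u_i,0)$ \emph{and} the demands $c_j$ on tuples $(v_j,1)$; $T(CD)$ carries the supplies on $(1,u_i)$ and the demands on $(0,v_j)$ (note the flipped channel bits); and $S(BC)$ puts the total mass $a$ on both $(0,0)$ and $(1,1)$. With this crossing, the troublesome pair $(R,T)$ has a cheap ``diagonal'' witness supported on $(u_i,0,1,u_i)$ with value $b_i$ and $(v_j,1,0,v_j)$ with value $c_j$ --- each supply is routed to its own copy, so no transportation solution is needed to certify pairwise consistency. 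Only a \emph{global} witness $Y(ABCD)$ is forced by $Y[BC]=S$ to have equal $B$ and $C$ coordinates, killing the diagonal escape route, and its values $d_{ij}=Y(u_i,0,0,v_j)$ then form the required matrix. To repair your proof you would need to replace your single-hub encoding with this (or an equivalent) two-channel crossed construction; the rest of your write-up, including the extraction of the matrix from the global witness and the appeal to positivity to control supports, would then go through as you describe.
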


\begin{proof} We close a cycle of implications: (1) $\Longrightarrow$ (2) $\Longrightarrow$ (3)
$\Longrightarrow$ (4) $\Longrightarrow$ (1).

(1) $\Longrightarrow$ (2).
 Suppose that $\mathbb{K}$ has the \ftp.
   Let~$R(X)$ and~$S(Y)$ be two  inner consistent $\mathbb K$-relations and
   let~$Z = X \cap Y$. For each~$Z$-tuple~$w$ in the support   of~$R[Z] = S[Z]$, let~$u_1,\ldots,u_{m_w}$ be an enumeration of
   the~$X$-tuples that are in the support~$R'$ of~$R$ and extend~$w$,
  and let~$v_1,\ldots,v_{n_w}$ be an enumeration of the~$Y$-tuples that
  are in the support~$S'$ of~$S$ and extend~$w$.
  Let~$b_w = (b_{w,1},\ldots,b_{w,{m_w}})$ be the column vector defined
  by~$b_{w,j} := R(u_j)$ for~$j \in [m_w]$, and
  let~$c_w = (c_{w,1},\ldots,c_{w,{n_w}})$ be the row vector defined
  by~$c_{w,i} := S(v_i)$ for~$i \in [n_w]$. Since $R$ and $S$ are inner consistent, we have that ~$R(w) = S(w)$, hence
  \begin{equation}
  b_{w,1} + \cdots + b_{w,m_w} = c_{w,1} + \cdots + c_{w,n_w}.
  \end{equation}
  By
  the \ftp~of~$\mathbb K$, there exists an~$m_w \times n_w$
   matrix~$M_w = (d_w(i,j) : i \in [m_w], j \in [n_w])$ that has~$b_w$ as column sum and~$c_w$ as row sum.
  Let~$T(XY)$ be the~$\mathbb K$-relation defined for every~$XY$-tuple~$t$
  by~$T(t) := d_w(i,j)$ where~$w = t[Z]$ and~$i$ and~$j$ are such
  that~$t[X] = u_i$ and~$t[Y] = v_j$ in the enumerations of the tuples
  in~$R'$ and~$S'$ that are used in  defining~$b_w$ and~$c_w$.
  For any other~$XY$-tuple~$t$, set~$T(t) := 0$.  It follows from the   definitions that~$T$ is a~$\mathbb K$-relation that witnesses the
  consistency of~$R$ and~$S$.

(2) $\Longrightarrow$ (3).
  Assume that the hypergraph~$H$ is acyclic and therefore it has the
  running intersection property. Hence, there is a
  listing~$X_1,\ldots,X_m$ of its hyperedges such that for
  every~$i \in [m]$ with~$i \geq 2$, there is a~$j \in [i-1]$ such
  that~$X_i \cap (X_1 \cup \cdots \cup X_{i-1}) \subseteq
  X_j$. Let~$R_1(X_1),\ldots,R_m(X_m)$ be a collection
  of~$\mathbb K$-relations that is pairwise consistent.  By induction
  on~$i = 1,\ldots,m$, we show that there is a~$\mathbb K$-relation~$T_i$
  over~$X_1 \cup \cdots \cup X_i$ that witnesses the global
  consistency of the~$\mathbb K$-relations~$R_1,\ldots,R_i$.  For~$i = 1$ the
  claim is obvious by taking~$T_1 = R_1$. Assume then that~$i \geq 2$
  and that the claim is true for all smaller indices.
  Let~$X := X_1 \cup \cdots \cup X_{i-1}$. By the running intersection
  property, let~$j \in [i-1]$ be such that~$X_i \cap X \subseteq X_j$.
  By induction hypothesis, there is a~$\mathbb K$-relation~$T_{i-1}(X)$ that
  witnesses the global consistency of~$R_1,\ldots,R_{i-1}$.  First, we
  show that~$T_{i-1}$ and~$R_i$ are consistent. Since, by assumption,
  the inner consistency property for~$\mathbb K$-relations holds, it suffices
  to show that~$T_{i-1}$ and~$R_i$ are inner consistent, i.e.,
  that~$T_{i-1}[X \cap X_i] = R_i[X \cap X_i]$.  Let~$Z = X \cap X_i$,
  so~$Z \subseteq X_j$ by the choice of~$j$, and
  indeed~$Z = X_j \cap X_i$.  Since~$j \leq i-1$, we
  have~$R_j = T_{i-1}[X_j]$. Since~$Z\subseteq X_j$, we
  have
  \begin{equation}
      R_j[Z] = T_{i-1}[X_j][Z] = T_{i-1}[Z]. \label{eqn:tr1}
  \end{equation}    
  By assumption,
  also~$R_j$ and~$R_i$ are consistent, and if~$W$ is any~$\mathbb K$-relation
  that witnesses their consistency and~$Z = X_j \cap X_i$,
  then
  \begin{equation}
      R_j[Z] = W[X_j][Z] = W[Z] = W[X_i][Z] = R_i[Z]. \label{eqn:tr2}
  \end{equation}
  By transitivity,~\eqref{eqn:tr1} and~\eqref{eqn:tr2} give~$T_{i-1}[Z] = R_i[Z]$, as was to be proved to
  show that~$T_{i-1}$ and~$R_i$ are consistent. Now, let~$T_i$ be
  a~$\mathbb K$-relation that witnesses the consistency of~$T_{i-1}$
  and~$R_i$.  We show that~$T_i$ witnesses the global consistency
  of~$R_1,\ldots,R_i$. Since~$T_{i-1}$ and~$R_i$ are consistent
  and~$T_i$ is a witness, we have~$T_{i-1} = T_i[X]$
  and~$R_i = T_i[X_i]$. Now fix~$k \leq i-1$ and note that
  \begin{equation*}
      R_k = T_{i-1}[X_k] = T_i[X][X_k] = T_i[X_k],
  \end{equation*}
  where the first
  equality follows from the fact that~$T_{i-1}$ witnesses the
  consistency of~$R_1,\ldots,R_{i-1}$ and~$k \leq i-1$, and the other
  two equalities follow from~$T_{i-1} = T_i[X]$ and the fact
  that~$X_k\subseteq X$.  Thus,~$T_i$ witnesses the consistency
  of~$R_1,\ldots,R_i$, which was to be shown.   

(3) $\Longrightarrow$ (4).
This statement is obvious.

(4) $\Longrightarrow$ (1).
Assume that the path-of-length-$3$ hypergraph $P_3$ has the \ltgc~for $\mathbb{K}$-relations. 
Let $(b_1,\ldots,b_m)$ and $(c_1,\ldots,c_n)$ be the two vectors of a balanced instance of the transportation problem for $\mathbb{K}$. Consider
the associated system of equations as in~\eqref{eqn:systemofequations}. Let $a = b_1 + \cdots + b_m = c_1 + \cdots + c_n$. If $a = 0$, then $b_1 = \cdots = b_m = c_1 = \cdots = c_n = 0$ by the positivity of $\mathbb{K}$, and then setting $x_{ij} = 0$ for all $i$ and $j$ we get a solution
to~\eqref{eqn:systemofequations}. Assume then that $a \not= 0$. Based on this instance, we first build three $\mathbb{K}$-relations $R(AB),S(BC),T(CD)$, then we show that they are pairwise consistent, and finally we show how to use any witness of their global consistency to build a solution to the given balanced instance of the transportation problem.
The three $\mathbb{K}$-relations are given by the following tables, where the third column is the annotation value from $\mathbb{K}$ for the tuple on its left:
  \begin{center}
  \begin{tabular}{cccccccccccccccccccccccc}
  $A$ & $B$ & : & $R$ & & $B$ & $C$ & : & $S$ & & $C$ & $D$ & : & $T$ \\
  \cmidrule(lr){1-4}\cmidrule(lr){6-9}\cmidrule(lr){11-14}
  $u_1$ & $0$ & : & $b_1$ & \hspace{1cm} & $0$ & $0$ & : & $a$ & \hspace{1cm} & $1$ & $u_1$ & : & $b_1$ \\
  $\vdots$ & $\vdots$ & & $\vdots$ & & $1$ & $1$ & : & $a$ & & $\vdots$ & $\vdots$ & & $\vdots$ \\
  $u_m$ & $0$ & : & $b_m$ & & & & & & & $1$ & $u_m$ & : & $b_m$ \\
  $v_1$ & $1$ & : & $c_1$ & & & & & & & $0$ & $v_1$ & : & $c_1$ \\
  $\vdots$ & $\vdots$ & & $\vdots$ & & & & & & & $\vdots$ & $\vdots$ & & $\vdots$ \\
  $v_n$ & $1$ & : & $c_m$ & & & & & & & $0$ & $v_n$ & : & $c_n$ \\
  \end{tabular}
  \end{center}
As witnesses to the pairwise consistency of these three $\mathbb K$-relations, consider the following $\mathbb{K}$-relations:
  \begin{center}
  \begin{tabular}{cccccccccccccccccccccccccc}
  $A$ & $B$ & $C$ & : & $U$ & \hspace{1cm} & $B$ & $C$ & $D$ & : & $V$ & \hspace{1cm} & $A$ & $B$ & $C$ & $D$ & : & $W$ \\
  \cmidrule(lr){1-5}\cmidrule(lr){7-11}\cmidrule(lr){13-18}
  $u_1$ & $0$ & $0$   & : & $b_1$ & & 
  $1$   & $1$     & $u_1$ & : & $b_1$ & &
  $u_1$ & $0$ & $1$   & $u_1$ & : & $b_1$ \\
  $\vdots$ & $\vdots$ & $\vdots$ & & $\vdots$ & & $\vdots$ & $\vdots$ & $\vdots$ & & $\vdots$ & & $\vdots$ & $\vdots$ & $\vdots$ & $\vdots$ & & $\vdots$ \\
  $u_m$ & $0$ & $0$   & : & $b_m$ & & 
  $1$   & $1$ & $u_m$ & : & $b_m$ & &
  $u_m$ & $0$ & $1$   & $u_m$ & : & $b_m$ \\
  $v_1$ & $1$ & $1$   & : & $c_1$ & & 
  $0$   & $0$ & $v_1$ & : & $c_1$ & &
  $v_1$ & $1$ & $0$   & $v_1$ & : & $c_1$ \\
  $\vdots$ & $\vdots$ & $\vdots$ & & $\vdots$ & & $\vdots$ & $\vdots$ & $\vdots$ & & $\vdots$ & & $\vdots$ & $\vdots$ & $\vdots$ & $\vdots$ & & $\vdots$ \\
  $v_m$ & $1$ & $1$ & : & $c_m$ & & 
  $0$ & $0$ & $v_m$ & : & $c_m$ & &
  $v_m$ & $1$ & $0$ & $v_m$ & : & $c_m$ \\
  \end{tabular}
  \end{center}
By construction, we have $U[AB] = R$ and $U[BC] = S$, also $V[BC] = S$ and $V[CD] = T$, and $W[AB] = R$ and $W[CD] = T$.
By the assumption that the hypergraph $P_3$ has the \ltgc~for $\mathbb{K}$-relations, there is a $\mathbb{K}$-relation $Y(ABCD)$ that witnesses the global consistency of $R,S,T$. Since $Y[BC] = S$, for every tuple $(a,b,c,d)$ in the support $Y'$ of $Y$, we have $b=c=0$ or $b=c=1$. Similarly, since $Y[AB] = R$, we have that if $b = 0$ then $a = u_i$ for some $i \in [m]$, and since $Y[CD] = T$, we have that if $c = 0$ then $d = v_j$ for some $j \in [n]$. Now, set $d_{ij} := Y(u_i,0,0,v_j)$ for every $i \in [m]$ and $j \in [n]$. For every $i \in [m]$ we have
\begin{equation*}
    \sum_{j \in [n]} d_{ij} = \sum_{j \in [n]} Y(u_i,0,0,v_j) =
    \sum_{(u_i,0,c,d) \in Y'} Y(u_i,0,c,d) = R(u_i,0) = b_i,
\end{equation*}
where the first equality follows from the choice of $d_{ij}$, the
second follows from the above-mentioned properties of the tuples $(a,b,c,d)$ in the support $Y'$ of $Y$, the third follows from $Y[AB] = R$, and the last follows from the choice of $R$. Similarly,
for every $j \in [n]$ we have
\begin{equation*}
    \sum_{i \in [m]} d_{ij} = \sum_{i \in [m]} Y(u_i,0,0,v_j) =
    \sum_{(a,b,0,v_j) \in Y'} Y(a,b,0,v_j) = T(0,v_j) = c_j,
\end{equation*}
with very similar justifications for each step. This proves that 
$D = (d_{ij} : i \in [m], j \in [n])$ is a solution to the balanced instance of the transportation property of $\mathbb{K}$ given by the vectors $(b_1,\ldots,b_m)$ and $(c_1,\ldots,c_n)$, which completes the proof.
\end{proof}

By combining Theorems~\ref{thm:necessary} and~\ref{thm:allequivalent}, we obtain the
following result.

\begin{corollary} \label{cor:generalization} Let~$\mathbb K$ be a positive commutative monoid that has the \ftp. For every hypergraph
$H$, 
  the following statements are
  equivalent:
\begin{enumerate} \itemsep=0pt
\item $H$ is an acyclic hypergraph.
\item $H$ has the \ltgc~for $\mathbb K$-relations.
\end{enumerate}
\end{corollary}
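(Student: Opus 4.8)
The plan is to obtain the corollary as a direct synthesis of Theorems~\ref{thm:necessary} and~\ref{thm:allequivalent}, establishing the two implications of the claimed equivalence separately. No new combinatorial construction is needed here; the entire content has already been packaged into those two results, so the work is just to knit them together at this particular $H$ and $\mathbb K$.

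For the direction that acyclicity implies the \ltgc~for $\mathbb K$-relations, I would invoke the standing hypothesis that $\mathbb K$ has the \ftp. By Theorem~\ref{thm:allequivalent}, the \ftp~(statement~(1)) is equivalent to the assertion that every acyclic hypergraph has the \ltgc~for $\mathbb K$-relations (statement~(3)). Hence the hypothesis on $\mathbb K$ yields statement~(3), and since $H$ is assumed acyclic, I conclude at once that $H$ has the \ltgc~for $\mathbb K$-relations.

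For the converse, that the \ltgc~for $\mathbb K$-relations forces $H$ to be acyclic, I would simply apply Theorem~\ref{thm:necessary} to this $H$ and $\mathbb K$. It is worth flagging that this direction does not use the \ftp~at all: acyclicity is necessary for the \ltgc~over \emph{any} positive commutative monoid, so the transportation hypothesis is spent entirely on the forward implication.

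Since both implications are immediate once the prior theorems are in hand, there is no genuine obstacle at the level of the corollary itself. If one wishes to locate the real work, it lives in the two theorems being combined: the necessity of acyclicity rests on the Tseitin-style family of $\mathbb K$-relations from Lemma~\ref{lem:newtseitin} together with the safe-deletion reductions, while sufficiency for monoids with the \ftp~runs through the inner consistency property and the induction along the running intersection property inside the proof of Theorem~\ref{thm:allequivalent}. The corollary merely records that these two threads meet exactly when $\mathbb K$ has the \ftp.
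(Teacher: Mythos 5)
Your proposal is correct and matches the paper exactly: the paper derives the corollary by combining Theorem~\ref{thm:necessary} (acyclicity is necessary over any positive commutative monoid) with the equivalence (1)~$\Leftrightarrow$~(3) of Theorem~\ref{thm:allequivalent} (the \ftp~yields the \ltgc~for all acyclic hypergraphs), which is precisely your two-implication argument. Your observation that the \ftp~is used only in the forward direction is also consistent with the paper's structure.
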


Since  the \ftp~holds for $\mathbb{B}$ and since the $\mathbb{B}$-relations are 
the ordinary relations, Corollary~\ref{cor:generalization} contains
the Beeri-Fagin-Maier-Yannakakis Theorem~\ref{thm:BFMY} as a special
case. In the next section, we identify
several different classes of positive commutative monoids that have 
the \ftp; therefore, 
Corollary~\ref{cor:generalization} applies to all such monoids.

\section{Monoids with the Transportation Property} \label{sec:examples}
                    
 We now turn to the question of identifying broad classes of positive commutative monoids
that do have the transportation property. We give five different types  of such monoids:
\begin{enumerate} \itemsep=0pt
\item[--] monoids that can be expanded to a semiring with the standard join; 
\item[--] monoids that can be expanded to a semifield with the Vorob'ev join; 
\item[--] monoids to which the \emph{Northwest Corner Method} applies;
\item[--] power monoids; 
\item[--] free commutative monoids. 
\end{enumerate}
For the first two types of monoids, the solution to the system of equations of a balanced instance of
the transportation problem can be obtained using an operation that, when  interpreted on $\mathbb{K}$-relations, 
generalizes the relational join of  ordinary relations (i.e., ${\mathbb B}$-relations) in the first case and the  Vorob'ev join of probability distributions
in the second. For the third type of monoids, the solution is not
obtained using an operation but via a procedural method that we call the \emph{Northwest Corner Method} and comes
inspired by the theory of linear programming.

\subsection{Expansion to a Semiring and the Standard Join} \label{sec:semiring-expansion}

To motivate the concepts and results in this section, let us first consider  ordinary  relations.  
As discussed earlier, the ordinary relations coincide with the $\mathbb B$-relations, where ${\mathbb B} = (\{0,1\},\lor, 0)$ is the Boolean commutative monoid.  Also,  $\mathbb B$ has the inner consistency property and, moreover,  there is a natural witness to the consistency of two consistent $\mathbb B$-relations. Specifically, 
if~$R$ and~$S$ are ordinary relations, then the \emph{relational join}
of~$R$ and~$S$, denoted by~$R \Join S$, is the ordinary relation that
consists of all~$XY$-tuples~$t$ such that~$t[X]$ is in~$R$ and~$t[Y]$
is in~$S$. It is well known and easy to see that if~$R$
and~$S$ are consistent ordinary relations, then~$R \Join S$ is a witness to
their consistency. Note, however, that the relational join is defined using the conjunction $\land$ of two Boolean values, since
\begin{equation}
(R\Join S)(t) = R(t[X]) \land S(t[Y]). 
\end{equation}
This suggests that for some positive commutative monoids ${\mathbb K}=(K, +, 0)$, 
witnesses to the consistency of two $\mathbb K$-relations
may be  explicitly constructed 
 using operations other than the operation $+$ of $\mathbb K$.  As we will see in this section,  certain positive commutative monoids can be shown to have the inner consistency property via an expansion to 
 \emph{semirings}
 with additional properties, where  witnesses to the consistency of two $\mathbb K$-relations can be  explicitly constructed   using the operations in the expansion.
 
\paragraph{Additively Positive Semirings} 
A \emph{semiring}
is a structure ${\mathbb K}=(K,+, \times, 0,1)$ with the following properties:
 \begin{itemize} \itemsep=0pt
     \item $(K,+, 0)$ and $(K,\times,1)$ are commutative monoids;
     \item $\times$ distributes over $+$, i.e., 
     $p\times (q + r) = p\times q + p \times r$, for all $p, q, r \in K$. .
     \item $0$ annihilates, i.e., $0\times p = p\times 0 = 0$, for all $p\in K$. 
 \end{itemize}
An \emph{additively positive semiring} is a semiring ${\mathbb K}=(K,+, \times, 0,1)$ whose additive reduct~$(K,+,0)$ is a positive monoid, i.e., 
$p+q  = 0$ implies that  $p=0$ and $q=0$. 
 
The Boolean semiring ${\mathbb B}= (\{0,1\}, \lor, \land, 0, 1)$, the \emph{bag} semiring
${\mathbb N}=(Z^{\geq 0}, +, \times, 0,1)$ of the non-negative integers, and the  semiring ${\mathbb R}^{\geq 0}= (R^{\geq 0}, +, \times, 0,1)$ of the non-negative real numbers, where $+$ and $\times$ are the standard arithmetic operations, are examples of additively positive semirings. Note that, to keep the notation simple, we used the same symbol ($\mathbb B$, $\mathbb N$, ${\mathbb R}^{\geq 0}$) to denote  both
the original positive commutative monoid and its expansion to a semiring. We will use a similar convention in the sequel.

\paragraph{The Standard Join}
Let~${\mathbb K}=(K,+,\times, 0,1)$ be an additively positive semiring. If~$R(X)$ and~$S(Y)$ are two~$\mathbb K$-relations, then the
\emph{standard~$\mathbb K$-join} of~$R$ and~$S$, denoted by~$R \standardjoin{\mathbb K} S$, is
the~$\mathbb K$-relation~$W(XY)$ defined for every~$XY$-tuple~$t$ by
the equation
\begin{equation}
W(t) = R(t[X]) \times S(t[Y]). \label{eqn:standard-k-join}
\end{equation} 
Clearly, if~$\mathbb K$ is the Boolean semiring $\mathbb B$, then the
standard~$\mathbb K$-join coincides with the relational join. Unfortunately, if $\mathbb K$ is an arbitrary positive semiring, then the standard~$\mathbb K$-join need not always be a 
witness to consistency of two consistent $\mathbb K$-relations. 
For example,  consider the
positive commutative monoid ${\mathbb N}=(Z^{\geq 0}, +, 0)$ of the non-negative integers with addition and its expansion to the semiring ${\mathbb N} =(Z^{\geq 0}, +,\times, 0,1)$, where $+$ and $\times$ are the standard arithmetic operations. 
As pointed out in~\cite{DBLP:conf/pods/AtseriasK21}, the standard $\mathbb N$-join  need not witness the consistency of two consistent $\mathbb N$-relations.
To see this,
 consider the
${\mathbb N}$-relations
\begin{align*}
R(AB) & = \{ (1,2):{1},\; (2,2):{1} \}, \\
S(BC) & = \{ (2,1):{1},\; (2,2):{1} \}.
\end{align*}
Their standard $\mathbb N$-join is
$(R\standardjoin{\mathbb N} S)(ABC) = 
\{(1,2,1):{1}, (1,2,2):{1}, 
(2,2,1):{1}, (2,2,2):{1}\}$,
which clearly does not witness the consistency of  $R$ and $S$. In fact, it is easy to verify that 
the only ${\mathbb N}$-relations
that witness the consistency of $R$ and $S$ are
\begin{align*}
T_1(ABC) = \{ (1,2,2):{1},\; (2,2,1):{1} \}, \\
T_2(ABC) = \{ (1,2,1):{1},\; (2,2,2):{1} \}.
\end{align*}

In what follows, we will pinpoint
the class of additively positive semirings 
for which the inner consistency property holds for $\mathbb K$-relations with the 
standard $\mathbb K$-join 
witnessing the consistency of two consistent $\mathbb{K}$-relations. In such a case, we
say that \emph{the \icp~holds for~$\mathbb{K}$-relations via the
standard~$\mathbb{K}$-join}. 

\paragraph{Characterization} 
Our aim is to characterize the additively positive
semirings $\mathbb{K}$ for which
the inner consistency property holds for $\mathbb{K}$-relations via
the standard $\mathbb{K}$-join. For this we need two definitions.
Let~${\mathbb K} = (K,+,\times,0,1)$ be  a semiring. We say that~$\mathbb K$ is
\emph{additively absorptive} if for all~$p,q \in K$ it holds
that~$p + p \times q = p$. We say that~$\mathbb K$ is \emph{multiplicatively
  idempotent} if for all~$p \in K$ it holds that~$p \times p = p$. Being additively absorptive has three immediate consequences that we now discuss.  First, being additively absorptive is equivalent to having
that~$1+ q=1$ holds, for all~$q\in K$. Second, if~$\mathbb K$ is additively
absorptive, then~$\mathbb K$ is~\emph{additively idempotent},
i.e.,~$p + p = p$, for all~$p \in K$ (take $q=1$ in the identity $p+ p\times q = p$). Third, if $\mathbb K$ is additively absorptive, then $\mathbb K$ is additively positive. Indeed, suppose that $p$ and $q$ are two elements of $K$ such that
$p+q = 0$. Then $p = p + (p + q) = (p + p) + q = p + q = 0$, where the first and last equalities follow from the assumption
that $p+q = 0$, and the second and third equalities follow from associativity and additive idempotence, respectively. 
In a similar manner we get $q = (p+q)+q = p+(q+q) = p+q = 0$, hence $p = q = 0$.

\begin{proposition} \label{prop:absorptiveandstaridempotent}
  Let ${\mathbb K}$ be a
semiring.
  Then the following statements are equivalent.
\begin{enumerate} \itemsep=0pt
\item[(1)] $\mathbb K$ is additively absorptive and multiplicatively idempotent.
\item[(2)] $\mathbb K$ is additively positive and the \icp~holds for $\mathbb{K}$-relations
via the standard $\mathbb K$-join.
\end{enumerate}
\end{proposition}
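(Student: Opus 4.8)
The plan is to reduce both implications to a single computation of the marginals of the standard join. I would first prove the key identity that, for any two $\mathbb{K}$-relations $R(X)$ and $S(Y)$ with $Z = X \cap Y$, the standard join $W = R \standardjoin{\mathbb{K}} S$ satisfies $W[X](u) = R(u) \times S[Z](u[Z])$ for every $X$-tuple $u$, and symmetrically $W[Y](v) = S(v) \times R[Z](v[Z])$ for every $Y$-tuple $v$. This follows by writing the marginal $W[X](u)$ as a sum over the $XY$-tuples $t$ with $t[X] = u$, identifying these tuples with the $Y$-tuples $v$ satisfying $v[Z] = u[Z]$ via $t = uv$, restricting the sum to $v \in \supp(S)$ (the remaining terms vanish because $0$ annihilates), and factoring out the common annotation $R(u)$ by distributivity, so that the residual sum becomes $S[Z](u[Z])$. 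With this identity in hand, both implications reduce to analyzing the equation $R(u) \times S[Z](u[Z]) = R(u)$.

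For (1) $\Rightarrow$ (2), I would first invoke the observation made just before the proposition that additive absorptiveness implies additive positivity, so $\mathbb{K}$ is additively positive. Then, given inner consistent $R(X)$ and $S(Y)$, inner consistency gives $S[Z] = R[Z]$, so $W[X](u) = R(u) \times R[Z](u[Z])$. Writing $p = R(u)$ and letting $q \in K$ be the sum of $R(u')$ over the remaining tuples $u' \in \supp(R)$ with $u'[Z] = u[Z]$ and $u' \neq u$, we have $R[Z](u[Z]) = p + q$, whence $W[X](u) = p \times (p + q) = p \times p + p \times q = p + p \times q = p$, where the third equality uses multiplicative idempotence and the fourth uses additive absorptiveness; when $u \notin \supp(R)$ both sides are $0$. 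Thus $W[X] = R$, and symmetrically $W[Y] = S$, so the standard join witnesses the consistency of $R$ and $S$ and the \icp~holds for $\mathbb{K}$-relations via the standard $\mathbb{K}$-join.

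For (2) $\Rightarrow$ (1), I would extract the two identities from gadget relations over two disjoint singleton attributes $A$ and $B$, so that $Z = \emptyset$ and $S[Z]$ is simply the total annotation mass of $S$; I would choose $A$ and $B$ with domains of size at least two to accommodate the tuples below. To obtain multiplicative idempotence, let $R(A)$ consist of a single tuple of value $p$ and let $S(B)$ consist of a single tuple of value $p$; these are inner consistent, and the key identity gives $W[A] = p \times p$ on that tuple, so the requirement $W[A] = R$ forces $p \times p = p$. To obtain additive absorptiveness, let $R(A)$ consist of two distinct tuples of values $p$ and $q$ and let $S(B)$ consist of a single tuple of value $p + q$; these are again inner consistent, and $W[A]$ evaluated on the first tuple equals $p \times (p + q)$, so $W[A] = R$ forces $p \times (p + q) = p$. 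Combining this with the already-established multiplicative idempotence yields $p = p \times (p + q) = p \times p + p \times q = p + p \times q$, which is precisely additive absorptiveness, completing (1).

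The computations are routine once the marginal identity is available, so the step I expect to require the most care is that identity itself --- in particular, the bookkeeping that replaces the sum over $\supp(W)$ by a sum over $\supp(S)$ (respectively $\supp(R)$) and then factors out the common annotation by distributivity. The only other delicate point is ensuring that the gadgets used in (2) $\Rightarrow$ (1) are legitimate $\mathbb{K}$-relations, which is harmless as long as the chosen attributes have domains of size at least two.
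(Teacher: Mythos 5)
Your proof is correct and takes essentially the same route as the paper's: the same marginal computation via distributivity, multiplicative idempotence, and additive absorptiveness for (1)~$\Rightarrow$~(2), and the same gadget-extraction of the forced identities $p \times p = p$ and $p \times p + p \times q = p$ for (2)~$\Rightarrow$~(1). The only cosmetic differences are that you isolate the factorization $W[X](u) = R(u) \times S[Z](u[Z])$ as a standalone identity and run your gadgets over disjoint unary schemas (so $Z = \emptyset$), whereas the paper performs the same computations inline with relations $R(AB)$ and $S(BC)$ sharing the attribute $B$.
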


\begin{proof} We prove the implications (1)~$\Longrightarrow$ (2) 
and (2)~$\Longrightarrow$~(1).

(1)~$\Longrightarrow$~(2). We argued already that the assumption that $\mathbb{K}$ is additively 
  absorptive implies that $\mathbb{K}$ is additively positive. 
  For the second part, for notational simplicity,
  consider two~$\mathbb K$-relations~$R(AB)$ and~$S(BC)$ such
  that~$R[B] = S[B]$. We will show that the
  standard~$\mathbb K$-join~$R\standardjoin{\mathbb K} S$ witnesses their
  consistency. Setting~$W := R \standardjoin{\mathbb K} S$, we will show
  that~$W[AB] = R$; the proof that~$W[BC] = S$ is similar.  We may
  assume that~$R$ and~$S$ have non-empty support or else, since~$\mathbb K$ is
  additively positive, the assumption~$R[B] = S[B]$ implies that both have empty
  support and then the claim is trivial. Let~$(a,b)$ be a tuple in
  the support of~$R$ and let~$p = R(a,b)$.  Then there are
  elements~$u$ and~$w$ in~$K$ such that~$R(b) = w = S(b)$
  and~$w = p + u$.  Let~$(b,c_1),\ldots,(b,c_m)$ be a list of the tuples
  in the support of~$S$ that join with~$(a,b)$, and
  let~$q_i = S(b,c_i)$ for~$i = 1,\ldots,m$.
  Then~$W(a,b) = \sum_{i=1}^m p \times q_i = p \times \sum_{i=1}^m q_i
  = p \times w$, where the last equality follows from the fact
  that~$R[b] = w = S[b]$. Therefore, we have
  that~$W(a,b) = p \times w = p \times (p + u)= p \times p + p \times u
  = p + p \times u = p$, where the last two equalities follow from the
  assumption that~$\mathbb K$ is both multiplicatively idempotent and
  additively absorptive.

(2)~$\Longrightarrow$ (1). The assumption that $\mathbb{K}$ is additively
  positive makes the definition of the~\icp~apply to $\mathbb{K}$-relations. 
  Assume it holds via the standard $\mathbb{K}$-join.
  We first show that~$\mathbb K$
  is~multiplicatively idempotent.  For this, take an arbitrary
  element~$p$ of~$K$ and consider the~$K$-relations~$R(AB)$
  and~$S(BC)$ given by~$R(a,b) = S(b,c) = p$, where~$a,b,c$ are three
  fixed values in the domains of the attributes~$A,B,C$,
  and~$R(r) = S(s) = 0$ for any other tuples~$r$
  and~$s$. Clearly,~$R[B] = S[B]$. By the hypothesis about~$\mathbb K$, the
  relations~$R$ and~$S$ are consistent and their consistency is
  witnessed by~$R \standardjoin{\mathbb K} S$.  Since~$R \standardjoin{\mathbb K} S$ takes
  value~$p \times p$ on the tuple~$(a,b,c)$ and~$0$ everywhere else,
  we conclude that~$p = p \times p$. Hence, since~$p$ was an 
  arbitrary element of $K$,~it follows that $\mathbb K$ is multiplicatively idempotent.  To show that~$\mathbb K$ is
  additively absorptive, consider two arbitrary elements~$p$ and~$q$
  of~$\mathbb K$ and the~$\mathbb K$-relations~$R(AB)$ and~$S(BC)$ given
  by~$R(a,b) = S(b,c) = p$ and~$R(a',b) = S(b,c') = q$, where~$b$ is a
  fixed value in the domain of~$B$, and~$a,a'$ and~$c,c'$ are fixed
  values in the domains of~$A$ and~$C$, respectively,
  and~$R(r) = S(s) = 0$ for any other tuples~$r$ and~$s$.
  Clearly~$R(b) = p + q = S(b)$ and hence~$R[B] = S[B]$. By the
  hypothesis about~$\mathbb K$, the relations~$R$ and~$S$ are consistent and
  their consistency is witnessed by~$R \standardjoin{{\mathbb K}} S$. 
  Since~$R \standardjoin{{\mathbb K}} S$
  takes value~$p \times p$ on the tuple~$(a,b,c)$, value~$p \times q$
  on the tuple~$(a,b,c')$, and value~$0$ on any other tuple that
  projects to~$(a,b)$, we conclude that~$p = p \times p + p \times q$.
  Since~$\mathbb K$ is multiplicatively idempotent, it follows
  that~$p = p + p\times q$. Hence, since~$p$ and~$q$ were
  arbitrary elements of $K$, it follows that~$\mathbb K$ is additively absorptive.
\end{proof}

Every semiring~${\mathbb K}= (K, +, \times,0,1)$ that is additively
absorptive and multiplicatively idempotent is a bounded distributive
lattice. To see this, recall that a lattice is an algebraic
structure~${\mathbb M} = (M, \vee, \wedge)$ such that the \emph{join} and \emph{meet} operations
$\vee$ and $\wedge$ are
binary, commutative and associative, and satisfy the
absorption laws~$x \vee (x\wedge y) = x$ and~$x \wedge (x \vee y) = x$.
Recall also that a lattice is bounded if it has a least element $0$ and a greatest element $1$ with respect to the partial order $\leq$ defined
by $a\leq b$ if $a\vee b = b$ (equivalently, if $a\wedge b= a$), for all $a,b \in M$.
The first absorption law in the language of~$\mathbb K$
reads~$x + x \times y = x$, which holds for~$\mathbb K$ because $\mathbb K$ is additively
absorptive. 
For the second absorption law, we have
that~$x \times (x + y) = x \times x + x \times y = x + x \times y= x$
where the first equality holds by the distibutivity property for~$\mathbb K$,
the second equality holds by the multiplicative idempotence of~$\mathbb K$,
and the third one holds by the additive absorptiveness of~$\mathbb K$.  We also have that $0$ is the least element of $\mathbb K$ (viewed as a lattice) and $1$ is its greatest element, since $0+ q = q$ and $q+1=1$, for all $q\in K$. Furthermore, it is easy to verify that the converse is true, i.e., every bounded distributive lattice is an additively absorptive and multiplicatively idempotent  semiring. Thus,
the additively absorptive and
multiplicatively idempotent  semirings are precisely the bounded
distributive lattices. 

\begin{example} \label{ex:lattices}
Examples of bounded distributive lattices include the Boolean
semiring~$\mathbb{B} = (\{0,1\},\vee,\wedge,0,1)$, the powerset
semiring~$\mathbb{P}_A = (\mathcal{P}(A),\cup,\cap,\emptyset,A)$ for 
an arbitrary set~$A$, and every  max/min semiring~$\mathbb{M}_A = (A,\max,\min,a,b)$, where
$(A,\leq)$ is a totally ordered set with smallest element $a$  and greatest element $b$. Note that the
max/min semirings contain as special cases the fuzzy
semiring~$\mathbb{F} = ([0,1],\max,\min,0,1)$ and the  \emph{access
control} semirings, which are max/min semirings based on finite linear orders with each element indicating a different level of access control (``confidential'', ``secret'', and so on). Another example is the semiring $\mathbb{PB}(X)=(\textsc{PosBool}(X),\lor, \land, 0, 1)$, where $X$ is a set of variables and $\textsc{PosBool}(X)$ is the set  all \emph{Boolean positive expressions} (i.e., Boolean formulas over $X$ built from $0$, $1$, and variables from $X$ using $\lor$ and $\land$) and where two such  expressions are identified if they are logically equivalent. This semiring has been studied in the context of provenance for database queries (e.g., see \cite{DBLP:journals/mst/Green11}).
\end{example}

For each semiring ${\mathbb K}=(K,+,\times, 0,1)$ considered in Example~\ref{ex:lattices},
the underlying commutative monoid ${\mathbb K}=(K,+, 0)$ is positive, the inner consistency property holds for $\mathbb K$-relations, and 
the standard $\mathbb K$-join witnesses the consistency of two consistent $\mathbb K$-relations.

\subsection {Expansion to a Semifield and the Vorob'ev Join} \label{sec:semifield-expansion}

If the standard~$\mathbb K$-join does not always witness the consistency 
of two consistent~$\mathbb K$-relations, then a natural alternative to consider is what
we call the \emph{Vorob'ev~$\mathbb K$-join}. 
This, however, requires an expansion of the positive commutative monoid to a semifield.
By definition, 
a \emph{semifield} is a structure ${\mathbb K}=(K, +,\times, 0,1)$ with the following properties:
\begin{itemize} \itemsep=0pt
    \item ${\mathbb K}=(K, +,\times, 0,1)$ is a semiring. 
    \item For every  element $p\neq 0$ in $K$, there exists an element $q$ in $K$ such that
    $p\times q = 1 = q\times p$.
\end{itemize}
In other words, a semifield is a semiring such that  $(K\setminus \{0\}, \times, 1)$ is a group. 
Note that if $\mathbb K$ is a semifield, then for every $p\neq 0$, there is exactly one element $q$ such that
$p\times q = 1 = q\times p$ (if there were two such elements $q$ and $q'$, then $p\times q = 1$ implies that $q'\times p \times q = q'$, which implies that $q=q'$). This unique element $q$ is called the \emph{multiplicative inverse} of $p$ and is denoted by $1/p$.  As usual if $q\neq 0$ and $p$ is an arbitrary element of $K$, we will write $p/q$, or~$\frac{p}{q}$, for the element $p\times (1/q)$. 

An \emph{additively positive semifield} is a semifield ${\mathbb K}=(K, +,\times, 0,1)$ in which the underlying additive monoid $(K,+,0)$ is  positive.
Two well known examples of positive semifields are the 
semiring~$\mathbb{R}^{\geq 0} = (R^{\geq 0}, +, \times, 0, 1)$ of
non-negative real numbers and its rational substructure~$\mathbb{Q}^{\geq 0} = (Q^{\geq 0}, +, \times,
0, 1)$.

\paragraph{The Vorob'ev Join} Let ${\mathbb K}=(K, +,\times, 0,1)$ be a semifield.
If~$R(X)$ and~$S(Y)$ are two inner consistent~$\mathbb K$-relations (i.e., they 
satisfy~$R[X \cap Y] = S[X \cap Y]$),  then the \emph{Vorob'ev~$\mathbb K$-join}
of $R$ and $S$, denoted by~$R \vorobyevjoin{{\mathbb K}} S$,
is the~$\mathbb K$-relation~$W(XY)$ defined for every~$XY$-tuple~$t$ by the
equation
\begin{equation*}
W(t) = \frac{R(t[X]) \times S(t[Y])}{R(t[X \cap Y])} =
  \frac{R(t[X]) \times S(t[Y])}{S(t[X \cap Y])} \label{eqn:Vorobevjoin}
\end{equation*}
if~$R(t[X \cap Y]) = S(t[X \cap Y]) \not= 0$, and by~$W(t) = 0$
otherwise.  Note that the Vorob'ev~$\mathbb K$-join of two~$\mathbb K$-relations is
well-defined because  the two~$\mathbb K$-relations $R(X)$ and $S(Y)$ were assumed to be  inner consistent.

We say that \emph{the inner consistency property holds for~$\mathbb{K}$-relations~via the Vorob'ev $\mathbb K$-join} if the inner consistency property holds for $\mathbb K$-relations and, moreover,  the Vorob'ev $\mathbb K$-join witnesses the consistency of two consistent $\mathbb K$-relations. 

\begin{proposition} \label{prop:formalfractionscase}
  If~${\mathbb K}$ is an additively  positive semifield, 
  then the inner consistency property holds for $\mathbb K$-relations
via the Vorob'ev $\mathbb K$-join.
\end{proposition}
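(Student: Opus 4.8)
The plan is to reduce the whole proposition to a single core statement: for any two inner consistent $\mathbb{K}$-relations $R(X)$ and $S(Y)$, the Vorob'ev $\mathbb{K}$-join $W := R \vorobyevjoin{\mathbb{K}} S$ satisfies $W[X] = R$ and $W[Y] = S$, i.e.\ $W$ witnesses their consistency. This one claim yields both halves of the definition at once: it shows directly that inner consistent $\mathbb{K}$-relations are consistent (so the inner consistency property holds), and, since consistent $\mathbb{K}$-relations are always inner consistent, it also shows that the Vorob'ev join witnesses the consistency of any two consistent $\mathbb{K}$-relations. Writing $Z = X \cap Y$, I would first record that inner consistency, $R[Z] = S[Z]$, makes the denominator condition $R(t[Z]) = S(t[Z]) \neq 0$ in the definition reduce to $R(t[Z]) \neq 0$, so that $W$ is well defined, and that its support is contained in the finite set of $XY$-tuples projecting into $\supp(R)$ and $\supp(S)$.

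By symmetry it suffices to prove $W[X] = R$. I would fix an $X$-tuple $x$, set $z = x[Z]$, and evaluate $W[X](x) = \sum_{t : t[X] = x} W(t)$ over all $t \in \tuples(XY)$ with $t[X] = x$ (extending over non-support tuples is harmless, as only the finitely many support tuples contribute a nonzero value). If $R(z) = 0$, then positivity of $\mathbb{K}$ forces $R(x) = 0$, while the definition makes $W(t) = 0$ for every $t$ with $t[Z] = z$; hence both sides vanish. The substantive case is $R(z) = S(z) = p \neq 0$, where every $t$ with $t[X] = x$ shares the numerator factor $R(x)$ and the denominator $p$. Here I would invoke distributivity of $\times$ over $+$ to pull the constant $R(x) \times (1/p)$ out of the sum, leaving $W[X](x) = R(x) \times (1/p) \times \sum_{t : t[X] = x} S(t[Y])$.

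To finish, I would evaluate the remaining sum using the bijection $t \mapsto t[Y]$ between the $XY$-tuples $t$ with $t[X] = x$ and the $Y$-tuples $y$ with $y[Z] = z$: an extension of $x$ is determined by its values on $Y \setminus X$, and since $x$ and any such $y$ agree on $Z$ they join. This gives $\sum_{t : t[X] = x} S(t[Y]) = \sum_{y : y[Z] = z} S(y) = S[Z](z) = S(z) = p$, so that $W[X](x) = R(x) \times (1/p) \times p = R(x)$, as required.

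The crux of the argument, and the one place where the full semifield structure is essential, is the cancellation $(1/p) \times p = 1$ in the last step, which requires $p \neq 0$. Accordingly, I expect the main thing to state carefully is the interplay with positivity: whenever $x \in \supp(R)$ its projection $z = x[Z]$ satisfies $R(z) = S(z) = p \neq 0$, so the division in the Vorob'ev join is legitimate exactly at the tuples where it contributes, and nowhere do the fibers over which one sums collapse to a zero marginal. The underlying algebra is then just one application of distributivity followed by inverse cancellation; the rest is bookkeeping over the two cases $R(z) = 0$ and $R(z) \neq 0$.
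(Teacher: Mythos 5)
Your proof is correct and takes essentially the same route as the paper's: inner consistency makes the denominators nonzero exactly on the relevant fibers, distributivity pulls the constant $R(x) \times (1/p)$ out of the sum, the bijection $t \mapsto t[Y]$ collapses the remaining sum to the $Z$-marginal $S[Z](z) = p$, the semifield inverse cancels it, and symmetry handles $W[Y] = S$. The only cosmetic differences are that you split cases on $R(z) = 0$ versus $R(z) \neq 0$ rather than on whether $x \in \supp(R)$, and that you spell out the (trivial) observation that the witnessing clause for consistent pairs follows because consistent relations are inner consistent.
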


\begin{proof} Suppose that~$R$ and~$S$ are two inner consistent  $\mathbb K$-relations and
  let~$Z = X \cap Y$; i.e.,~$R[Z] = S[Z]$.  Therefore, their Vorob'ev
  $\mathbb K$-join~$W := R \vorobyevjoin{{\mathbb K}} S$ is a well-defined~${\mathbb K}$-relation.  We
  now check that for each~$X$-tuple~$r$, we have~$W[X](r) = R(r)$.  If~$r$
  is not in the support of~$R$, then~$W(t) = 0$ for
  every~$XY$-tuple $t$ with~$t[X] = r$ and
  hence~$W[X](r) = \sum_{t : t[X]=r} 0 = 0 =
  R(r)$. Suppose then that~$r$ is in the support of~$R$; in
  particular, by the assumption that~$R[Z] = S[Z]$ and the hypothesis that $\mathbb K$ is additively positive, we have~$S(t[Z]) = R(t[Z]) \not= 0$ for every~$XY$-tuple~$t$
  such that~$t[X] = r$. Therefore, we
  have 
  \begin{equation*}
  W[X](r) = \sum_{\newatop{t\in W':}{t[X] = r}} R(t[X]) \times
  S(t[Y]) / S(t[Z]) = R(r) \times \sum_{\newatop{t\in W':}{t[X] = r}}
  S(t[Y]) / S(t[Z]). 
  \end{equation*}
  Now note that~$t[Z] = t[X][Z] = r[Z]$
  whenever~$t[X] = r$, and that there is a bijection between the set
  of~$XY$-tuples~$t$ such that~$t[X] = r$ and the set
  of~$Y$-tuples~$s$ such that~$s[Z] = r[Z]$. Therefore, this last
  expression can be rewritten
  as
  \begin{equation*}
  R(r) \times \sum_{\newatop{s \in S':}{s[Z] = r[Z]}} S(s)/S(r[Z]) =
  R(r) \times S(r[Z])/S(r[Z]) = R(r),
  \end{equation*}
  where the last equality follows from the already argued fact that~$S(r[Z]) = S(t[Z]) \not= 0$. 
  This proves $W[X](r) = R(r)$. A symmetric argument shows that for
  each~$Y$-tuple~$s$ we have that~$W[Y](s) = S(s)$, and the proposition is
  proved.
\end{proof}    

\begin{example} \label{exam:tropical}
The semiring~$\mathbb{R}^{\geq 0} = (R^{\geq 0}, +, \times, 0, 1)$ of
non-negative real numbers and its rational
substructure~$\mathbb{Q}^{\geq 0} = (Q^{\geq 0}, +, \times,
0, 1)$ where mentioned before as examples of additively positive semifields. Other well-known examples 
include the tropical semirings, and their smooth variants, the log semirings:
\begin{align}
    & \mathbb{T}_{\min} = ((-\infty,+\infty],\min,+,+\infty,0)  &  & \mathbb{T}_{\max} = ([-\infty,+\infty),\max,+,-\infty,0) 
    \label{eqn:trop} \\
    & \mathbb{L}_{\min} = ((-\infty,+\infty],\oplus_{\min},+,+\infty,0) &  & \mathbb{L}_{\max} = ([-\infty,-\infty),\oplus_{\max},+,-\infty,0)
    \label{eqn:smoothtrop}
\end{align}
where~$x \oplus_{\min} y = -\log(e^{-x} + e^{-y})$ and~$x \oplus_{\max} y = \log(e^{x} + e^{y})$, with the conventions that $e^{-\infty} = 0$ and $\log(0) = -\infty$.  
In all four cases the multiplicative inverse of the semifield is the standard inverse of addition over $(-\infty,+\infty)$.
It is obvious that $\mathbb{T}_{\min}$ is additively positive; furthermore,  $\mathbb{L}_{\min}$ is additively positive because~$-\log(e^{-x} + e^{-y}) = +\infty$ if and only if~$e^{-x} +
e^{-y} = 0$  if and only if~$x = y = +\infty$. Dually, the semirings $\mathbb{T}_{\max}$ and $\mathbb{L}_{\max}$ are  additively positive.
\end{example}

For each semiring ${\mathbb K}=(K,+,\times, 0,1)$ considered in Example \ref{exam:tropical},
the underlying positive commutative monoid ${\mathbb K}=(K,+, 0)$ is positive,
the inner consistency property for $\mathbb K$-relations holds, and the Vorob'ev $\mathbb K$-join witnesses the consistency of two consistent $\mathbb K$-relations.

\subsection{Northwest Corner Method} \label{sec:northwest}

In the previous two sections, we established the inner consistency property for different classes of positive commutative monoids by expanding them to richer algebraic structures. In this section, we will establish the inner consistency property for certain positive commutative monoids without expanding them. There will be a  trade-off, however, in the sense that the witnesses to the consistency of two consistent relations will be obtained via an algorithm, instead of an explicit construction such as the standard join or the Vorob'ev join. In return, the witnessing relations will be \emph{sparse} in that their supports consist of relatively few tuples. This is in contrast to the standard join and the Vorob'ev joins whose supports, in general, consist of a large number of tuples. We will quantify these notions later in this section.

\paragraph{Canonical Order and Cancellativity}
Let ${\mathbb K}=(K,+,0)$ be a positive commutative monoid.
Consider the
binary relation~$\sqsubseteq$ on~$K$ defined, for all~$b,c \in K$,
by~$b \sqsubseteq c$ if and only if there exists some~$a\in K$ such that~$b
+ a = c$. The binary relation~$\sqsubseteq$ is  reflexive and
transitive, and is hence a pre-order, called the \emph{canonical pre-order} of $\mathbb K$. 

\begin{itemize} \itemsep=0pt
\item $\mathbb{K}$ is \emph{cancellative} if $a+b=a+c$ implies $b=c$,
for all $a,b,c \in K$,
\item $\mathbb K$ is \emph{weakly cancellative} if $a+b=a+c$ implies $b=c$ or $b=0$ or $c=0$, for all $a,b,c \in K$,
\item $\mathbb K$ is \emph{totally canonically pre-ordered}
if $b \sqsubseteq c$ or $c \sqsubseteq b$, for all $b,c \in K$.
\end{itemize}

Let us consider some examples before proceeding. The positive commutative monoid ${\mathbb N}=(Z^{\geq 0}, +, 0)$ of the non-negative integers is cancellative and totally canonically preordered; in fact, its canonical pre-order is a total order.  These  properties are also shared by the positive commutative monoids
${\mathbb Q}^{\geq 0}=(Q^{\geq 0}, +,0)$ and
${\mathbb R}^{\geq 0}=(R^{\geq 0}, +,0)$ of the non-negative rational numbers and the non-negative real numbers.

Consider the positive commutative monoid ${\mathbb R}_1=(\{0\}\cup [1,\infty),+,0)$ where the universe
is the set of non-negative reals with a gap in the interval $[0,1]$ as only the endpoints of that interval are maintained..
The operation is the standard addition of the real numbers. This monoid is cancellative, but it is not totally canonically pre-ordered because if $b$ and $c$ are different elements between $1$ and $2$, then neither $b\sqsubseteq c$ nor $c\sqsubseteq b$ holds.  The 3-element positive commutative monoid ${\mathbb N}_2=(\{0,1,2\}, \oplus, 0)$ discussed in Section~\ref{sec:not-sufficient} is totally canonically pre-ordered because $1\oplus 1 = 2$, but it is not weakly cancellative 
because $2\oplus 1 = 2 = 2\oplus 2$ but $1\neq 2$, $2\neq 0$, $1\neq 0$.

\paragraph{Northwest Corner Method}
We will show that if a positive commutative monoid $\mathbb K$ is weakly cancellative and totally canonically pre-ordered, then the inner consistency property for $\mathbb K$-relations holds. In fact, we will establish that every such monoid  has the \ftp~introduced in Section \ref{sec:sufficient}. This will be achieved by using  the \emph{northwest corner method} of linear programming for finding solutions for the 
transportation problem.

Intuitively, the northwest corner method  starts by assigning a value to the variable in the northwest corner of the system of equations, eliminating at least one equation, and iterating this process by considering next the variable in the northwest corner of the resulting system.
Unlike  the case of linear programming,
 here we cannot subtract values; instead, we have to carefully use the assumption  that the monoid is weakly cancellative and totally canonically pre-ordered.

\begin{proposition} \label{prop:northwest}
If $\mathbb K$ is positive commutative monoid that 
is weakly cancellative and totally canonically pre-ordered, 
then $\mathbb K$ has the \ftp.
\end{proposition}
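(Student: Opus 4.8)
The plan is to prove directly that every balanced instance of the transportation problem for $\mathbb{K}$ has a solution, by induction on $m+n$, where $m$ is the number of supplies and $n$ the number of demands. This is a rigorous rendering of the northwest corner method: at each step I inspect the northwest corner cell $(1,1)$, assign to it the ``smaller'' of the residual supply $b_1$ and the residual demand $c_1$ (which I may compare because $\mathbb{K}$ is totally canonically pre-ordered), and then pass to a strictly smaller balanced instance. In the base cases $m=1$ or $n=1$, balance forces the obvious solution ($d_{1j}:=c_j$ for all $j$, respectively $d_{i1}:=b_i$ for all $i$, and zeros elsewhere), and correctness is immediate.

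For the inductive step, assume $m,n\geq 2$. Since $\mathbb{K}$ is totally canonically pre-ordered, either $b_1\sqsubseteq c_1$ or $c_1\sqsubseteq b_1$; by transposing the instance (swapping the roles of rows and columns, which preserves both balance and solvability) I may assume $b_1\sqsubseteq c_1$, so there is some $r\in K$ with $b_1+r=c_1$. I then set $d_{11}:=b_1$ and fill the rest of row $1$ with zeros, exhausting row $1$, and I consider the reduced instance $I'$ with supplies $(b_2,\ldots,b_m)$ and demands $(r,c_2,\ldots,c_n)$. To invoke the induction hypothesis on $I'$, which has $m-1+n<m+n$ rows plus columns, I must verify that it is balanced, i.e.\ that $\sum_{i\geq 2}b_i = r+\sum_{j\geq 2}c_j$. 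Writing $P:=\sum_{i\geq 2}b_i$ and $Q:=r+\sum_{j\geq 2}c_j$, the original balance $\sum_i b_i=\sum_j c_j$ together with $c_1=b_1+r$ rearranges precisely to $b_1+P=b_1+Q$.

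This is where weak cancellativity enters, and it is the crux of the argument: from $b_1+P=b_1+Q$ I obtain that $P=Q$, or $P=0$, or $Q=0$. If $P=Q$, then $I'$ is balanced, so by induction it has a solution $D'$; gluing $D'$ (placed in rows $2,\ldots,m$) with the already-fixed first row yields a solution of the original instance, since the first column then sums to $b_1+r=c_1$ and every other column sum is inherited from $D'$, while all row sums are correct by construction. The remaining two cases are degenerate and are disposed of using positivity. If $P=0$, then $b_2=\cdots=b_m=0$, and setting $d_{1j}:=c_j$ for all $j$ with zeros below solves the instance (balance gives $b_1=\sum_j c_j$). If $Q=0$, then $r=0$ and $c_2=\cdots=c_n=0$, so $c_1=b_1$, and setting $d_{i1}:=b_i$ for all $i$ with zeros to the right solves it (balance gives $\sum_i b_i=c_1$).

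I expect the only genuine difficulty to be organizing exactly these three weak-cancellativity cases and checking that the two degenerate ones are really forced by positivity to collapse all but one supply or all but one demand to $0$; everything else is bookkeeping on row and column sums. It is worth noting that total cancellativity would let me simply delete both the exhausted row and the saturated column and recurse, but since we only have \emph{weak} cancellativity, the subtlety is precisely that cancelling $b_1$ from $b_1+P=b_1+Q$ need not give $P=Q$ outright, which is why the degenerate branches must be handled by hand. As a byproduct of the staircase structure of the construction, the witness produced is sparse: it has at most $m+n-1$ nonzero entries.
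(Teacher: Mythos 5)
Your proof is correct and takes essentially the same approach as the paper's: induction on $m+n$ via the northwest corner method, using totality of the canonical pre-order to compare $b_1$ with $c_1$, assigning $d_{11}:=b_1$, and invoking weak cancellativity on $b_1+P=b_1+Q$ to rebalance the reduced instance. The only organizational difference is that the paper first eliminates all zero supplies and demands (and treats $b_1=c_1$ as a separate case deleting a row and a column at once), so that positivity makes the residual sums nonzero and the degenerate branches of weak cancellativity never arise, whereas you keep possible zeros and dispose of the branches $P=0$ and $Q=0$ inside the induction by exhibiting explicit solutions --- both handle the same subtlety correctly.
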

\begin{proof}
Let $\mathbb{K} = (K,+,0)$ be a monoid that satisfies the hypothesis of the proposition at hand.
We need to show that for every two positive integers $m$ and $n$, every $m$-vector $(b_1,\ldots,b_m) \in K^m$ and every $n$-vector $(c_1,\ldots,c_n) \in K^n$ with $b_1+\cdots+b_m =c_1+\cdots+c_n$, the following system of $m + n$ equations on $mn$ variables has a solution in $\mathbb{K}$. The first $m$ equations are written horizontally, and the next $n$ are written vertically:
\begin{equation*}
\begin{array}{ccccccccc}
x_{11} & + & x_{12} & + & \cdots & + & x_{1n} & =      & b_1  \\
+      &   & +      &   &        &   & +      &        &      \\
x_{21} & + & x_{22} & + & \cdots & + & x_{2n} & =      & b_2  \\
+      &   & +      &   &        &   & +      &        & \\
\vdots &   & \vdots &   & \ddots &   & \vdots &        &  \\
+      &   & +      &   &        &   & +      &        & \\
x_{m1} & + & x_{m2} & + & \cdots & + & x_{mn} & =      & b_m  \\
\shortparallel  &   & \shortparallel     &   &        &   & \shortparallel     &        & \\
c_1    &   & c_2    &   &        &   & c_n    &        &
\end{array}
\end{equation*}
Note that, by the positivity of $\mathbb K$, we may assume that $b_i\neq 0$ and $c_j\neq 0$ for all $i \in [m]$ and $j \in [n]$. Indeed, if, say,  $b_i=0$, then each variable $x_{ij}$ in the $i$-row of the system must take value $0$, hence the equation in that row and all variables appearing in that row can be eliminated. 

We proceed by induction on the sum $m+n$, which is the total number of equations in the system. 
We take the pairs $(m,n)$ with $m = 1$ or $n = 1$ as the base cases of induction.
If $m = 1$, then we can set $x_{1j} = c_j$ for $j = 1,\ldots,n$ and we get a solution since $c_1 + \cdots + c_n = b_1$. 
Similarly, if $n = 1$,
then we can set $x_{i1} = b_i$ for $i = 1,\ldots,m$ and we get a solution since $b_1 + \cdots + b_m = c_1$.
Let then the pair $(m,n)$ be such that $m \geq 2$ and $n \geq 2$, so $k := m + n \geq 4$, and assume that the induction hypothesis holds for all  systems with $m+n<k$.  Let us consider $b_1$ and $c_1$. Since $\mathbb K$
is totally canonically pre-ordered, we have that  $b_1=c_1$ holds or $b_1\sqsubseteq c_1$ holds or
$c_1\sqsubseteq b_1$ holds (more than one of these conditions may hold at the same time).

If $b_1=c_1$,  we set $x_{11}=b_1$, we set  $x_{1j}=0$ for $j = 2,\ldots,n$, and we set $x_{i1}=0$ for $i = 2,\ldots,m$. 
This assignment satisfies the equations 
\begin{equation*}
\begin{array}{ccccccccc}
   x_{11} & + & x_{12} & + & \cdots & + & x_{1n} & = & b_1 \\
   x_{11} & + & x_{21} & + & \cdots & + & x_{m1} & = & c_1. 
\end{array}
\end{equation*}
After eliminating from the other equations the variables that are set to $0$ in these two equations, we are left with the following system
of $m+n-2$ equations on $(m-1)(n-1)$ variables. Again the first $m-1$
equations are written horizontally, and the next $n-1$ are written
vertically:
\begin{equation*}
\begin{array}{ccccccccc}
x_{22} & + & \cdots & + & x_{2n} & =      & b_2  \\
+      &   &        &   & +      &        & \\
\vdots &   & \ddots &   & \vdots &        &  \\
+      &   &        &   & +      &        & \\
x_{m2} & + & \cdots & + & x_{mn} & =      & b_m  \\
\shortparallel     &   &        &   & \shortparallel     &        & \\
c_2    &   &        &   & c_n    &        &
\end{array}
\end{equation*}

We claim that this system is  a 
balanced 
instance of the transportation problem, i.e., $b_2+\cdots +b_m= c_2+\cdots +c_n$. Indeed, we have that $b_1+b_2+\cdots +b_m = c_1+c_2+\cdots +c_n$ and  
$b_1=c_1$, which means that
$b_1+b_2+\cdots +b_m = b_1+c_2+\cdots +c_n$. Since all the $b_i$'s and the $c_j$'s are different from $0$, the positivity of $\mathbb K$ implies that $b_2+\cdots +b_m \neq 0$ and $c_2+\cdots+c_n\neq 0$. Since $\mathbb K$ is weakly cancellative, we conclude that 
 $b_2+\cdots +b_m= c_2+\cdots +c_n$. By induction hypothesis, the preceding system has a solution in $\mathbb K$, hence the original system also has a solution in $\mathbb K$.

 Next assume that $b_1\neq c_1$ and $b_1\sqsubseteq c_1$. This means that there is an element $a\in K$ such that $b_1 +a =c_1$. Moreover, $a\neq 0$ because $b_1\neq c_1$. We now set  $x_{11}=b_1$ and $x_{1j}=0$ for $j = 2,\ldots,n$. This assignment satisfies the equation 
\begin{equation*}
\begin{array}{ccccccccc}
   x_{11} & + & x_{12} & + & \cdots & + & x_{1n} & = & b_1. \\
\end{array}
\end{equation*}
 We eliminate from the other equations the variables that are set to $0$ in this equation, eliminate also $x_{11}$ from the equation of $c_1$, and replace $c_1$ by $a$. This results into the following system of $m+n-1$ equations on $n(m-1)$ variables
\begin{equation*}
\begin{array}{ccccccccc}
x_{21} & + & x_{22} & + & \cdots & + & x_{2n} & =      & b_2  \\
+      &   & +      &   &        &   & +      &        & \\
\vdots &   & \vdots &   & \ddots &   & \vdots &        &  \\
+      &   & +      &   &        &   & +      &        & \\
x_{m1} & + & x_{m2} & + & \cdots & + & x_{mn} & =      & b_m  \\
\shortparallel  &   & \shortparallel     &   &        &   & \shortparallel     &        & \\
a    &   & c_2    &   &        &   & c_n    &        &
\end{array}
\end{equation*}
We claim that this system is  a balanced instance of the  
transportation problem, i.e., we claim that $b_2+\cdots +b_m = a + c_2+\cdots + c_n$. Indeed, we have that $b_1+b_2+\cdots +b_m = c_1+c_2+\cdots +c_n$ and $b_1+a =c_1$, which means that
$b_1+b_2+\cdots +b_m = b_1+ a +c_2+\cdots+c_n$. Since $a\neq 0$ and since all the $b_i$'s and the $c_j$'s are different from $0$, the positivity of $\mathbb K$ implies that
$b_2+\cdots +b_m\neq 0$ and $c_2+\cdots+c_n\neq 0$. Since $\mathbb K$ is weakly cancellative, we conclude that
$b_2+\cdots +b_m = a + c_2+\cdots + c_n$.  By induction hypothesis, the preceding system has a solution in $\mathbb K$, hence the original system also has a solution in $\mathbb K$.

The remaining case $b_1\neq  c_1$ and $c_1\sqsubseteq b_1$ is similar to the previous one with the roles of $b_1$ and $c_1$ exchanged.
\end{proof}

\paragraph{Northwest Corner Joins} By combining the proof of the implication (1) $\Longrightarrow $ (2) in Theorem~\ref{thm:allequivalent} with 
the northwest corner method  described in the proof of Proposition~\ref{prop:northwest},  we   obtain a procedure that computes a witness of the consistency of two consistent $\mathbb{K}$-relations, provided the monoid  $\mathbb{K}$ meets the conditions
of Proposition~\ref{prop:northwest}. We make this procedure explicit in what follows.
Mirroring the earlier state of affairs with the standard join and the Vorob'ev join, here we 
say that \emph{the inner consistency property holds for $\mathbb{K}$-relations
via the northwest corner method}. To be clear, though, it should be noted that in contrast to the standard join 
and the Vorob'ev join considered earlier, 
the witnesses of consistency that will be produced by the northwest corner method will 
not be canonical. In other words, their construction involves some arbitrary choices during 
the execution of the procedure, 
and while any choices will lead to a correct witness of consistency, different
choices may lead to different witnesses. To reflect this multitude of witnesses,
we refer to them as \emph{northwest corner joins}; in plural.

To describe the procedure that computes a witness of the consistency of two inner consistent
$\mathbb{K}$-relations $R(X)$ and $S(Y)$, 
let us assume that the monoid $\mathbb{K} = (K,+,0)$ is fixed at the outset and that it 
is positive, commutative, weakly cancellative,  and totally canonically
pre-ordered. Our goal is to produce a $\mathbb{K}$-relation
$W(XY)$ that witnesses the consistency of $R(X)$ and $S(Y)$,  i.e., $W(XY)$ is 
such that $W[X] = R$ and $W[Y] = S$. Write $X = AB$ and $Y = AC$, where $A,B,C$ are disjoint sets 
of attributes. First we enumerate
the tuples $a_1,\ldots,a_r$ in the supports $R[A]' = S[A]'$ of the marginals on the common attributes, where the equality
between the supports follows
from Lemma~\ref{lem:easyfacts1} and the assumption that $R$ and $S$ are inner consistent,
and $\mathbb{K}$ is positive. 
For each $k = 1,\ldots,r$, we enumerate the $B$-tuples $b_{k1},\ldots,b_{km_k}$
such that $R(a_k,b_{kj}) \not= 0$ for $j = 1,\ldots,m_k$, and the $C$-tuples $c_{k1},\ldots,c_{kn_k}$ such 
that $S(a_k,c_{kj}) \not= 0$ for $j = 1,\ldots,n_k$. Since $R[A] = S[A]$ holds by inner consistency, we have that for each
$k = 1,\ldots,r$ the equality 
\begin{equation}
R(a_k,b_{k1}) + \cdots + R(a_k,b_{km_k}) = S(a_k,c_{k1}) + \cdots + S(a_k,c_{kn_k})
\end{equation}
holds, so we are dealing with a different 
balanced 
instance
of the transportation problem for each $k = 1,\ldots,r$. By applying the northwest corner method as described in the proof of Proposition~\ref{prop:northwest}
to each such instance with $k = 1,\ldots,r$, we find a values $x_{k,ij}$ in $K$ that solve the corresponding system of
equations. From those, we build the $\mathbb{K}$-relation $W(ABC)$ by setting
\begin{equation*}
W(a_k, b_{kj}, c_{kj}) := x_{k,ij}
\end{equation*}
for all $k=1,\ldots,r$,
all $j = 1,\ldots,m_k$, and all $i = 1,\ldots,n_k$, and $W(a,b,c) = 0$ for any other $ABC$-tuple $(a,b,c)$. 
It is a matter of unfolding the definitions to check that this $\mathbb{K}$-relation $W(ABC)$
satisfies $W[AB] = R$ and $W[AC] = S$, hence it witnesses the consistency of $R$ and $S$. We say that $W$
is  \emph{a northwest corner join for $R$ and $S$}. 

As an immediate corollary of Proposition~\ref{prop:northwest} and the description of the 
procedure for computing a northwest corner join, 
we obtain the following proposition.

\begin{proposition} \label{prop:northwestcornerjoinmethod}
If $\mathbb{K}$ is a positive commutative monoid that is weakly cancellative and totally canonically pre-ordered,
then the inner consistency property holds for $\mathbb{K}$-relations via the northwest corner method.
\end{proposition}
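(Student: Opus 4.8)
The plan is to obtain this proposition as an immediate corollary of Proposition~\ref{prop:northwest}, the implication (1)~$\Longrightarrow$~(2) of Theorem~\ref{thm:allequivalent}, and the explicit description of the northwest corner join given in the paragraph above. First I would note that the four hypotheses on $\mathbb{K}$ — positive, commutative, weakly cancellative, and totally canonically pre-ordered — are exactly those of Proposition~\ref{prop:northwest}, so that proposition applies and $\mathbb{K}$ has the \ftp. Invoking the equivalence of statements (1) and (2) in Theorem~\ref{thm:allequivalent} then yields at once that the \icp~holds for $\mathbb{K}$-relations. This settles the first half of the claim; what remains is to justify the \emph{via the northwest corner method} refinement, i.e.\ that the witnesses can always be taken to be northwest corner joins.

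For that refinement I would revisit the construction in the proof of (1)~$\Longrightarrow$~(2). Given two inner consistent $\mathbb{K}$-relations $R(X)$ and $S(Y)$, that argument groups the tuples of $R'$ and $S'$ by their restriction $w$ to $Z = X \cap Y$; for each such $w$ it forms the balanced transportation instance with supply vector $(R(u_j))_j$ and demand vector $(S(v_i))_i$, solves it to obtain a matrix $(d_w(i,j))$, and sets $T(t) := d_w(i,j)$ on the tuple $t$ extending $w$ with $t[X]=u_i$ and $t[Y]=v_j$, and $T(t):=0$ otherwise. The key observation is that the transportation instances appearing here are precisely the per-common-tuple instances isolated in the description of the northwest corner join (with $Z$ playing the role of $A$ and each $w$ the role of some $a_k$). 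Hence, if one solves each of these instances by the northwest corner method furnished by Proposition~\ref{prop:northwest}, the assembled witness $T$ coincides, tuple by tuple, with the northwest corner join $W(ABC)$ described above: both assign the method's value to the tuple extending $w$ by the relevant $B$- and $C$-parts, and both assign $0$ elsewhere. Therefore the \icp~holds for $\mathbb{K}$-relations via the northwest corner method, as required.

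I expect no real obstacle here, since the statement is flagged as an immediate corollary; the only work is matching the bookkeeping of the two constructions. Concretely, I would check that the index sets are identified correctly — that the enumeration of common-attribute tuples used to index the instances is the same in both, which rests on the equality of supports $R[A]' = S[A]'$ guaranteed by Lemma~\ref{lem:easyfacts1} together with the positivity of $\mathbb{K}$ — and that the placement rule $T(t):=d_w(i,j)$ of the Theorem~\ref{thm:allequivalent} proof agrees with the assignment rule of the explicit procedure. Both verifications are routine unfoldings of the definitions, so the proof is short.
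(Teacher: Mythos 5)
Your proposal is correct and follows essentially the same route as the paper: the paper likewise obtains the proposition as an immediate consequence of Proposition~\ref{prop:northwest} together with the explicit procedure that merges the per-common-tuple balanced transportation instances from the proof of (1)~$\Longrightarrow$~(2) in Theorem~\ref{thm:allequivalent} with the northwest corner method. Your bookkeeping checks (the support equality $R[A]'=S[A]'$ via Lemma~\ref{lem:easyfacts1} and positivity, and the agreement of the placement rules) are exactly the verifications the paper leaves implicit.
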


As indicated earlier, 
the witness $W$ that is obtained from applying the northwest corner method to $R$ and $S$ is not canonically defined
in the sense that its definition depends on the choice of the 
orders in the enumerations $b_{k1},\ldots,b_{km_k}$ and $c_{k1},\ldots,c_{kn_k}$ featuring above. One of the advantages of 
the northwest corner method, however, 
is that it always produces a \emph{sparse} $\mathbb{K}$-relation in the sense of the following
proposition.

\begin{proposition} \label{prop:sparse}
  Let $\mathbb{K}$ be a positive commutative monoid such that
  the inner consistency property for~$\mathbb{K}$-relations
  via the northwest corner method.
  Let $R(X)$ and $S(Y)$ be two inner consistent $\mathbb{K}$-relations,
  and let $W$ be a northwest corner join for $R$ and $S$. Then
  the support size $|W'|$ of $W$ is bounded by the sum of
the support sizes $|R'|$ and $|S'|$, i.e., 
  \begin{equation}
|W'| \leq |R'|+|S'|. \label{eqn:uppderounb}
  \end{equation}
\end{proposition}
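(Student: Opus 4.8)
The plan is to trace the construction of the northwest corner join $W$ and to count, block by block, how many nonzero values it produces. Recall from the description preceding the proposition that $W$ is assembled from $r$ independent balanced transportation instances, one for each common tuple $a_k$ in the shared marginal support $R[A]' = S[A]'$ (here $X = AB$, $Y = AC$, and $A$ is the set of common attributes). The $k$-th instance has the $m_k$ supplies $R(a_k,b_{k1}),\ldots,R(a_k,b_{km_k})$ and the $n_k$ demands $S(a_k,c_{k1}),\ldots,S(a_k,c_{kn_k})$, all of which are nonzero by the choice of the enumerations. The whole argument then reduces to the following claim: the northwest corner method, applied to a balanced $m \times n$ instance in which every supply and every demand is nonzero, produces at most $m+n-1$ nonzero values.

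I would prove this claim by induction on $m+n$, following exactly the case analysis in the proof of Proposition~\ref{prop:northwest}. The key observation is that each corner step assigns the single nonzero value $x_{11} = b_1$ and fills the rest of the eliminated row and/or column with zeros. That $x_{11} \neq 0$, and that the recursive subproblem again has all supplies and demands nonzero, follow from the preliminary reduction that removes zero supplies and demands: in the two strict cases the replacement value $a$ with $b_1 + a = c_1$ (or $c_1 + a = b_1$) is nonzero precisely because $b_1 \neq c_1$. For the base cases $m=1$ or $n=1$, the method writes down exactly $n \leq m+n-1$ or $m \leq m+n-1$ nonzero values. For $m,n \geq 2$: if $b_1 = c_1$, the step emits one nonzero value and recurses on an $(m-1)\times(n-1)$ instance, giving at most $1 + (m+n-3) = m+n-2$; in the two strict cases the step emits one nonzero value and recurses on an $(m-1)\times n$ or $m\times(n-1)$ instance, giving at most $1 + (m+n-2) = m+n-1$. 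In every case the bound $m+n-1$ holds.

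Finally I would sum over the $r$ instances. Since the tuples of $R'$ are partitioned by their projection onto $A$, and by Lemma~\ref{lem:easyfacts1} the blocks are indexed exactly by $a_1,\ldots,a_r$, with the block for $a_k$ having size $m_k$, we get $\sum_{k=1}^r m_k = |R'|$; symmetrically $\sum_{k=1}^r n_k = |S'|$. Adding the per-instance bounds therefore yields
\begin{equation*}
|W'| \;\leq\; \sum_{k=1}^r (m_k + n_k - 1) \;=\; |R'| + |S'| - r \;\leq\; |R'| + |S'|,
\end{equation*}
which is exactly~\eqref{eqn:uppderounb}.

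The step I expect to require the most care is the per-instance induction. One must verify that each recursive call again meets the hypotheses (balanced, with all supplies and demands nonzero) so that the clean ``one nonzero value per step'' accounting is valid throughout, and that entries zeroed out at one stage are never overwritten later. Both facts are immediate from the description of the method, but they are precisely what makes the sharp bound $m+n-1$ come out, rather than something of order $mn$.
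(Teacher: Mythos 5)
Your proof is correct and follows essentially the same approach as the paper's: the paper likewise traces the northwest corner method block by block over the enumeration $a_1,\ldots,a_r$ of $R[A]'=S[A]'$, observes that each iteration eliminates at least one row or column while contributing exactly one tuple to the support of $W$, and sums the per-block counts using $\sum_k m_k = |R'|$ and $\sum_k n_k = |S'|$. Your explicit induction even yields the slightly sharper classical per-block bound $m_k+n_k-1$ (hence $|W'|\leq |R'|+|S'|-r$), whereas the paper contents itself with the cruder count of at most one tuple per row or column, i.e.\ $m_k+n_k$ per block.
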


\begin{proof}
    Consider the procedure that computes $W$ from $R$ and $S$ as described above. Write $X = AC$ and $Y = BC$,
    where $A,B,C$ are disjoint sets of attributes. In the proof of Proposition~\ref{prop:northwest} applied to the system 
corresponding to $a_k$, where $a_1,\ldots,a_r$ is the enumeration of $R[A]'=S[A]'$, at each iteration at least one row or column (or both) is eliminated
while adding exactly one tuple in the support of $W$. At the base cases, 
either the single remaining row is eliminated while adding
one tuple in the support of $W$ for each remaining column, or
the single remaining column is eliminated while adding one tuple in the support
of $W$ for each remaning row. Thus, for each separate $k$ at most one tuple for each row or column is added,
which gives the bound in~\eqref{eqn:uppderounb}.
\end{proof}

The sparsity of the support size $|W'|$ of any northwest corner join $W$ for $R$ and $S$
contrasts with the standard join, and with the Vorob'ev join, whose support sizes
could grow multiplicatively as in~$|R'|\cdot|S'|$.

Finally, we point out that for most examples of positive monoids, the operations that are involved in the
computation of a northwest corner join $W$ for $R$ and $S$ can be performed efficiently. In particular, this the case
for the monoid $\mathbb{N} = (Z^{\geq 0},+,0)$ of the natural numbers with addition when the numbers
are represented in binary notation. This is the prime example of a positive commutative monoid that has
the transportation property via the northwest corner method. We discuss this example along with several 
others next.

\begin{example} \label{ex:bagmonoidnw}
Since the positive commutative monoid ${\mathbb N}=(Z^{\geq 0},+,0)$ of the non-negative integers is cancellative and totally canonically ordered, Proposition \ref{prop:northwest} implies that $\mathbb N$ has the inner consistency property via the northwest corner method, 
hence every acyclic hypergraph has the \ltgc~for $\mathbb N$-relations; the latter property was established via a different argument in \cite{DBLP:conf/pods/AtseriasK21}.  
An example of similar flavor to $\mathbb{N}$ is the positive monoid~$\mathbb{N}/b^{\mathbb{N}} =
  (\{m/b^n : m,n \in \mathbb{N}\}, +, 0)$ of terminating
  fractions in base~$b$, where~$b \geq 2$ is a natural number. This monoid is additively cancellative and totally canonically ordered; in fact, its canonical order is the natural order of the rational
  numbers restricted to the terminating fractions.  The non-negative reals $\mathbb{R}^{\geq 0} = (R^{\geq 0},+,0)$
  and the  non-negative rationals $\mathbb{Q}^{\geq 0} = (Q^{\geq 0},+,0)$ are also positive, totally
  ordered, and additively cancellative commutative monoids.
\end{example}

\begin{example}
For an application of a different flavor, consider the positive commutative monoid ${\mathbb M}_2=(\{0,1,2\},\oplus', 0)$, where
$1\oplus' 1=2$, $1\oplus' 2 =1= 2\oplus' 1$, and $2\oplus' 2=2$.
It is easy to see that ${\mathbb M}_2$ is weakly cancellative (but not cancellative) and totally canonically pre-ordered. Thus,
${\mathbb M}_2$ has the inner consistency property and every acyclic hypergraph has the \ltgc~for ${\mathbb M}_2$-relations, 
unlike the positive commutative monoid ${\mathbb N}_2=(\{0,1,2\},\oplus,0)$.
\end{example}

\begin{example}
The additive monoids of the tropical
  semirings~$\mathbb{T}_{\min}$ and~$\mathbb{T}_{\max}$ from~\eqref{eqn:trop} are 
  non-examples since they are not weakly cancellative: 
  if $a,b,c \in (-\infty,+\infty]$ are such that $b \not= c$ and $a < b < c < +\infty$, then $\min(a,b) = \min(a,c)$,
  yet $b \not= c$ and $b \not= +\infty$ and $c \not= +\infty$. The~$\max$
  case is dual.
  In contrast, the additive monoids of the 
log semirings~$\mathbb{L}_{\min}$ and~$\mathbb{L}_{\max}$ from~\eqref{eqn:smoothtrop}, seen
  as smooth approximations of~$\mathbb{T}_{\min}$
  and~$\mathbb{T}_{\max}$, are totally canonically ordered
  and additively cancellative.
  For~$\mathbb{L}_{\min}$, the canonical
order~$\sqsubseteq$ is the reverse order~$\geq$
on~$(-\infty,+\infty]$, which is total. To see this, observe that for
all~$x,y \in (-\infty,+\infty]$ we have that~$x \sqsubseteq y$ if and
only if there exists~$z \in (-\infty,+\infty]$ such
that~$-\log(e^{-x}+e^{-z}) = y$, which happens if and only if there
exists~$z \in (-\infty,+\infty]$ such that~$e^{-y}-e^{-x} = e^{-z}$,
which is the case if and only if~$e^{-y}-e^{-x} \geq 0$, and hence if
and only if~$x \geq y$. The equivalence in which~$z$ drops out from
the equation holds by the combination of the following three facts:
first,~$e^{-z}$ is a non-negative real for every~$z \in
(-\infty,+\infty]$; second,~$e^{-y}-e^{-x}$ is a finite non-negative
real whenever~$x \geq y$; and, third, each finite non-negative real
number~$r$ can be put in the form~$e^{-z}$ for~$z = \log(1/r)$, which
is a value in~$(-\infty,+\infty]$, if we use the convention that~$\log(1/0) =
+\infty$. Further,~$\mathbb{L}_{\min}$ 
is additively cancellative since~$-\log(e^{-x} + e^{-z}) = -\log(e^{-y} +
e^{-z})$ if and only if~$e^{-x} + e^{-z} = e^{-y} + e^{-z}$, and hence
if and only if~$x = y$ because~$e^{-z}$ is finite for every~$z \in
(-\infty,+\infty]$. As usual, the cases of~$\mathbb{T}_{\max}$ and~$\mathbb{L}_{\max}$ 
are dual.
\end{example}

\begin{example}
Finally, consider next the non-negative version~$\mathbb{L}_{\min}^{\geq 0} =
([0,+\infty],\oplus_{\min},+,+\infty,0)$ of~$\mathbb{L}_{\min}$, and its
 dual, the non-positive version~$\mathbb{L}_{\max}^{\leq 0} =
 ([-\infty,0],\oplus_{\max},+,-\infty,0)$ of~$\mathbb{L}_{\max}$. The additive
   monoids of these are positive,
  canonically totally ordered, and additively
   cancellative. For~$\mathbb{L}_{\min}^{\geq 0}$, the canonical order
   is also the reverse natural order on~$[0,+\infty]$. To see this,
   follow the same argument as in the proof for its version over all
   reals noting that, if~$x,y \in [0,+\infty]$,
   then~$|e^{-y}-e^{-x}|\leq 1$. Since each real number~$r$ in the
   interval~$[0,1]$ can be put in the form~$e^{-z}$ for~$z =
   \log(1/r)$, which is in~$[0,+\infty]$ since~$r \in [0,1]$, the claim
 follows.
It should be pointed out that, unlike its version over all
reals~$\mathbb{L}_{\min}$, the non-negative log
semiring~$\mathbb{L}_{\min}^{\geq 0}$ is not a semifield 
because its multiplicative part, the addition of the real
numbers restricted to~$[0,+\infty]$, is not a group
on~$[0,+\infty]$. Furthermore, its additive part, the
operation~$\oplus_{\min}$ restricted to~$[0,+\infty]$, is not
absorptive. This means that $\mathbb{L}_{\min}^{\geq 0}$ is an example
of a semiring that is not covered by the cases considered in earlier sections. 
\end{example}

\subsection{Products and Powers} \label{sec:powers}

The purpose of this section is to show that the standard product composition of positive commutative
monoids inherits the \ftp~from its factors. This will give a way to produce new examples of monoids with the transportation property from
old ones.

Recall from Section~\ref{sec:prelims} the definition of the product monoid $\prod_{i \in I} \mathbb{K}_i$
for a finite or infinite indexed sequence of monoids 
$(\mathbb{K}_i : i \in I)$.
It is easy to check that if each $\mathbb{K}_i$ is a positive commutative monoid, then their product $\prod_{i \in I} \mathbb{K}_i$
is also 
a positive commutative monoid. Actually, many properties of the factors
are preserved in the product, except an important one: the canonical order of the product is not total in general,
even if that of each factor is. Because of this, the product construction 
will constitute a different source of monoids for which the transportation property cannot be derived from the constructions seen so far.

\paragraph{Powers and Finite Support Powers}
Recall from Section~\ref{sec:prelims} the definition of the power construction~$\mathbb{K}^I$.
We will need a variant $\mathbb{K}^I_\mathrm{fin}$ of $\mathbb{K}^I$, which we call  the \emph{finite support power} of $\mathbb{K} = (K,+,0)$.
Its elements are the \emph{finite support maps}
from the index set $I$ to the base set $K$. 
More precisely, the finite support
power $\mathbb{K}^I_\mathrm{fin}$ is the monoid whose
base set is the set of all maps $f : I \to K$ of finite support, i.e., the maps
for which $f^{-1}(0)$ is co-finite, with addition $f+g$ of two maps $f$ and $g$
defined also pointwise as in~\eqref{eqn:pointwise}.
Observe that if $f$ and $g$ have finite support, then $f+g$
also has finite support and, therefore, the operation is well defined. The neutral element of the power ${\mathbb K}^I$ is the
constant $0$ map, which of course has finite support. In the sequel, we treat maps $f : I \to K$ and indexed
sequences $f = (f(i) : i \in I) \in K^I$ interchangeably.

\begin{proposition} \label{prop:product} 
Let $I$ be a finite or infinite non-empty index set
and let $\mathbb{K}$ be a positive commutative monoid. 
The following statements
are equivalent:
\begin{enumerate} \itemsep=0pt
\item[(1)] $\mathbb{K}$ has the \ftp,
\item[(2)] $\mathbb{K}^I$ has the \ftp,
\item[(3)] $\mathbb{K}^I_\mathrm{fin}$ has the \ftp.
\end{enumerate}
\end{proposition}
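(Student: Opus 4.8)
The plan is to establish all three equivalences around statement (1), proving the pair of implications (1)~$\Rightarrow$~(2) and (1)~$\Rightarrow$~(3) by one argument and the two converses (2)~$\Rightarrow$~(1) and (3)~$\Rightarrow$~(1) by another, since within each pair the reasoning is essentially identical. The guiding observation is that, because addition in both $\mathbb{K}^I$ and $\mathbb{K}^I_{\mathrm{fin}}$ is defined pointwise as in~\eqref{eqn:pointwise}, an instance of the transportation problem over a power decouples into one independent instance over $\mathbb{K}$ per coordinate.

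For (1)~$\Rightarrow$~(2) I would begin with a balanced instance over $\mathbb{K}^I$ given by column and row vectors $b=(b_1,\ldots,b_m)$ and $c=(c_1,\ldots,c_n)$ with entries in $\mathbb{K}^I$, so each $b_i$ and $c_j$ is a map $I\to K$. Reading the balance equation $b_1+\cdots+b_m = c_1+\cdots+c_n$ at a fixed coordinate $\iota\in I$, the scalars $(b_1(\iota),\ldots,b_m(\iota))$ and $(c_1(\iota),\ldots,c_n(\iota))$ form a balanced instance over $\mathbb{K}$; invoking the \ftp~of $\mathbb{K}$ yields a solution matrix $(d^{\iota}_{ij})$ for each $\iota$. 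I would then assemble a single matrix over $\mathbb{K}^I$ by setting $d_{ij}(\iota):=d^{\iota}_{ij}$, after which the required row and column sums follow by a purely pointwise verification.

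The implication (1)~$\Rightarrow$~(3) uses the same recipe, but now the $b_i$ and $c_j$ have finite support and I must check that the assembled solution again has finite support, so that it lands in $\mathbb{K}^I_{\mathrm{fin}}$. This is the one genuinely non-routine point, and it is exactly where positivity of $\mathbb{K}$ enters. At any coordinate $\iota$ lying outside the finite set of coordinates at which some $b_i$ or $c_j$ is nonzero, the induced instance over $\mathbb{K}$ is the all-zero instance, so every row sum of the local solution equals $0$; positivity then forces every entry $d^{\iota}_{ij}$ to be $0$. Hence each $d_{ij}$ vanishes off a fixed finite set and is therefore finitely supported. A pleasant byproduct is that only finitely many coordinates call for an actual choice of local solution, so this direction needs no appeal to the axiom of choice, in contrast with (1)~$\Rightarrow$~(2), where infinitely many choices may be required.

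For the converses (2)~$\Rightarrow$~(1) and (3)~$\Rightarrow$~(1) I would exploit that $\mathbb{K}$ embeds into the power through a single coordinate. Fixing any $\iota_0\in I$ (here the non-emptiness of $I$ is used), and given a balanced instance $b\in K^m$, $c\in K^n$ over $\mathbb{K}$, I would lift it by placing $b_i$ and $c_j$ at coordinate $\iota_0$ and $0$ elsewhere; these lifts have support contained in $\{\iota_0\}$, hence lie in $\mathbb{K}^I_{\mathrm{fin}}$ as well, which is precisely why a single argument serves both converses. The lifted instance is balanced, so (2), respectively (3), supplies a solution over the power, and projecting every entry to coordinate $\iota_0$ — which is a monoid homomorphism — recovers a solution over $\mathbb{K}$. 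Combining the four implications yields (1)~$\Leftrightarrow$~(2) and (1)~$\Leftrightarrow$~(3), and therefore the full equivalence, with the finite-support bookkeeping in (1)~$\Rightarrow$~(3) being the only step that demands real care.
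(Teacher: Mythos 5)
Your proof is correct and follows essentially the same route as the paper's: the forward directions decouple the transportation instance coordinatewise using the pointwise definition of addition, the converses embed $\mathbb{K}$ into the power at a single fixed coordinate of the non-empty index set, and positivity of $\mathbb{K}$ handles the finite-support bookkeeping. The differences are only organizational: you arrange four implications around statement (1) where the paper closes the cycle (3) $\Rightarrow$ (1) $\Rightarrow$ (2) $\Rightarrow$ (3), and in your converse you simply project the power solution at the distinguished coordinate (a monoid homomorphism), which slightly streamlines the paper's argument that the lifted solution must, by positivity, vanish at all other coordinates.
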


\begin{proof} We close a cycle of implications (3) $\Longrightarrow$ (1) $\Longrightarrow$ (2) $\Longrightarrow$ (3).

(3) $\Longrightarrow$ (1).
First observe that $\mathbb{K}$ is isomorphic to a substructure of $\mathbb{K}^I_\mathrm{fin}$: consider the embedding $a \mapsto \hat{a}$ that sends $a \in K$ to the map $\hat{a} : I \to K$ defined by $\hat{a}(k_0) := a$ for some fixed index $k_0 \in I$
and $\hat{a}(k) := 0$ for every other index $k \in I \setminus \{k_0\}$. With this embedding, every balanced instance $b= (b_1,\ldots,b_m)$ and $c = (c_1,\ldots,c_n)$ of the transportation problem for $\mathbb{K}$  lifts to a balanced instance $\hat{b} = (\hat{b}_1,\ldots,\hat{b}_m)$ and $\hat{c} = (\hat{c}_1,\ldots,\hat{c}_n)$
of the transportation problem for $\mathbb{K}^I_\mathrm{fin}$. By~(3), this instance has
a solution, say $u = (u_{ij} : i \in [m], j \in [n])$, where each $u_{ij}$ is
an indexed sequence of finite support, say $u_{ij} = (u_{ij}(k) : k \in I)$. Furthermore, since  $\hat{b}_i(k) = \hat{c}_j(k) = 0$ for all $k \in I \setminus \{k_0\}$ and $\mathbb{K}$ is positive, we must  have that $u_{ij}(k) = 0$ for all $k \in I \setminus \{k_0\}$,
since  $u$ is a solution.
This means that~$u$ is indeed of the form $(\hat{d}_{ij} : i \in [m], j \in [n])$ where $d_{ij} := u_{ij}(k_0)$. Setting $D := (d_{ij} : i \in [m], j \in [n])$ we get
a solution to the balanced  instance
of the transportation problem for $\mathbb{K}$ given by $b$ and $c$, which proves that~(1) holds.

(1) $\Longrightarrow$ (2).
Let $b = (b_1,\ldots,b_m)$ and $c = (c_1,\ldots,c_n)$ 
be a balanced instance of the transportation problem for $\mathbb{K}^I$, where each $b_i$ and $c_j$ is an indexed sequence, say $b_i = (b_i(k) : k \in I)$ and $c_j = (c_j(k) : k \in I)$. We proceed by defining a solution component by component.
For each $k \in I$, the pair of vectors $b(k) := (b_1(k),\ldots,b_m(k))$ and $c(k) := (c_1(k),\ldots,c_n(k))$ is a balanced instance of the  transportation problem for
$\mathbb{K}$. By~(1), each such instance has a solution, say $d(k) = (d_{ij}(k) : i \in [m], j \in [n])$. It follows that the collection of indexed sequences 
$d := (d_{ij} : i \in [m], j \in [n])$, where $d_{ij} := (d_{ij}(k) : k \in I)$, is a solution to
the balanced instance of the transportation problem for  $\mathbb{K}^I$ given by $b$ and $c$,
which proves that~(2) holds.

(2) $\Longrightarrow$ (3).
Let $b = (b_1,\ldots,b_m)$ and $c = (c_1,\ldots,c_n)$ 
be a balanced instance of the transportation problem for $\mathbb{K}^I_{\mathrm{fin}}$, i.e., $b_i = (b_i(k) : k \in I)$ and 
$c_j = (c_j(k) : k \in I)$ have finite support and the balance condition holds. View this as a balanced 
instance of the transportation problem for $\mathbb{K}^I$ and, by~(2), let $d = (d_{ij} : i \in [m], j \in [n])$ be 
a solution over $\mathbb{K}^I$. Then, by the finite support condition on the $b_i$ and $c_j$ we have 
$d_{ij}(k) = 0$ for all but finitely many $k \in I$ because $\mathbb{K}$ is positive. This means that $d$ is then 
also a solution over $\mathbb{K}^I_{\mathrm{fin}}$, which proves that~(3) holds.
\end{proof}

\paragraph{Component-Based Join and its Sparsity}
Let $\mathbb{K}$ be a positive commutative monoid for which
the \icp~holds for $\mathbb{K}$-relations, and 
let $\Join_{\mathbb{K}}$ be a join operation that produces a witness of the consistency
of any two inner consistent $\mathbb{K}$-relations, i.e., if $R$ and $S$ are $\mathbb{K}$-relations
that are inner consistent, then $R$ and $S$ are consistent and $R \Join_{\mathbb{K}} S$ witnesses
their consistency. We say that the \icp~holds for $\mathbb{K}$-relations
\emph{via the join operation $\Join_{\mathbb{K}}$}.

Consider now the power monoids $\mathbb{K}^I$ and $\mathbb{K}^I_{\mathrm{fin}}$ for an index set $I$. 
The proof of the implications (1)~$\Longrightarrow$~(2)~$\Longrightarrow$~(3) 
in Proposition~\ref{prop:product} proceeds 
component by component.
In turn, by inspecting the proof of the implication (1)~$\Longrightarrow$~(2)
in Theorem~\ref{thm:allequivalent},
this means that if the~\icp~holds for $\mathbb{K}$-relations via a join operation
$\Join_{\mathbb{K}}$, then
the same join operation can be applied component by component 
to witness the consistency of any two inner consistent
$\mathbb{K}^I$-relations $R$ and $S$, or two inner consistent 
$\mathbb{K}^I_{\mathrm{fin}}$-relations~$R$ and~$S$. 
The result will
be denoted by $R \componentwisejoin{I}{\mathbb{K}} S$ and will 
be described more explicitly in the proof of Proposition~\ref{prop:sparseproduct} below,
where it is called the \emph{component-wise join} of $R$ and $S$.
In the terminology above, we say
that the \icp~holds for $\mathbb{K}^I$-relations, or $\mathbb{K}^I_{\mathrm{fin}}$-relations respectively, 
via the component-wise join~$\componentwisejoin{I}{\mathbb{K}}$. Furthermore, as we will see,
the sparsity of the witnesses of consistency of the factors
may be preserved in the following sense.

For a positive real number $c$, two consistent $\mathbb{K}$-relations $R(X)$ and $S(Y)$,
and a $\mathbb{K}$-relation $W(XY)$ that witnesses their consistency, we
say that $W$ is an \emph{$c$-sparse witness} if 
\begin{equation}
    |W'| \leq (|R'|+|S'|)c. \label{eqn:sparse}
\end{equation}
In Example~\ref{ex:bagmonoid}, we have seen that the bag monoid $\mathbb{N}$ has the inner 
consistency property via the northwest corner method and, hence, by Proposition~\ref{prop:sparse}, any
two inner consistent bags have a $1$-sparse witness of consistency. 

We say that the inner consistency property for $\mathbb{K}$-relations holds \emph{with sparse witnesses} 
if there exists a positive real number
$c$ such that for any two inner consistent $\mathbb{K}$-relations $R(X)$ and $S(Y)$ there is a $\mathbb{K}$-relation
$W(XY)$ that is an $c$-sparse witness of consistency of $R(X)$ and $S(Y)$.
If the $c$-sparse witness $W$ can be chosen as $R \Join_{\mathbb{K}} S$ for a join operation $\Join_{\mathbb{K}}$, then
we say that the join operation 
$\Join_{\mathbb{K}}$ \emph{produces sparse witnesses}, or that it \emph{produces $c$-sparse witnesses}, when the
$c$-factor is important.

\begin{proposition} \label{prop:sparseproduct}
Let $I$ be a finite or infinite non-empty index set
and let $\mathbb{K}$ be a positive commutative monoid such that
the inner consistency property holds for $\mathbb{K}$-relations via a join operation $\Join_{\mathbb{K}}$.
Then, the inner consistency property holds for $\mathbb{K}^I_{\mathrm{fin}}$-relations 
via the component-wise join operation $\componentwisejoin{I}{\mathbb{K}}$.
Furthermore, if the join operation $\Join_{\mathbb{K}}$ produces $c$-sparse witnesses for some positive
real $c$, then the component-wise join operation $\componentwisejoin{I}{\mathbb{K}}$ produces $cd$-sparse witnesses 
$R \componentwisejoin{I}{\mathbb{K}} S$ where $d$ is any 
bound on the maximum number of non-zero components 
in the annotation of any tuple in the (finite) supports of the $\mathbb{K}^I_{\mathrm{fin}}$-relations
$R$ or $S$. In particular, if $I$ is finite and the~\icp~holds for $\mathbb{K}$-relations with sparse
witnesses, then the~\icp~holds for $\mathbb{K}^I$-relations with sparse witnesses.
\end{proposition}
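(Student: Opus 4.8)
The plan is to reduce everything to the factor monoid $\mathbb{K}$ by decomposing each $\mathbb{K}^I_{\mathrm{fin}}$-relation into its $I$-many components. For a $\mathbb{K}^I_{\mathrm{fin}}$-relation $R(X)$ and an index $k \in I$, let $R^{(k)}(X)$ denote the $\mathbb{K}$-relation given by $R^{(k)}(t) := R(t)(k)$. The first step is to observe that marginals commute with taking components: because the sum in~\eqref{eqn:marginal} is computed pointwise in $\mathbb{K}^I_{\mathrm{fin}}$, for every $Z \subseteq X$ we have $(R[Z])^{(k)} = (R^{(k)})[Z]$, the point being that the extra zero summands contributed by tuples $r \in R'$ with $R(r)(k) = 0$ do not change the sum. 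A direct consequence is that $R(X)$ and $S(Y)$ are inner consistent if and only if $R^{(k)}$ and $S^{(k)}$ are inner consistent for every $k \in I$; only the forward direction is needed below.

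Next I would make the component-wise join explicit. Assuming $R$ and $S$ are inner consistent, each pair $R^{(k)},S^{(k)}$ is inner consistent, so $W^{(k)} := R^{(k)} \Join_{\mathbb{K}} S^{(k)}$ is a well-defined witness with $W^{(k)}[X] = R^{(k)}$ and $W^{(k)}[Y] = S^{(k)}$. I then set $R \componentwisejoin{I}{\mathbb{K}} S =: W$ to be the map $W(t)(k) := W^{(k)}(t)$. Before this is legitimate I must check that $W$ is a genuine $\mathbb{K}^I_{\mathrm{fin}}$-relation: each annotation $W(t)$ must have finite support. This follows from positivity, since $W^{(k)}(t) \neq 0$ forces $R^{(k)}(t[X]) = W^{(k)}[X](t[X]) \neq 0$, i.e.\ $R(t[X])(k) \neq 0$, and $R(t[X])$ has finite support; hence only finitely many components of $W(t)$ are non-zero. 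The witnessing identities $W[X] = R$ and $W[Y] = S$ then drop out immediately from the commutation of marginals and components, as $(W[X])^{(k)} = (W^{(k)})[X] = R^{(k)}$ for every $k$, and symmetrically for $Y$. This already proves the first assertion, that the \icp~holds for $\mathbb{K}^I_{\mathrm{fin}}$-relations via $\componentwisejoin{I}{\mathbb{K}}$.

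For the sparsity bound, the key identity is $W' = \bigcup_{k \in I}(W^{(k)})'$, valid because $W(t) \neq 0$ exactly when $W^{(k)}(t) \neq 0$ for some $k$. Using the $c$-sparsity of each $\Join_{\mathbb{K}}$ and then swapping the order of summation, I would estimate
\[
|W'| \;\leq\; \sum_{k \in I} |(W^{(k)})'| \;\leq\; c \sum_{k \in I}\bigl(|(R^{(k)})'| + |(S^{(k)})'|\bigr) \;=\; c\Bigl(\sum_{t \in R'} |\supp(R(t))| + \sum_{s \in S'}|\supp(S(s))|\Bigr),
\]
where the final equality uses $\sum_{k} |(R^{(k)})'| = \sum_{t \in R'}|\{k : R(t)(k) \neq 0\}|$. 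Bounding each inner support size by $d$ yields $|W'| \leq cd(|R'| + |S'|)$, as required, and re-proves that $W$ is finitely supported. For the concluding clause, when $I$ is finite we have $\mathbb{K}^I = \mathbb{K}^I_{\mathrm{fin}}$ and every annotation is a map on a finite set, so $d \leq |I|$ is a fixed constant; the hypothesis that the \icp~holds for $\mathbb{K}$-relations with sparse witnesses supplies a join operation with a uniform constant $c$, and then $cd \leq c|I|$ is again uniform, giving the \icp~for $\mathbb{K}^I$-relations with sparse witnesses.

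The hard part will be the bookkeeping around finiteness and the summation swap: although $I$ may be infinite, only finitely many of the component witnesses $W^{(k)}$ are non-empty, a fact guaranteed by the finite support of the annotations of $R$ and $S$ together with positivity of $\mathbb{K}$. This is what makes both the union and the interchange of summations legitimate and ensures that $W$ is finitely supported. Everything else is a routine unfolding of the definition of the marginal and of $c$-sparsity.
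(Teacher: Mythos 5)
Your proof is correct, but it takes a genuinely different route from the paper's. You decompose each $\mathbb{K}^I_{\mathrm{fin}}$-relation into its $I$-indexed family of component $\mathbb{K}$-relations $R^{(k)}$ over the \emph{original} schemas, apply $\Join_{\mathbb{K}}$ once per index $k$, and reassemble; the paper instead uses an attribute-encoding trick: it introduces a fresh attribute $C$ with domain $I$, packs $R$ and $S$ into single honest $\mathbb{K}$-relations $R_0(XC)$ and $S_0(YC)$ via $R_0(r,i) := R(r)(i)$, applies $\Join_{\mathbb{K}}$ exactly once to get $W_0 = R_0 \Join_{\mathbb{K}} S_0$, and decodes $W(t)(i) := W_0(t,i)$; sparsity there follows from $|R_0'| \leq d|R'|$, $|S_0'| \leq d|S'|$, and $|W'| \leq |W_0'| \leq c(|R_0'| + |S_0'|)$, yielding the same $cd$ bound as your double-counting identity $\sum_k |(R^{(k)})'| = \sum_{t \in R'} |\supp(R(t))|$. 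The paper's encoding buys economy: since $R_0$ and $S_0$ are finitely supported $\mathbb{K}$-relations, a single application of $\Join_{\mathbb{K}}$ suffices, no infinite sums or unions over $I$ ever appear, and finiteness of $W$ is immediate; this encoded construction is in fact the paper's \emph{official} definition of $\componentwisejoin{I}{\mathbb{K}}$ in~\eqref{eqn:productjoin}. Your construction buys fidelity to the informal ``component-by-component'' description, and in exchange you must (and correctly do) supply the extra bookkeeping: the commutation of marginals with components, and the positivity arguments showing that only finitely many $W^{(k)}$ are non-empty and that each annotation $W(t)$ has finite support, which legitimizes the summation swap. A small bonus of your route: the paper asserts without proof that its encoded join ``agrees with'' componentwise application, which is delicate for an abstract join operation (the slice of $W_0$ at $C = i$ is \emph{a} witness for $R^{(i)}, S^{(i)}$ by positivity, but need not equal $R^{(i)} \Join_{\mathbb{K}} S^{(i)}$); your direct definition sidesteps this issue entirely, at the cost of proving witnesshood componentwise rather than in one shot.
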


\begin{proof}
Suppose that $\mathbb{K}$ is a positive commutative monoid
such that the inner consistency property holds for $\mathbb{K}$-relations
via a join operation $\Join_{\mathbb{K}}$. Let $I$ be a  
finite or infinite non-empty 
index set and consider the finite support power monoid
$\mathbb{K}^I_\mathrm{fin}$. Let $R(X)$ and $S(Y)$
be two 
$\mathbb{K}^I_\mathrm{fin}$-relations that are inner consistent. 
In this proof we offer a more explicit description of the component-wise join
$R \componentwisejoin{I}{\mathbb{K}} S$ of $R$ and $S$,
and then use this more explicit description to analyze its sparsity.

First we
define two new $\mathbb{K}$-relations $R_0(X,C)$ and $S_0(Y,C)$, where
$C$ is a new attribute that does not appear in $XY$ and has the
index set $I$ as its domain of values, i.e., $\mathrm{Dom}(C) = I$.
These new $\mathbb{K}$-relations are populated by setting
\begin{align}
    R_0(r,i) := R(r)(i) \;\;\;\text{ and }\;\;\; S_0(s,i) := S(s)(i)
    \label{eqn:defr0s0}
\end{align}
for every $X$-tuple $r$, every $Y$-tuple $s$, and every index $i \in I$.
Observe that $R_0$ and $S_0$ are proper $\mathbb{K}$-relations, i.e.,
their supports are finite because the supports of $R$ and $S$ are finite, and each element $f$ 
in $\mathbb{K}^I_\mathrm{fin}$ has, by definition, 
finite support as a function that maps each index $i \in I$ to an element 
$f(i)$ of $\mathbb{K}$.
We claim that, since $R$ and $S$ are inner consistent, so are $R_0$ and $S_0$;
indeed, by setting $Z = X \cap Y$, we have 
\begin{equation}
    R_0[Z](t,i) = \sum_{r : r[Z] = t} R(r)(i) =
    R[Z](t)(i) = S[Z](t)(i) = \sum_{s : s[Z] = t} S(s)(i) =
    S_0[Z](t,i) \label{eqn:calculationinn}
\end{equation}
for every $Z$-tuple $t$ 
and every index $i \in I$.
The point of the definition of $R_0$ and $S_0$ is that they encode the 
$\mathbb{K}^I_\mathrm{fin}$-relations
$R$ and $S$ as $\mathbb{K}$-relations in a way that from a
$\mathbb{K}$-relation $W_0$ that witnesses the consistency of $R_0$
and $S_0$, it is possible to produce a 
$\mathbb{K}^I_\mathrm{fin}$-relation
$W$ that
witnesses the consistency of $R$ and $S$. Concretely, 
if we take the join $W_0 = R_0 \Join_{\mathbb{K}} S_0$ that
we assumed to exist as witness of the consistency of $R_0$ and $S_0$, 
then the 
$\mathbb{K}^I_\mathrm{fin}$-relation
$W$ that works is the one defined by the equation
\begin{equation}
    W(t)(i) = W_0(t,i) = (R_0 \Join_{\mathbb{K}} S_0)(t,i). \label{eqn:productjoin}
\end{equation}
for every $XY$-tuple $t$ and every index $i \in I$.
It is easy
to see that this agrees with
what we earlier described as applying the join operation $\Join_{\mathbb{K}}$
component by component; i.e., the component-wise join $R \componentwisejoin{I}{\mathbb{K}} S$
of $R$ and $S$.

For the sparsity analysis
first note that, by the choice of $d$, 
the $\mathbb{K}$-relations $R_0$ and $S_0$
have support sizes bounded by $|R'|d$ and $|S'|d$, respectively.
It follows that $R_0 \Join_{\mathbb{K}} S_0$ is a $c$-sparse witness of their consistency,
which means that its support size is at most $(|R'|+|S'|)cd$.
The $cd$ bound on the sparsity of $R \componentwisejoin{I}{\mathbb{K}} S$ 
now follows from the definition of the component-based join in~\eqref{eqn:productjoin}.
\end{proof}

Next we discuss examples of monoids for which the inner consistency
property can be derived using the product construction. We start with various collections of monoids of polynomials with coefficients over a monoid $\mathbb{K}$ and variables from a set of indeterminates $X$.

\begin{example} \label{ex:polynomials} \textit{Monoids of Polynomials}. Let $\mathbb{K}[x]$ be the monoid of
formal univariate polynomials with coefficients in the 
monoid $\mathbb{K}$ and a single indeterminate variable $x$.
More broadly, let $\mathbb{K}[X]$ be the monoid of
formal multivariate polynomials $\mathbb{K}[X]$ with
coefficients in the monoid and indeterminates in the set $X$. Here, $X$ is
a finite or infinite indexed set of commuting
variables, or indeterminates.
To view $\mathbb{K}[x]$ and $\mathbb{K}[X]$ as product monoids of the form $\mathbb{K}^I_\mathrm{fin}$, 
in both cases the indexed set $I$ is taken
as the collection of all monomials; that is to say, $I$ is $1,x,x^2,x^3,
\ldots$ in the univariate case, and $I$ is  the collection of monomials 
$X^\alpha$ in the multivariate case, where $\alpha : X \to \mathbb{N}$ is a map 
that takes each indeterminate to its degree with the condition that the \emph{total degree} $\sum_{x \in \alpha'} \alpha(x)$
is finite, where $\alpha'$ is the support of $\alpha$. The notation $X^\alpha$ is then a shorthand for 
the formal monomial $\prod_{x \in \alpha'} x^{\alpha(x)}$, where $\prod$ is a formal product operation for indexed sets. 
With this
notation, the polynomials in $\mathbb{K}[X]$ take
the form of formal sums
\begin{equation*}
\sum_{m \in c'} c(m) m,
\end{equation*}
where $c : I \to K$ is a coefficient map of 
finite support $c'$, where $I$ is the set of monomials. In this monoid, 
addition is defined component-wise on the coefficients:
\begin{equation*}
\sum_{m\in c'} c(m) m + \sum_{m \in d'} d(m) m = \sum_{m \in c' \cup d'} (c(m) + d(m)) m.
\end{equation*}
The same idea can be applied to polynomials of restricted types by restricting the indexed set $I$ of monomials. For example, the collection $\mathbb{K}[X]_{\mathrm{m}}$ of \emph{multilinear polynomials} with coefficients in $\mathbb{K}$ can be obtained by restricting $I$ to the set of monomials $X^{\alpha}$ that have $\alpha(x) \in \{0,1\} $ for each $x \in X$. Similarly, for an integer $d$, the collection $\mathbb{K}[X]_{\leq d}$ of \emph{total degree-$d$ polynomials} is obtained by restricting $I$ to the set of monomials $X^{\alpha}$ that have $\sum_{x \in \alpha'} \alpha(x) \leq d$. The collection $\mathbb{K}[X]_d$ of \emph{degree-$d$ forms} is obtained by restricting $I$ to the set of monomials $X^{\alpha}$ with $\sum_{x \in \alpha'} \alpha(x) = d$. The special case $\mathbb{K}[X]_1$ is the collection of \emph{linear forms} on the variables $X$ with coefficients in $\mathbb{K}$. The monoid $\mathbb{N}[X]_1$ will feature prominently in Section~\ref{sec:freemonoid}. Note that the elements in $\mathbb{N}[X]_1$ can be identified with the finite support maps $c : X \to Z^{\geq 0}$ that assign a non-negative integer to each indeterminate.

For all these examples, if $\mathbb K$ has the transportation property, so do the various monoids of polynomials $\mathbb{K}[x]$, $\mathbb{K}[X]$, $\mathbb{K}[X]_m$, etc., by
Proposition~\ref{prop:product}. Similarly, if the~\icp~holds for $\mathbb{K}$-relations with sparse witnesses, then the sparsity of witnesses is inherited for $\mathbb{K}[X]$-relations annotated by polynomials with few non-zero coefficients, by Proposition~\ref{prop:sparseproduct}.
\end{example}

\begin{example} \textit{Powersets revisited}. 
An example of a different flavour is the powerset monoid
$\mathbb{P}(A) = (\mathscr{P}(A),\cup,\emptyset)$ of a finite 
set $A$. This monoid is isomorphic to 
the product $\mathbb{B}^A$, 
where $\mathbb{B} = (\{0,1\},\vee,0)$ is the Boolean monoid, and
$A$ is viewed as a finite index set.
Note that it in this case it makes no difference whether we consider
$\mathbb{B}^A$ or $\mathbb{B}^A_{\mathrm{fin}}$ because the index set is 
finite and, therefore, any indexed sequence has finite support.

Similarly, the monoid
$\mathbb{P}_{\mathrm{fin}}(A) = (\mathscr{P}_{\mathrm{fin}}(A),\cup,\emptyset)$
of finite subsets of a countably infinite set $A$ is isomorphic to $\mathbb{B}^A_\mathrm{fin}$. It is also isomorphic
to the monoid $\mathbb{B}[X]$ of formal multivariate
polynomials with coefficients in $\mathbb{B}$ from the previous paragraph.
\end{example}

\begin{example} \textit{Additive monoids of provenance semirings}.
The semiring $\mathbb{N}[X]$ of formal multivariate polynomials with coefficients in $\mathbb{N}$ is the \emph{most informative} member of a well-studied hierarchy of \emph{provenance semirings} in database theory - see  Figure~\ref{fig:diagram}.

\begin{figure}
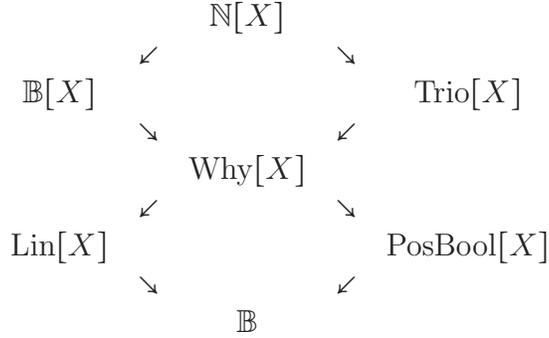

\begin{equation*}
\begin{array}{ccccc}
                  &          & \mathbb{N}[X]   &                     &                     \\
                  & \swarrow &                 & \searrow            &                     \\
 \mathbb{B}[X]    &          &                 &                     & \mathrm{Trio}[X]    \\
                  & \searrow &                 & \swarrow            &                     \\
                  &          & \mathrm{Why}[X] &                     &                     \\
                  & \swarrow &                 & \searrow            &                     \\
 \mathrm{Lin}[X] &          &                 &                     & \mathrm{PosBool}[X] \\
                  & \searrow &                 & \swarrow            &                     \\
                  &          & \mathbb{B}      &                     &  
\end{array}
\end{equation*}
\caption{The provenance semirings from \cite{DBLP:conf/pods/GreenKT07}. In this diagram, an arrow $\mathbb{K}_1 \rightarrow \mathbb{K}_2$ means that there is a surjective semiring homomorphism from $\mathbb{K}_1$ to $\mathbb{K}_2$.}
\label{fig:diagram}
\end{figure}

The $\mathrm{Trio}[X]$ semiring has a technical definition (see \cite{DBLP:journals/mst/Green11}) but it is easily seen to be equivalently defined
as $\mathbb{N}[X]_{\mathrm{m}}$, the semiring of multilinear multivariate polynomials with
coefficients in $\mathbb{N}$. The $\mathrm{Why}[X]$ semiring is
equivalently defined as $\mathbb{B}[X]_{\mathrm{m}}$, the semiring of multilinear multivariate polynomials with coefficients in $\mathbb{B}$. The $\mathrm{Lin}[X]$ semiring is defined to have $P_{\mathrm{fin}}(X) \cup \{\bot\}$ as its base set, where $P_{\mathrm{fin}}(X)$ denotes the collection of finite subsets of $X$ and $\bot$ is a fresh element, with addition and multiplication both defined as the union of sets, except for $\bot$ which is treated as the neutral element of addition and as the absorptive element of multiplication. Finally, the $\mathrm{PosBool}[X]$ semiring has as base set the collection of positive Boolean formulas  with variables in $X$ and constants $1$ and $0$ for true and false, identified up to logical equivalence. Its operations are the standard disjunction and conjunction of formulas for addition and multiplication, respectively.

For the questions of interest in this paper, only the additive monoid structure of these semirings matters. It should be clear that $\mathbb{N}[X]$ and $\mathrm{Trio}[X]$ have the additive structure of $\mathbb{N}^I_\mathrm{fin}$ for an appropriate index set $I$, and, likewise, $\mathbb{B}[X]$ and $\mathrm{Why}[X]$ have the additive structure of $\mathbb{B}^I_\mathrm{fin}$ again for appropriate index set $I$. Thus, the additive monoids of these four cases are covered by Proposition~\ref{prop:product}, which means that these monoids have the transportation  property. The additive structure of $\mathrm{Lin}[X]$ is somewhat peculiar, but it is not hard to check that if it is alternatively expanded with the intersection of sets for its multiplicative structure, viewing $\bot$ as a second copy of the empty set, then we get an additively absorptive and multiplicatively idempotent semiring, which is then covered by Proposition~\ref{prop:absorptiveandstaridempotent}. Similarly, $\mathrm{PosBool}[X]$ is covered in the same way and therefore the additive monoids of these two semirings also have the transportation property. Finally, we argued already that the Boolean semiring $\mathbb{B}$ has the transportation property, which completes all cases in the diagram of Figure~\ref{fig:diagram}.
\end{example}

\subsection{The Free Commutative Monoid} \label{sec:freemonoid}

For this section, recall the basic definitions of universal algebra
concerning homomorphisms, subalgebras, products and varieties of monoids
as they were presented in Section~\ref{sec:prelims}. An important result of 
universal algebra states that varieties 
have \emph{universal objects}, referred to as \emph{free algebras}. 
We state this in the special case of
monoids, but first we need two definitions. 

Let $\mathcal{C}$
be a class of monoids. Note that so far we do not require $\mathcal{C}$ to be a variety. 
Let $\mathbb{K}(X) = (K,+,0)$ be a monoid which is generated by a finite or infinite set 
$X \subseteq K$
of generators; this means that each $a \in K$ can be written in the form $t(a_1,\ldots,a_n)$ for some~$n \geq 0$ and~$a_1,\ldots,a_n \in X$, 
where $t(a_1,\ldots,a_n)$ denotes the result of evaluating an
expression formed by composing the constants~$0$ and~$a_1,\ldots,a_n$ with 
the binary operation $+$. We say that $\mathbb{K}(X)$ has the \emph{universal mapping property for~$\mathcal{C}$ 
over $X$} if for every~$\mathbb{M} = (M,+,0)$ in~$\mathcal{C}$ and every map $g : X \to M$
there is a homomorphism $h : K \to M$ which extends $g$ (see Definition~10.5 in \cite{DBLP:books/daglib/0067494}).

With these definitions, now we can state the result that we need from universal algebra. The general
theorem is due to Birkhoff and here we state only its specialization to varieties of monoids: 
For every finite or infinite set $X$ of \emph{indeterminates} (also called \emph{variables} or \emph{free generators}), 
and for every variety~$\mathcal{C}$ of monoids, there is a monoid $\mathbb{F}_{\mathcal{C}}(X)$ in $\mathcal{C}$ 
that is generated by $X$ and has the universal mapping property for $\mathcal{C}$ over $X$ 
(see Theorems~10.10 and~10.12 in~\cite{DBLP:books/daglib/0067494}). Furthermore, $\mathbb{F}_{\mathcal{C}}(X)$ 
is, up to isomorphism, the unique monoid $\mathbb{K}(Y)$ in $\mathcal{C}$ that is generated by a set $Y$ of
generators of cardinality~$|Y|=|X|$ and
has the universal mapping property for $\mathcal{C}$
over $Y$ (see Exercise~6 in Chapter~II.10 in~\cite{DBLP:books/daglib/0067494}).
Since we care only for commutative monoids, which form a variety of monoids, 
we write $\free{X}$ for $\mathbb{F}_{\mathcal{C}}(X)$, when
$\mathcal{C}$ is the variety of commutative monoids, and we refer to it as \emph{the free commutative
monoid generated by $X$}.

It turns out that, as we argue below, the free commutative monoid generated by $X$ has an explicit description:
it is precisely the monoid that we called $\mathbb{N}[X]_1$ in Section~\ref{sec:powers}, i.e., the monoid of linear forms on the indeterminates $X$ with non-negative integer coefficients. 
One consequence of this is that the free commutative monoid $\free{X}$ is always positive. Another consequence is that it has the \ftp.
A third consequence that is inherited from this 
is that any two $\free{X}$-relations that are inner consistent
have a sparse witness of consistency, when the set $X$ of generators is finite. We collect the first
two properties in the following proposition.

\begin{proposition} \label{prop:freeftp}
    For every set $X$ of indeterminates, the free commutative monoid generated by $X$ is isomorphic to $\mathbb{N}[X]_1$, i.e., $\free{X} \cong \mathbb{N}[X]_1$, and is a positive commutative monoid
    that has the \ftp. 
\end{proposition}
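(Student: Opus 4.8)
The plan is to establish the isomorphism $\free{X} \cong \mathbb{N}[X]_1$ first, and then to read off positivity and the \ftp\ as immediate consequences of results proved earlier in the paper. For the isomorphism I would invoke the uniqueness clause of the free-algebra theorem quoted above: since $\free{X}$ is, up to isomorphism, the \emph{unique} commutative monoid generated by a set of cardinality $|X|$ that enjoys the universal mapping property for commutative monoids over that set, it suffices to verify that $\mathbb{N}[X]_1$ is generated by $X$ and has the universal mapping property for commutative monoids over $X$.

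For the generation step I would use the concrete description of $\mathbb{N}[X]_1$ as the finite-support power $\mathbb{N}^X_{\mathrm{fin}}$ recorded in Example~\ref{ex:polynomials}, identifying each indeterminate $x \in X$ with the linear form $x$ itself, i.e.\ the finite-support map $e_x : X \to Z^{\geq 0}$ sending $x$ to $1$ and every other indeterminate to $0$. Every linear form $\sum_{x \in c'} c(x)\,x$ of finite support $c'$ is then the finite sum obtained by adding each generator $e_x$ exactly $c(x)$ times, so $X$ generates $\mathbb{N}[X]_1$.

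The crux of the argument is the universal mapping property, and this is the one step meriting care. Given a commutative monoid $\mathbb{M} = (M,+,0)$ and a map $g : X \to M$, I would define $h : \mathbb{N}[X]_1 \to M$ by $h\big(\sum_{x \in c'} c(x)\,x\big) := \sum_{x \in c'} c(x)\cdot g(x)$, where $c(x)\cdot g(x)$ abbreviates the $c(x)$-fold sum of $g(x)$ in $\mathbb{M}$. This map is well defined precisely because $c'$ is finite and $\mathbb{M}$ is commutative, so the right-hand side does not depend on the order in which the finitely many terms are added; it sends the neutral element to $0$; and it extends $g$ because the generator $x$ corresponds to the coefficient map $e_x$, on which the defining sum collapses to the single term $g(x)$. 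A routine check that $h$ respects addition — using that addition in $\mathbb{N}[X]_1$ is componentwise on coefficients and that $\mathbb{M}$ is commutative — shows $h$ is a homomorphism (and the unique one extending $g$, since $X$ generates the domain). This gives the universal mapping property, and hence $\free{X} \cong \mathbb{N}[X]_1$.

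With the isomorphism in hand the remaining two claims are immediate. Positivity follows since $\mathbb{N}[X]_1 \cong \mathbb{N}^X_{\mathrm{fin}}$ and $\mathbb{N}$ is positive: if $c + d = 0$ then $c(x) + d(x) = 0$ in $\mathbb{N}$ for every $x$, whence $c(x) = d(x) = 0$ by positivity of $\mathbb{N}$, so $c = d = 0$. The \ftp\ follows from the chain $\free{X} \cong \mathbb{N}[X]_1 \cong \mathbb{N}^X_{\mathrm{fin}}$ together with the fact that $\mathbb{N}$ has the \ftp\ (Example~\ref{ex:bagmonoid}) and the equivalence (1)~$\Longleftrightarrow$~(3) of Proposition~\ref{prop:product}, which transfers the \ftp\ from $\mathbb{N}$ to its finite-support power $\mathbb{N}^X_{\mathrm{fin}}$. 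I expect no serious difficulty beyond the well-definedness and homomorphism verification for $h$, where commutativity of $\mathbb{M}$ and finiteness of the support are exactly what make the definition unambiguous.
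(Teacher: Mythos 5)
Your proposal is correct and follows essentially the same route as the paper: it establishes $\free{X} \cong \mathbb{N}[X]_1$ via the uniqueness clause of the free-algebra theorem by verifying the universal mapping property with the same evaluation map $c \mapsto \sum_{x} c(x)\,g(x)$, and then imports positivity and the \ftp\ from $\mathbb{N}^X_{\mathrm{fin}}$ via Proposition~\ref{prop:product}, which is exactly the chain the paper compresses into its citation of Example~\ref{ex:polynomials}. Your explicit verification that $X$ generates $\mathbb{N}[X]_1$ spells out a step the paper merely asserts, but this is a matter of detail, not of approach.
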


\begin{proof} 
Since $\mathbb{N}[X]_1$ is positive and has the transportation property by Example~\ref{ex:polynomials},
it suffices to show that $\free{X} \cong \mathbb{N}[X]_1$.
For this proof, let $\mathcal{C}$ denote the variety of commutative monoids, so that $\mathbb{F}_{\mathcal C}(X) = \free{X}$.
Since $\mathbb{N}[X]_1$ is generated by $X$, by the uniqueness of $\mathbb{F}_{\mathcal C}(X)$ 
it suffices to show that $\mathbb{N}[X]_1$ has the universal mapping property for $\mathcal{C}$ over $X$. Before we do this, let us recall from Example~\ref{ex:polynomials} that every element in $\mathbb{N}[X]_1$ is identified with a finite-support map $c : X \to Z^{\geq 0}$, with each indeterminate $x \in X$ being identified with the finite-support map $c_x : X  \to Z^{\geq 0}$ defined by $c_x(x) = 1$ and $c_x(y) = 0$ for all $y \in X  \setminus \{x\}$. 

Now, to prove the universal mapping property for $\mathbb{N}[X]_1$, fix a commutative monoid $\mathbb{M} = (M,+,0)$ and 
let $g : X \to M$ be any map. Define the required homomorphism $h$ as the \emph{evaluation map} 
\begin{equation}
    c \mapsto \sum_{\newatop{x \in X:}{c(x) \not= 0}} c(x) g(x), \label{eqn:summation}
\end{equation}
where $c$ is an element in $\mathbb{N}[X]_1$ identified with a 
finite-support map $c : X \to Z^{\geq 0}$. The external sum on the right-hand side of Equation~\eqref{eqn:summation} is 
in $\mathbb{M}$, and 
the notation $na$ for a positive integer $n$ and 
an element $a \in M$ stands for the sum $a + \cdots + a$ in $\mathbb{M}$ with
$n$ occurrences of $a$ in the sum if $n \geq 1$, and the neutral element $0$
of $M$ if $n = 0$. Note that the summation sign in~\eqref{eqn:summation} has finite extension because $c$ has finite support. 
Using the choice of $c_x$ for $x \in X$ defined above, 
it is straightforward to prove that $h$ is a homomorphism from $\mathbb{N}[X]_1$ to $\mathbb{M}$ that extends $g$.
\end{proof}

The additional claim we made that any two $\free{X}$-relations that are inner consistency have a sparse
witness of consistency when the set $X$ of generators is finite follows from combining
the fact that $\free{X} \cong \mathbb{N}[X]_1$ with the correspondence 
$\mathbb{N}[X]_1 \cong \mathbb{N}^X_\mathrm{fin}$ discussed in Example~\ref{ex:polynomials},
together with Example~\ref{ex:bagmonoidnw}, Proposition~\ref{prop:northwestcornerjoinmethod}, Proposition~\ref{prop:sparse}, 
and Proposition~\ref{prop:sparseproduct}.

\subsection{Some Important Non-Examples}

As we have seen, many important positive commutative monoids have the \ftp.
Unfortunately there are positive commutative monoids of different character that  fail to have the transportation property. Here we present  a few examples of
such monoids.

\begin{example} \textit{Natural numbers with addition truncated to $2$}.
Recall the positive commutative monoid $\mathbb{N}_2$ from Section~\ref{sec:not-sufficient}: the natural numbers $\{0,1,2\}$ with addition truncated to~$2$. In
that section we showed that 
the
path-of-length-$3$ hypergraph $P_3$
does not have the \ltgc~for ${\mathbb N}_2$-relations.
 From the implications (1) $ \Longrightarrow$ (3) and (2) $\Longrightarrow$ (3) in Theorem~\ref{thm:allequivalent}, it follows  that
$\mathbb{N}_2$ does not have the \ftp~ and, furthermore, the inner consistency property for ${\mathbb N}_2$-relations fails.
Here, we give a simple example showing that the inner consistency property for ${\mathbb N}_2$-relations fails. Combined with Theorem~\ref{thm:allequivalent}, this gives a different proof that ${\mathbb N}_2$ does not have the transportation property.

Let $R(AC)$ and $S(BC)$ be the
$\mathbb {N}_2$-relations given by
$R(a_1,c)=R(a_2,c)=S(b_1,c)=1$ and
$S(b_2,c)=2$, and no other tuples in their support.
These two $\mathbb {N}_2$-relations are inner consistent because $R(c) = S(c) = 2$. However, they are
not consistent. To prove this and towards a contradiction, assume  that $W(ABC)$ is an $\mathbb{N}_2$-relation
such that $W[AB] = R$ and $W[BC] = S$. Let us say $W(a_i,b_j,c) = x_{ij}$ for $i = 1,2,3$ and $j = 1,2$, where
each $x_{ij}$ is a value in $\{0,1,2\}$. This assumption gives rise to a system of five equations:
  \begin{center}
  \begin{tabular}{lllll}
  $x_{i1} \oplus x_{i2}$ & $= 1$ & \;\;\; & for $i= 1, 2,3$ \\
  $x_{1j} \oplus x_{2j} \oplus x_{3j}$ & $= 2$ & & for $j = 1,2$.
  \end{tabular}
  \end{center}
We reach a contradiction by double-counting the number of $x_{ij}$'s that are assigned the value~$1$.
The first type of equation implies that, for all $i=1,2,3$, either $x_{i1} = 0$ and $x_{i2} = 1$, or $x_{i1} = 1$ and $x_{i2} = 0$. In particular, exactly three among all $x_{ij}$ with $i=1,2,3$ and $j=1,2$ are assigned the value $1$ and the rest are assigned the value $0$. 
Therefore, for at least one among $j=1,2$ there is at most one among $i=1,2,3$ such that $x_{ij}$ is assigned the value $1$ and the rest are assigned 
the value $0$, which is against the second type of equation for this $j$.
\end{example}

\begin{example} 
\textit{Non-negative real numbers with addition and a gap}.
Let~$\mathbb{R}_1 = (\{0\} \cup [1,+\infty), +,  0)$ be 
the structure with  $0$ and  all real numbers bigger or equal than~$1$ as its universe, and with the
standard addition as its operation. It is obvious that  $\mathbb{R}_1$ is a positive commutative monoid. We show that  the \icp~for $\mathbb{R}_1$~fails,  hence $\mathbb{R}_1$ does not have the \ftp.

Let~$R(AC)$ and~$S(BC)$ be the~$\mathbb{R}_1$-relations given
  by~$R(a_i,c) = 1$ for~$i= 1,2,3$ and~$S(b_j,c) = 1.5$
  for~$j= 1, 2$, and no other tuples in their supports.
  These two~$\mathbb{R}_1$-relations are inner consistent,
  since~$R(c) = S(c) = 3$. We claim that they are not
  consistent. Indeed, assume that there is
  an~$\mathbb{R}_1$-relation~$W(ABC)$ witnessing the consistency
  of~$R(AB)$ and~$S(BC)$. Let us say that~$W(a_i,b_j,c) = x_{ij}$
  for~$i=1, 2,3$ and~$j= 1,2$, where each~$x_{ij}$ is a value
  in~$\{0\} \cup [1,+\infty)$. This assumption gives rise
  to a system of five equations:
  \begin{center}
  \begin{tabular}{llll}
  $x_{i1} + x_{i2}$ & $= 1$ & \;\;\; & for $i= 1, 2,3$ \\
  $x_{1j} + x_{2j} + x_{3j}$ & $= 1.5$ & & for $j = 1,2$.
  \end{tabular}
  \end{center}
  The first type of equation with~$i = 1$ implies that
  either~$x_{11} = 0$ and~$x_{12} = 1$, or that~$x_{11} = 1$
  and~$x_{12} = 0$. If~$x_{11} = 0$, then the second type of
  equation with~$j = 1$ implies that~$x_{21} + x_{31} = 0.5$, which
  is impossible. If~$x_{12} = 0$, then the second type of equation
  with~$j = 2$ implies that~$x_{22} + x_{32} = 0.5$, which is
  impossible. Since the system has no solution in~$\mathbb{R}_1$, we
  conclude that the relations~$R(AC)$ and~$S(BC)$ are not
  consistent. Note that in this proof we used only three of the six
  equations. However, the other two are forced by the inner
  consistency condition (i.e., there is no choice but to
  have~$x_{21} + x_{22} = 1$ and~$x_{31} + x_{32} = 1$).  
\end{example}

\begin{example} \textit{Truncated powersets}.
For each natural number $k$, let $\mathbb{P}_k = (\{0,\ldots,k+1\},+,0)$ be
the monoid with neutral element $0$, absorbing element $k+1$, and
such that $i+i=i$ for all $i \in [k]$, and $i+j = k+1$ for all $i,j \in [k]$ with $i \not= j$.
An alternative presentation of $\mathbb{P}_k$ is as the substructure of the
powerset monoid $\mathbb{P}([k+1]) = (\mathcal{P}([k+1]),\cup,\emptyset)$ induced
by the empty set $\emptyset$, the full set $[k+1]$, and the $(k-1)$-element subsets $[k]\setminus\{i\}$ for
$i = 1,\ldots,k$. This explains  the name \emph{truncated powersets}.
For example, this alternative presentation of
$\mathbb{P}_3$ is the structure $(\{\emptyset,\{1,2\},\{1,3\},\{2,3\},\{1,2,3\}\},\cup,\emptyset)$.

Clearly each $\mathbb{P}_k$ is positive and commutative. We show that $\mathbb{P}_k$
does not have the transportation property unless $k = 0$ or $k = 1$ or $k = 2$. For $k = 0$
we have that $\mathbb{P}_k$ is isomorphic to Boolean monoid $\mathbb{B} 
= (\{0,1\},\vee,0)$. For $k = 1$ we
have that $\mathbb{P}_k$ is isomorphic to $(\{0,1,2\},\max,0)$. For $k = 2$ we
have that $\mathbb{P}_k$ is isomorphic to $(\mathcal{P}(\{1,2\}),\cup,\emptyset)$.
These three cases are covered by the lattice case in Example~\ref{ex:lattices} and
have then the transportation property. For $k \geq 3$ we show that $\mathbb{P}_k$ does 
not have the transportation property.

Let $k \geq 3$, and let $R(AC)$ and $S(BC)$ be the $\mathbb{P}_k$-relations with $R(a_1,c) = 1$ and $R(a_2,c) = 3$
and $S(b_1,c) = 2$ and $S(b_2,c) = 3$, and no other tuples in their supports. These are
inner consistent since, in the structure $\mathbb{P}_k$ with $k \geq 3$, we have $R[C](c) = 1+3 = k+1 = 2+3 = S[C](c)$. 
We show that $R$ and $S$ are not consistent.
Indeed, assume that there is a~$\mathbb{P}_k$-relation~$W(ABC)$ witnessing the consistency
  of~$R(AB)$ and~$S(BC)$. Let us say that~$W(a_i,b_j,c) = x_{ij}$
  for~$i=1,2$ and~$j= 1,2$, where each~$x_{ij}$ is a value
  in~$\{0,\ldots,k+1\}$. This assumption gives rise
  to a system of four equations:
  \begin{center}
  \begin{tabular}{llll}
  $x_{11} + x_{12} = 1$ \\
  $x_{21} + x_{22} = 3$ \\
  $x_{11} + x_{21} = 2$ \\
  $x_{12} + x_{22} = 3$.
  \end{tabular}
  \end{center}
The first equation interpreted in $\mathbb{P}_k$
implies that $x_{11} = 1$ or $x_{12} = 1$.
If $x_{11} = 1$, then the third equation cannot be satisfied since there
is no $j$ such that $1+j=2$ in $\mathbb{P}_k$, while if $x_{12} = 1$,
then the fourth equation cannot be satisfied since there is no $j$ such
that $1+j = 3$ in $\mathbb{P}_k$.
\end{example}

Our last example involves  a natural positive commutative monoid for which the failure
of the transportation property is conceptually significant as it corresponds
to the deep fact of quantum mechanics that there exist pairs of 
binary observables that cannot be jointly measured. This is a manifestation of the
celebrated \emph{Heisenberg uncertainty principle} for 
positive-operator-valued measures~\cite{MiyaderaImai2008}; we do not elaborate on
this here and refer the interested reader to the introduction of the cited article
for an extensive survey of related literature.

\begin{example} \textit{Positive semidefinite matrices under addition}.
Let $n \geq 1$ be a positive integer and let $\mathbb{PSD}_n$ be the set of positive semidefinite matrices
in $\mathbb{R}^{n \times n}$, i.e., the $n\times n$ symmetric 
real matrices $A$ for which $z^\transpose A z \geq 0$ holds for all $z \in \mathbb{R}^n$. Equivalently, $A$ is positive semidefinite if and only if 
it is symmetric and all its eigenvalues are non-negative. 
This is a commutative monoid under componentwise addition; commutativity is obvious and
the sum of positive semidefinite matrices is positive semidefinite since $z^\transpose (A+B)z = z^\transpose A z + z^\transpose B z \geq 0$
for all $z \in \mathbb{R}^n$, where the inequality follows from the positive semidefiniteness of $A$ and $B$.
The monoid is also positive. To see this, first note that
if $A + B = 0$, then $z^\transpose A z + z^\transpose B z = z^\transpose(A + B)z = 0$, so $z^\transpose A z = z^\transpose B z = 0$ 
for all vectors $z \in \mathbb{R}^n$ by the positive semidefiniteness of $A$ and $B$. By applying this to the standard basis vectors $e_i = (0,\ldots,0,1,0,\ldots,0)\in \mathbb{R}^n$ with $i=1,\ldots,n$ we see that the diagonals of $A$ and $B$ vanish, so the traces of $A$ and $B$ vanish, which means that the sums of their eigenvalues vanish, so all their eigenvalues vanish since positive semidefinite matrices have non-negative eigenvalues. From this
it follows that $A$ and $B$ are the zero matrix by 
considering their spectral decompositions $A = PDP^\transpose$ and $B = QEQ^\transpose$, where $D$ and $E$ are the diagonal matrices that collect their eigenvalues.

Next we show that $\mathbb{PSD}_n$ does not have the \ftp, provided $n > 1$. 
For $n = 1$, we have that $\mathbb{PSD}_n$ is isomorphic to the monoid  $\mathbb{R}^{\geq 0}$
of the non-negative reals with addition, and this has been shown to have the \ftp~in Example~\ref{ex:bagmonoid}. Next we argue that $\mathbb{PSD}_2$ does not have the \ftp. From this, the claim follows for
$\mathbb{PSD}_n$ with $n > 2$ by padding the matrices with zeros. Our proof for $n = 2$
is an adaptation of a more general statement that can be found in~\cite{KunjwalHeunenFritz2014}.

Consider the classical Pauli matrices:
\begin{equation*}
    X = \left({\begin{array}{cc} 0 & 1 \\ 1 & 0 \end{array}}\right) \;\;\;\;\; 
    Y = \left({\begin{array}{cc} 0 & -i \\ i & 0 \end{array}}\right) \;\;\;\;\;
    Z = \left({\begin{array}{cc} 1 & 0 \\ 0 & -1 \end{array}}\right).
\end{equation*}
Observe that $Y$ has complex entries, but $X$ and $Z$ are $2\times 2$ real matrices. Consider the
instance of the transportation problem given by the four matrices 
\begin{equation*}
\begin{array}{ccccccc}
    B_1 & = & (\Id+X)/2 & \;\;\;\; & B_2 & = & (\Id-X)/2 \\
    C_1 & = & (\Id+Z)/2 & &  C_2 & = & (\Id-Z)/2,
\end{array}
\end{equation*}
where $\Id$ is the $2\times 2$ identity matrix. These are positive semidefinite matrices since their eigenvalues are in $\{0,1\}$,
and the vectors $(B_1,B_2)$ and $(C_1,C_2)$ form a balanced instance of the transportation problem since $B_1 + B_2 = C_1 + C_2 = \Id$. This gives rise to a system
of four matrix equations
\begin{align*}
    X_{11} + X_{12} = B_1 \\
    X_{21} + X_{22} = B_2 \\
    X_{11} + X_{21} = C_1 \\
    X_{12} + X_{22} = C_2 
\end{align*}
We claim that this system is infeasible in $2 \times 2$ positive semidefinite matrices. Suppose otherwise. 
Left-multiply the first equation by $X$, the second equation by $-X$, the third equation by $Z$, the fourth equation by $-Z$,
and add up everything. Using the fact that $X^2 = Z^2 = \Id$ and hence $XB_1 - XB_2 = ZC_1 - ZC_2 = \Id$, 
this gives the identity
\begin{equation}
A_{11} X_{11} + A_{12} X_{12} + A_{21} X_{21} + A_{22} X_{22} = 2\Id \label{eqn:midentity}
\end{equation}
where
\begin{equation*}
\begin{array}{ccccccccc}
    A_{11} & = & X+Z & \;\;\;\; & A_{12} & = & X-Z \\
    A_{21} & = &-X+Z & &  A_{22} & = & -X-Z.
\end{array}
\end{equation*}
The trace of the matrix on the right-hand side in~\eqref{eqn:midentity}
is~$4$. In contrast, by Hölder's inequality for the Schatten 
norm with $p = 1$ and $q = \infty$,
the trace of the matrix on the left-hand side in \eqref{eqn:midentity} is bounded by
\begin{equation}
    \norm{A_{11}}\trace(X_{11}) + \norm{A_{12}}\trace(X_{12}) + \norm{A_{21}}\trace(X_{21}) + \norm{A_{22}}\trace(X_{22}), \label{eqn:bound}
\end{equation}
where
$\norm{A}$ denotes the spectral norm of $A$, i.e., the largest eigenvalue of the matrix $A$, in absolute value.
It can be checked by direct computation that each of the matrices $A_{ij}$ 
has eigenvalues $\pm \sqrt{2}$,
so their spectral norm is $\sqrt{2}$. Furthermore, each $X_{ij}$ is a positive semidefinite matrix
by assumption; hence 
 its trace, which is the sum of the eigenvalues, 
which are non-negative for positive semidefinite matrices, is non-negative. It follows that~\eqref{eqn:bound} is bounded by
\begin{equation}
    (\trace(X_{11}) + \trace(X_{12}) + \trace(X_{21}) + \trace(X_{22})) \sqrt{2} = 
    \trace(X_{11} + X_{12} + X_{21} + X_{22}) \sqrt{2} = 2\sqrt 2,
\end{equation}
where the first equality follows from the linearity of the trace, and the second follows from the fact that the sum of the $X_{ij}$ 
is $B_1 + B_2 = C_1 + C_2 = \Id$, which has trace~$2$. 
The conclusion is that the trace of 
the left-hand side in \eqref{eqn:midentity} is at most $2\sqrt{2} < 4$,
which is against the fact that the trace of the right-hand side in \eqref{eqn:midentity} is~$4$.
\end{example}

\section{Local Consistency up to a Cover} \label{sec:uptocovers}

In the previous sections, we characterized the class of
positive commutative
monoids $\mathbb{K}$ for which the standard local consistency of $\mathbb{K}$-relations 
agrees with their global consistency for precisely the acyclic hypergraphs. 
The goal of this section is to investigate whether there 
is a suitably modified
notion of local consistency of $\mathbb{K}$-relations
that has the same effect of capturing the global consistency of $\mathbb{K}$-relations for
precisely
the acyclic hypergraphs, but that applies to
\emph{every} positive commutative monoid.

We achieve this by strengthening the requirement
of locality: in addition to requiring that the relations are pairwise consistent as $\mathbb{K}$-relations,
we will also require that they are pairwise consistent when they are appropriately viewed as $\free{X}$-relations, where $\free{X}$ 
is the free commutative monoid with a large
enough set $X$ of generators. 
We refer to this new notion of local consistency of $\mathbb{K}$-relations as \emph{pairwise consistency up to the free cover} of $\mathbb{K}$. 
Surprisingly, we show that this abstract notion of local
consistency of $\mathbb{K}$-relations characterizes global consistency of $\mathbb{K}$-relations
for precisely  
the acyclic hypergraphs, and for \emph{every} positive commutative monoid~$\mathbb{K}$.

\subsection{Consistency up to a Cover}

Let~$\mathbb{K}$ be a positive commutative monoid.  A \emph{cover
of~$\mathbb{K}$} is a positive commutative monoid~$\mathbb{K}^*$ such that there is a surjective homomorphism $h$ from ${\mathbb K}^*$  
onto~$\mathbb{K}$. 
The \emph{identity cover} is the
cover where $\mathbb{K}^*$ is~$\mathbb{K}$ itself and $h$ is the identity map.
A cover of $\mathbb{K}$ is given by the pair $(\mathbb{K}^*,h)$ of both objects;  we use the notation~$h : \mathbb{K}^* \onto \mathbb{K}$ to say 
that the pair $(\mathbb{K}^*,h)$ is a cover
of~$\mathbb{K}$. 
For the definitions of the next paragraph, fix such a cover.

For a~$\mathbb{K}$-relation~$R(Y)$, an \emph{$h$-lift of~$R$} is
a~$\mathbb{K}^*$-relation~$R^*(Y)$ such that $h(R^*(t)) = R(t)$ holds
for every $Y$-tuple $t$, i.e.,~$h \circ R^* = R$
holds. In most of the cases that follow, the cover
will be clear from the context, and we simply say that~$R^*$ is a
lift of~$R$, without any reference to~$h$. Note that, since the
homomorphism~$h$ is surjective onto~$\mathbb{K}$,
every~$\mathbb{K}$-relation~$R$ has at least
one~$h$-lift~$R^*$. Consider the special case where $h : \mathbb{K}^* \onto \mathbb{K}$
is a \emph{retraction}, meaning that $K \subseteq K^*$ and $h$ is the identity on $K$, 
where $K$ and $K^*$ are  the universes of $\mathbb{K}$ and $\mathbb{K}^*$, respectively; in this case,  
the \emph{direct $h$-lift of $R$} is the $\mathbb{K^*}$-relation $R^*$ defined
by $R^*(t) = R(t)$, for every $Y$-tuple $t$.

\begin{definition}\label{defn:consistencycovers}
Let $\mathbb{K}$ be a positive commutative monoid, 
let~$h:\mathbb{K}^* \onto \mathbb{K}$ be a cover of~$\mathbb{K}$, 
let~$X_1,\ldots,X_m$ be a schema, 
let~$R_1(X_1),\ldots,R_m(X_m)$ be a collection
of~$\mathbb{K}$-relations over the schema~$X_1,\ldots,X_m$, 
and let~$k$ be a positive integer.
We say that the collection $R_1,\ldots,R_m$ is~\emph{$k$-wise consistent up to
the cover $h : \mathbb{K}^* \onto \mathbb{K}$} if
there exists a collection $R^*_1,\ldots,R^*_m$ of~$h$-lifts of
$R_1,\ldots,R_m$ that is $k$-wise consistent (as a collection of $\mathbb{K}^*$-relations).
If~$k = 2$, then we
say that the collection~$R_1,\ldots,R_m$ is \emph{pairwise consistent up to the cover}. 
If~$k = m$,
then we say that the collection~$R_1,\ldots,R_m$ is \emph{globally consistent up to the cover}.
When $k = m = 2$ we just say that $R_1$ and $R_2$ are \emph{consistent up to the cover}.
\end{definition}

Before we go on, it is important to point out
that in the definition of consistency up to a cover,
not only the choice of the cover $h : \mathbb{K}^* \onto \mathbb{K}$
potentially matters, but also the choice of $h$-lifts $R_1^*,\ldots,R_m^*$ matters.
We illustrate this with an example.

\begin{example}
    Consider the $\mathbb{N}_2$-relations $R_2(BC)$ and $R_3(CD)$ in 
    Proposition~\ref{prop:path3}. Consider the cover $h : \free{x,y}
    \onto \mathbb{N}_2$ given by the canonical homomorphism $h$ from the free commutative
    monoid $\free{x,y}$ with two generators
    $x$ and $y$ for the non-zero elements $1$ and $2$ of $\mathbb{N}_2$, which is of course a 
    surjective homomorphism.    
    As shown in Proposition~\ref{prop:path3}, the $\mathbb{N}_2$-relations $R_2(BC)$ and $R_3(CD)$ are 
    consistent as $\mathbb{N}_2$-relations. 
    However, when viewed as $\free{x,y}$-relations 
    $R_2^*$ and $R_3^*$ through the direct $h$-lift with the retraction that
    identifies $x$ with $1$ and $y$ with $2$,
    the two $\free{x,y}$-relations $R^*_2$ and $R^*_3$ are $h$-lifts of $R_2$ and $R_3$ that
    are not consistent because they are not even 
    inner consistent, since  we
    have that $R^*_2[C](c_1) = y \not= x+x+x = R^*_3[C](c_1)$. Nonetheless,
    if we take the $\mathbb{N}_2$-relation $R_{23}(BCD)$ that witnesses the consistency
    of $R_2$ and $R_3$ as $\mathbb{N}_2$-relations, 
    then we can view $R_{23}$ as an $\free{x,y}$-relation $W^*$ that is an $h$-lift
    of $R_{23}$, and we can now take $R_2^* := W^*[BC]$ and $R_3^*= W^*[CD]$, and these 
    are obviously both consistent
    $\free{x, y}$-relations 
    and $h$-lifts of $R_2$ and $R_3$, though not direct $h$-lifts.
\end{example}

\paragraph{Global Consistency up to Covers and its Absoluteness} 
The first technical result of this section is the following simple but
important observation stating that, as regards to  global consistency, the choice of the cover does
not really matter. While this independence of the cover will not be shared by 
the notion of pairwise consistency up to a cover that we will introduce later on, 
the fact that it holds for global consistency is key for our purposes.

\begin{proposition}[Absoluteness of Global Consistency] \label{prop:absoluteness}
Let~$\mathbb{K}$ be a positive commutative monoid and
let~$R_1,\ldots,R_m$ be a collection
of~$\mathbb{K}$-relations. The following statements are equivalent:
\begin{enumerate} \itemsep=0pt
\item the collection $R_1,\ldots,R_m$ is globally consistent,
\item the collection $R_1,\ldots,R_m$ is globally consistent 
up to every cover of $\mathbb{K}$,
\item the collection $R_1,\ldots,R_m$ is globally consistent 
up to some cover of $\mathbb{K}$.
\end{enumerate}
\end{proposition}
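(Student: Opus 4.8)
The plan is to close the cycle of implications (1) $\Rightarrow$ (2) $\Rightarrow$ (3) $\Rightarrow$ (1). The implication (2) $\Rightarrow$ (3) is immediate once we note that $\mathbb{K}$ always admits at least one cover, namely the identity cover $\mathrm{id} : \mathbb{K} \onto \mathbb{K}$; so I would dispose of it in one line. The real content lies in the two remaining implications, and both of them rest on a single observation that I would isolate first.

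The key lemma is that post-composition with a homomorphism commutes with marginalization. Precisely, if $h : \mathbb{K}^* \to \mathbb{K}$ is any monoid homomorphism, $W^*(X)$ is a $\mathbb{K}^*$-relation, and $Y \subseteq X$, then
\begin{equation*}
(h \circ W^*)[Y] = h \circ (W^*[Y]).
\end{equation*}
To prove this I would unfold both sides at an arbitrary $Y$-tuple $t$: the left-hand side is the sum of the values $h(W^*(r))$ over the support of $h \circ W^*$ among the $r$ with $r[Y] = t$, while the right-hand side is $h$ applied to the sum of the $W^*(r)$ over the support of $W^*$ among the same $r$, which by the homomorphism property equals the sum of the $h(W^*(r))$ over the support of $W^*$. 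The only subtlety --- and the one place where care is genuinely required --- is that these two index sets need not coincide: applying $h$ can collapse a nonzero $W^*(r)$ to $0$, so $\supp(h \circ W^*) \subseteq \supp(W^*)$ but the inclusion may be strict. This is harmless, however, because every extra term $r \in \supp(W^*) \setminus \supp(h\circ W^*)$ contributes $h(W^*(r)) = 0$ to the right-hand sum; adding these zeros changes nothing, so the two sums agree. Note that this lemma needs only that $h$ is a homomorphism with $h(0) = 0$; it uses neither surjectivity nor positivity.

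Granting the lemma, I would argue the two main implications as follows. For (1) $\Rightarrow$ (2), fix an arbitrary cover $h : \mathbb{K}^* \onto \mathbb{K}$ and let $W$ witness the global consistency of $R_1,\ldots,R_m$. Since $h$ is surjective, $W$ has an $h$-lift $W^*$. Setting $R_i^* := W^*[X_i]$, the lemma gives $h \circ R_i^* = h \circ (W^*[X_i]) = (h \circ W^*)[X_i] = W[X_i] = R_i$, so each $R_i^*$ is an $h$-lift of $R_i$; and $W^*$ visibly witnesses their global consistency, as required. For (3) $\Rightarrow$ (1), let $h : \mathbb{K}^* \onto \mathbb{K}$ be a cover witnessing global consistency up to a cover, with $h$-lifts $R_1^*,\ldots,R_m^*$ and a $\mathbb{K}^*$-relation $W^*$ such that $W^*[X_i] = R_i^*$ for all $i$. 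I would simply push $W^*$ down and set $W := h \circ W^*$. By the lemma, $W[X_i] = (h \circ W^*)[X_i] = h \circ (W^*[X_i]) = h \circ R_i^* = R_i$, the last equality because $R_i^*$ is an $h$-lift of $R_i$. Hence $W$ witnesses the global consistency of $R_1,\ldots,R_m$.

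The main obstacle is thus not any deep structural fact but rather the bookkeeping around supports in the proof of the commutativity lemma; once one sees that the extra support terms die under $h$, everything else is a matter of chasing definitions. It is worth remarking that surjectivity of $h$ is used only in (1) $\Rightarrow$ (2) (to guarantee that a lift of the global witness exists) and that positivity of the monoids plays no role in the argument at all --- consistent with the fact that this absoluteness statement is far more robust than the cover-specific results that follow.
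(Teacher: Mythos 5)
Your proposal is correct and follows essentially the same route as the paper's proof: the same cycle (1)~$\Rightarrow$~(2)~$\Rightarrow$~(3)~$\Rightarrow$~(1), the same choice of lifts $R_i^* := W^*[Y_i]$ in the forward direction, and the same pushdown $W := h \circ W^*$ in the reverse direction. The only difference is cosmetic: you factor out as a standalone lemma the fact that $h$ commutes with marginalization (which the paper carries out inline twice, in the displayed chains of equalities), and you make explicit the support bookkeeping $\supp(h \circ W^*) \subseteq \supp(W^*)$ that the paper leaves implicit in its summation notation.
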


\begin{proof}
Let~$Y_i$ be the set of attributes of  $R_i$ for $i = 1,\ldots,m$. 

(1) $\Longrightarrow$ (2): Let~$W$ be a~$\mathbb{K}$-relation such that~$W[Y_i]
= R_i$ holds for all~$i \in [m]$.  Fix an arbitrary cover~$h :
\mathbb{K}^* \onto \mathbb{K}$ and let~$W^*$ be any $h$-lift of~$W$. 
Such a lift exists because~$h$ is surjective onto $\mathbb{K}$. For
each~$i \in [m]$, choose~$R_i^* := W^*[Y_i]$.  We claim
that~$R_1^*,\ldots,R_m^*$ are lifts of~$R_1,\ldots,R_m$, and also
that~$W^*$ witnesses their global consistency. Indeed, for each~$i \in
[m]$ and each~$Y_i$-tuple~$t$ we have
\begin{align}
h(R^*_i(t)) & = h(W^*[Y_i](t)) = h\Big({\sum_{r : r[Y_i] = t} W^*(r)}\Big) = 
\label{eqn:hofrstarone} \\
& = {\sum_{r : r[Y_i] = t} h(W^*(r))} 
= {\sum_{r : r[Y_i] = t} W(r)}  = W[Y_i](t) = R_i(t), \label{eqn:hofrstartwo}
\end{align}
where the first equality follows from the choice of~$R^*_i$, the
second follows from the definition of marginal, the third follows from
the fact that~$h$ is a homomorphism, the fourth follows from the fact
that~$W^*$ is a lift of~$W$, the fifth follows from the definition of
marginal, and the sixth follows from~$W[Y_i] = R_i$. This shows
that~$h \circ R_i^* = R_i$, so~$R_1^*,\ldots,R_m^*$ are lifts
of~$R_1,\ldots,R_m$. Finally, the fact that~$W^*$ witnesses the global
consistency of~$R_1^*,\ldots,R_m^*$ is obvious by construction.

(2) $\Longrightarrow$ (3): This is obvious by choosing the identity cover.

(3) $\Longrightarrow$ (1): Let~$h : \mathbb{K}^* \onto \mathbb{K}$ be a cover
up to which the collection~$R_1,\ldots,R_m$ is globally
consistent. Let then~$R_1^*,\ldots,R_m^*$ be a collection of lifts
of~$R_1,\ldots,R_m$ that is globally consistent. Let~$W^*$ be
the~$\mathbb{K}^*$-relation that witnesses its global consistency and
define~$W := h \circ W^*$. We claim that~$W$ witnesses the global
consistency of~$R_1,\ldots,R_m$. Indeed, for each $i \in [m]$
and each~$Y_i$-tuple~$t$ it
holds that
\begin{align}
W[Y_i](t) & = {\sum_{r:r[Y_i]=t} W(r)} =
{\sum_{r:r[Y_i]=t} h(W^*(r))} = \label{eqn:broken1} \\
& =
h\Big({\sum_{r:r[Y_i]=t} W^*(r)}\Big) = h(W^*[Y_i](t)) = h(R^*_i(t)) =
R_i(t), \label{eqn:broken2}
\end{align}
where the first equality follows from the definition of marginal, the
second follows from the choice of~$W$, the third follows from the fact
that $h$ is a homomorphism, the fourth follows from the definition of
marginal, the fifth follows from the fact that~$W^*[Y_i] = R^*_i$, and
the sixth follows from the fact that~$R^*_i$ is an $h$-lift
of~$R_i$. This shows that~$W[Y_i] = R_i$, hence the
collection~$R_1,\ldots,R_m$ is globally consistent.
\end{proof}

In view of Proposition~\ref{prop:absoluteness}, we say that the notion
of global consistency up to covers is \emph{absolute} as if it holds
for some cover, then it holds for all covers. Next we localize this
notion. Unlike the global notion, the local notion will
\emph{not} be absolute in the sense that a collection
of~$\mathbb{K}$-relations may be locally consistent up to some cover
but not up to every cover. 

\paragraph{Local Consistency up to Covers} We show that, up to covers, two thirds of
Proposition~\ref{prop:absoluteness} descend from
global consistency to local consistency. Concretely,
we show in Proposition~\ref{prop:descends} below that a collection is $k$-wise consistent in the standard
sense if and only if it is $k$-wise consistent up to some cover of $\mathbb{K}$. 
In contrast, we also show in Example~\ref{ex:notabsolute} below that a third statement
quantifying over all covers of $\mathbb{K}$ would not be equivalent. This state
of affairs notwithstanding, 
two additional refined notions of local consistency up to a cover
make sense and those are indeed
equivalent to the one we defined. While these refined notions will not play a role
in later sections, we spell them out next to clarify the choices 
that were involved in the original definition of local consistency up to a cover.

We say that the collection $R_1,\ldots,R_m$ is \emph{weakly $k$-wise consistent
up to the cover $h : \mathbb{K}^* \onto \mathbb{K}$} if for every $t \in [k]$,
every $i_1,\ldots,i_t \in [m]$, and every $j \in [t]$,
there exists an $h$-lift $R_{i_j}^*$ of $R_{i_j}$ such that the collection
$R_{i_1}^*,\ldots,R_{i_t}^*$ is globally consistent. Finally,
we say that the collection $R_1,\ldots,R_m$ is \emph{very weakly $k$-wise consistent
up to some covers of $\mathbb{K}$} if for every $t \in [k]$
and
every $i_1,\ldots,i_t \in [m]$ 
there exists a cover $h : \mathbb{K}^*_t \onto \mathbb{K}$ such
that for and every $j \in [t]$, there exist an 
$h$-lift $R^*_{i_j}$ of $R_{i_j}$ such that the collection~$R_{i_1}^*,\ldots,R_{i_t}^*$
is globally consistent. Note the difference with the earlier definition:
in the \emph{weak} case, the choices of lifts for each $R_i$ may depend on the subcollection, and
in the \emph{very weak} case even the cover up to which
consistency is defined may depend on the subcollection.

\begin{proposition} \label{prop:descends}
Let~$\mathbb{K}$ be a positive commutative monoid,
let~$R_1,\ldots,R_m$ be a collection of~$\mathbb{K}$-relations, and
let~$k$ be a positive integer. The following statements are
equivalent:
\begin{enumerate} \itemsep=0pt
\item[(1)] the collection $R_1,\ldots,R_m$ is $k$-wise consistent,
\item[(2)] the collection $R_1,\ldots,R_m$ is $k$-wise consistent
up to some cover of $\mathbb{K}$,
\item[(3)] the collection $R_1,\ldots,R_m$ is weakly $k$-wise consistent
up to some cover of $\mathbb{K}$,
\item[(4)] the collection $R_1,\ldots,R_m$ is very weakly $k$-wise consistent
up to some covers of~$\mathbb{K}$.
\end{enumerate}
\end{proposition}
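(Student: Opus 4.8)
The plan is to close the cycle of implications $(1) \Rightarrow (2) \Rightarrow (3) \Rightarrow (4) \Rightarrow (1)$. The three notions in $(2)$, $(3)$, and $(4)$ progressively relax how much freedom is permitted in the choice of cover and lifts: in $(2)$ a single cover together with a single family of lifts must serve all subcollections simultaneously; in $(3)$ the cover is still fixed but the lifts may be chosen afresh for each subcollection; and in $(4)$ even the cover may depend on the subcollection. Consequently the forward implications $(2) \Rightarrow (3) \Rightarrow (4)$ are mere specializations and carry no real content, and the same is true of $(1) \Rightarrow (2)$. All the substance is concentrated in the closing implication $(4) \Rightarrow (1)$, which will follow directly from the absoluteness of global consistency established in Proposition~\ref{prop:absoluteness}.

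For the easy implications I would argue as follows. For $(1) \Rightarrow (2)$, take the identity cover $\mathrm{id} : \mathbb{K} \onto \mathbb{K}$ and the trivial lifts $R_i^* := R_i$; then $k$-wise consistency of $R_1,\ldots,R_m$ as $\mathbb{K}$-relations is literally $k$-wise consistency up to this cover. For $(2) \Rightarrow (3)$, fix the cover $h$ and the uniform family of lifts $R_1^*,\ldots,R_m^*$ witnessing $(2)$; since this family is $k$-wise consistent as $\mathbb{K}^*$-relations, every subcollection $R_{i_1}^*,\ldots,R_{i_t}^*$ with $t \leq k$ is globally consistent, which is exactly what $(3)$ demands of the same cover $h$, the per-subcollection lifts being the restrictions of the uniform ones. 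For $(3) \Rightarrow (4)$, keep the single cover $h$ supplied by $(3)$ and use it for every subcollection, so that the subcollection-dependent cover permitted by $(4)$ is instantiated to the constant choice $h$.

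The only implication requiring an actual argument is $(4) \Rightarrow (1)$. Here I would fix an arbitrary subcollection $R_{i_1},\ldots,R_{i_t}$ with $t \leq k$ and observe that the very weak hypothesis $(4)$ asserts precisely that this subcollection, viewed as a collection of $\mathbb{K}$-relations in its own right, is \emph{globally} consistent up to some cover of $\mathbb{K}$ --- that is exactly condition $(3)$ of Proposition~\ref{prop:absoluteness} applied to the subcollection. By the implication $(3) \Rightarrow (1)$ of that proposition, the subcollection is globally consistent in the standard sense; concretely, if $W^*$ witnesses the global consistency of the chosen lifts over the cover $h : \mathbb{K}^* \onto \mathbb{K}$, then $W := h \circ W^*$ witnesses the global consistency of $R_{i_1},\ldots,R_{i_t}$, exactly as in the marginal computation carried out in the proof of Proposition~\ref{prop:absoluteness}. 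Since the subcollection was arbitrary, every subcollection of size at most $k$ is globally consistent, which is the definition of $(1)$.

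The point worth stressing --- and the reason there is no genuine obstacle --- is that $k$-wise consistency is nothing but the global consistency of all subcollections of size at most $k$, and the absoluteness result handles each such subcollection entirely on its own. Because the cover and the lifts in $(4)$ are allowed to vary from subcollection to subcollection, no coordination across subcollections is ever required, and the seemingly weakest notion $(4)$ collapses back onto the strongest one $(1)$. The main thing to get right is therefore purely organizational: to read $(4)$ subcollection-by-subcollection as an instance of global consistency up to a cover, so that Proposition~\ref{prop:absoluteness} can be invoked verbatim.
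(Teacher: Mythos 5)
Your proposal is correct and follows essentially the same route as the paper's own proof: the implications $(1)\Rightarrow(2)\Rightarrow(3)\Rightarrow(4)$ are handled by the same trivial specializations (identity cover, same lifts, constant cover), and the closing implication $(4)\Rightarrow(1)$ is obtained, exactly as in the paper, by reading the hypothesis subcollection-by-subcollection as global consistency up to some cover and invoking the absoluteness of global consistency (Proposition~\ref{prop:absoluteness}). Your additional remark that the witness is concretely $W := h \circ W^*$ merely unfolds the proof of that proposition and is consistent with it.
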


\begin{proof} 
Let $Y_i$ be the set of attributes in $R_i$ for $i = 1,\ldots,m$.

(1) $\Longrightarrow$~(2): This is obvious by choosing the identity cover.  

(2) $\Longrightarrow$~(3): This is obvious by choosing the same lifts.

(3) $\Longrightarrow$~(4): This is obvious by choosing the same cover and
the same lifts.

(4) $\Longrightarrow$~(1): Fix $t \in [k]$ and $i_1,\ldots,i_t \in [m]$,
and let $h : \mathbb{K}^* \onto \mathbb{K}$ and 
$R^*_{i_1},\ldots,R^*_{i_t}$ be as given by the definition of very weakly $k$-wise consistency 
up to some covers for this $t$ and these $i_1,\ldots,i_t$.
In particular the collection $R^*_{i_1},\ldots,R^*_{i_t}$ is globally consistent.
Therefore, the subcollection $R_{i_1},\ldots,R_{i_t}$ of the original $\mathbb{K}$-relations
is globally consistent up to  some cover, i.e., namely
$h : \mathbb{K}^* \to \mathbb{K}$, 
and hence globally consistent by Proposition~\ref{prop:absoluteness}.
\end{proof}

It should be pointed out that the equivalence of the items
in Proposition~\ref{prop:descends} would not go through if the same cover of $\mathbb{K}$
were fixed at the outset for all items. This will follow from the fact that,
as we argue below, absoluteness fails
for local consistency. For the main result of this section,
what really matters from Proposition~\ref{prop:descends} is
the equivalence between items~(1) and~(2), which states that local consistency up to a cover
is a conservative generalization of the classical notion of local consistency.

Finally we give the promised example showing that, in general,~$k$-wise consistency up 
to a cover is not absolute in the sense of
Proposition~\ref{prop:absoluteness}. Concretely, the example will show
that for the positive commutative monoid $\mathbb{N}_2$ in 
Proposition~\ref{prop:path3} and for the values~$k = 2$ and $m=3$, 
one cannot
add   to Proposition~\ref{prop:descends} 
a condition analogous to the second condition 
  in Proposition~\ref{prop:absoluteness} stating that
the collection~$R_1,\ldots,R_m$ is~$k$-wise consistent up to every
cover of~$\mathbb{N}_2$. In other words, there are collections
of~$\mathbb{N}_2$-relations that are pairwise consistent but are not pairwise
consistent up to every cover of~$\mathbb{N}_2$.

\begin{example} \label{ex:notabsolute}
Consider the
collection~$R(AB),S(BC),T(CD)$ of the three~$\mathbb{N}_2$-relations from
Proposition~\ref{prop:path3}.  These relations are pairwise consistent but are not globally consistent
as~$\mathbb{N}_2$-relations. Consider the cover $h : \mathbb{N} \onto \mathbb{N}_2$, where 
$\mathbb{N}$ is the bag monoid and $h$
maps~$n$ to~$n$ if $n = 0$
or $n = 1$, and maps $n$ to $2$ if~$n \geq 2$, i.e., $h$ truncates addition to $2$. 
We claim that the collection~$R,S,T$ cannot be lifted to a
collection of pairwise
consistent~$\mathbb{N}$-relations~$R^*,S^*,T^*$.  For, if they
could, then~$R^*,S^*,T^*$ would be a collection of pairwise consistent
bags, hence they would also be globally consistent by the local-to-global consistency
property for bags on acyclic schemas, since  the schema~$AB,BC,CD$
is acyclic. But then~$R(AB),S(BC),T(CD)$ would be also globally
consistent as~$\mathbb{N}_2$-relations by truncating to~$2$ every
natural number bigger than~$2$ in the bag~$W^*$ that witnesses the
global consistency of~$R^*,S^*,T^*$. This contradicts 
 Proposition~\ref{prop:path3} and completes the example.
\end{example}

\subsection{Local-to-Global Consistency up to Covers}

The \ltgc\ up to a cover is
defined to generalize Definition~\ref{defn:local-to-global} as follows:

\begin{definition}
Let~$\mathbb{K}$ be a positive commutative monoid,
let~$h : \mathbb{K}^* \onto \mathbb{K}$ be a cover of~$\mathbb{K}$,
and let $X_1,\ldots,X_m$ be a listing of all the hyperedges of a hypergraph $H$.
We say that~$H$ has the \emph{\ltgc\
for~$\mathbb{K}$-relations up to the cover~$h:\mathbb{K}^* \onto \mathbb{K}$} if every collection
$R_1(X_1),\ldots,R(X_m)$ of~$\mathbb{K}$-relations that is pairwise
consistent up to the cover is globally consistent.
\end{definition}

Recall from Section~\ref{sec:freemonoid} the definition of the free commutative monoid $\free{X}$ for a finite
or finite set of indeterminates $X$.
In the statement of the following theorem, 
let $\free{K^+}$ denote the free commutative monoid generated by the
set $K^+$ of non-zero elements in~$K$ seen
as indeterminates. 
Note that $\free{K^+}$ is positive by Proposition~\ref{prop:freeftp}.
The \emph{free cover
of~$\mathbb{K}$} refers to the cover 
$h : \free{K^+} \onto \mathbb{K}$
provided by the homomorphism $h$ from~$\free{K^+}$ 
to~$\mathbb{K}$ given
by the universal mapping property of $\free{K^+}$ applied to the identity
map $g : K^+ \to K^+$ defined by $g(x) = x$ for all $x \in K^+$. Clearly, $h$ is surjective onto $K$
as it extends $g$ and any homomorphism between monoids maps the neutral element of the first monoid
to the neural element of the second. Hence, $h : \free{K^+} \onto \mathbb{K}$
is indeed a cover.

\begin{theorem} \label{thm:mainuptocovers}
Let~$\mathbb{K}$ be a positive commutative monoid and let~$H$ be a
hypergraph. Then, the following statements are equivalent:
\begin{enumerate} \itemsep=0pt
\item $H$ is acyclic,
\item $H$ has the local-to-global consistency property up to 
  the free cover of $\mathbb{K}$,
\item $H$ has the local-to-global consistency property up to 
  some cover of $\mathbb{K}$.
\end{enumerate}
\end{theorem}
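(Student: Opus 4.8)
The plan is to prove the theorem by closing the cycle of implications $(1)\Rightarrow(2)\Rightarrow(3)\Rightarrow(1)$. The two ingredients that make the forward implications cheap are already in place: the free commutative monoid $\free{K^+}$ has the \ftp~(Proposition~\ref{prop:freeftp}), so that Corollary~\ref{cor:generalization} applies to it; and global consistency is absolute across covers (Proposition~\ref{prop:absoluteness}), so that a witness obtained in a cover can always be pushed down to $\mathbb{K}$. The implication $(3)\Rightarrow(1)$ will be the real work, and for it I will rerun the Tseitin construction from the proof of Theorem~\ref{thm:necessary}, but inside the covering monoid.

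For $(1)\Rightarrow(2)$, assume $H$ is acyclic and let $R_1,\dots,R_m$ be a collection that is pairwise consistent up to the free cover $h:\free{K^+}\onto\mathbb{K}$. By Definition~\ref{defn:consistencycovers} there are $h$-lifts $R_1^*,\dots,R_m^*$, which are $\free{K^+}$-relations, that are pairwise consistent. Since $\free{K^+}$ has the \ftp~and $H$ is acyclic, Corollary~\ref{cor:generalization} (applied with the monoid $\free{K^+}$) shows that $R_1^*,\dots,R_m^*$ are globally consistent as $\free{K^+}$-relations. This says precisely that $R_1,\dots,R_m$ are globally consistent up to the free cover, so Proposition~\ref{prop:absoluteness} yields that $R_1,\dots,R_m$ are globally consistent as $\mathbb{K}$-relations, as required. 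The implication $(2)\Rightarrow(3)$ is immediate, since the free cover is a cover.

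For $(3)\Rightarrow(1)$ I would argue the contrapositive: assuming $H$ is not acyclic, I must show that $H$ fails the \ltgc~up to \emph{every} cover $h:\mathbb{K}^*\onto\mathbb{K}$. Note first that $\mathbb{K}^*$ is itself a positive commutative monoid, so Theorem~\ref{thm:necessary} applies to it. I fix a non-zero $c\in K$ (possible since $|K|\geq 2$) and, using surjectivity of $h$, a preimage $c^*\in K^*$ with $h(c^*)=c$. Running the construction behind Theorem~\ref{thm:necessary} --- Lemma~\ref{lem:newtseitin} followed by the reductions of Lemma~\ref{lem:cons-preservv} --- over $\mathbb{K}^*$ with $c^*$ as the base non-zero element produces $\mathbb{K}^*$-relations $R_1^*,\dots,R_m^*$ over $H$ that are pairwise but not globally consistent. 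Setting $R_i:=h\circ R_i^*$, each $R_i^*$ is an $h$-lift of $R_i$, so $R_1,\dots,R_m$ are pairwise consistent up to $h$.

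The main obstacle is to show that $R_1,\dots,R_m$ are \emph{not} globally consistent as $\mathbb{K}$-relations, and here absoluteness is of no direct help: global consistency of $R_1,\dots,R_m$ would produce \emph{some} globally consistent family of $h$-lifts, but not necessarily the $R_i^*$ I built, so I cannot transfer the failure back to $\mathbb{K}^*$. The way around this is to work with supports. Every non-zero value occurring in any $R_i^*$ is a non-empty sum of copies of $c^*$ (this holds in Lemma~\ref{lem:newtseitin} and is preserved by the marginal and padding operations of Lemma~\ref{lem:cons-preservv}), so its image under $h$ is the corresponding non-empty sum of copies of $c$, which is non-zero by positivity of $\mathbb{K}$; hence $\supp(R_i)=\supp(R_i^*)$ for every $i$. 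Because these supports are computed by operations that act identically on supports across all positive monoids (via Lemma~\ref{lem:easyfacts1}), they coincide with the supports of the ordinary $\mathbb{B}$-relations $\bar{R}_1,\dots,\bar{R}_m$ that the very same construction produces over the Boolean monoid $\mathbb{B}$; and those ordinary relations are pairwise but not globally consistent by Theorem~\ref{thm:necessary} (equivalently by Theorem~\ref{thm:BFMY}), since $H$ is not acyclic. Finally, if a $\mathbb{K}$-relation $W$ witnessed the global consistency of $R_1,\dots,R_m$, then $W[X_i]=R_i$ together with Lemma~\ref{lem:easyfacts1} would give $\supp(W)[X_i]=\supp(R_i)=\bar{R}_i$ for all $i$, so $\supp(W)$ would be an ordinary relation witnessing the global consistency of $\bar{R}_1,\dots,\bar{R}_m$ --- a contradiction. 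Therefore $H$ lacks the \ltgc~up to $h$, which completes the contrapositive and the cycle.
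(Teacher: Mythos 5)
Your proof is correct, and the forward implications $(1)\Rightarrow(2)\Rightarrow(3)$ coincide with the paper's own argument (transportation property of $\free{K^+}$ plus absoluteness from Proposition~\ref{prop:absoluteness}). Where you genuinely diverge is the hard direction $(3)\Rightarrow(1)$. The paper keeps the counterexample $\mathbb{K}$-relations fixed downstairs and proves two \emph{cover-aware} variants of its earlier machinery: Lemma~\ref{lem:acyclicnecessaryforminimalcover}, which constructs lifts over each cover $\mathbb{K}^*$ witnessing pairwise consistency up to \emph{every} cover, and Lemma~\ref{lem:cons-preserv-robust}, which redoes the safe-deletion reduction so that it transports consistency up to a cover; non-global consistency over $\mathbb{K}$ is then inherited verbatim from the proof of Theorem~\ref{thm:necessary}. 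You instead apply the unmodified Lemmas~\ref{lem:newtseitin} and~\ref{lem:cons-preservv} as black boxes inside $\mathbb{K}^*$ and push down along $h$, so that pairwise consistency up to the cover is automatic; the price is that non-global consistency of $R_i = h \circ R_i^*$ over $\mathbb{K}$ no longer comes for free, and you correctly identify that absoluteness cannot supply it (a hypothetical $\mathbb{K}$-witness yields \emph{some} consistent family of lifts, not yours). Your support argument plugs this hole soundly: every annotation in the construction is an integer multiple of $c^*$, so positivity gives $\supp(R_i)=\supp(R_i^*)$, these supports are monoid-independent by Lemma~\ref{lem:easyfacts1}, and any $\mathbb{K}$-witness $W$ would make $\supp(W)$ a Boolean witness for the Boolean instantiation of the construction, which is not globally consistent. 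One small repair: citing Theorem~\ref{thm:necessary} for that last fact is loose, since the theorem's statement only asserts the \emph{existence} of some pairwise-but-not-globally-consistent collection, whereas you need that \emph{this specific} Boolean collection fails; the needed fact is exactly what Lemma~\ref{lem:newtseitin} with $\mathbb{K}=\mathbb{B}$ together with the ``if and only if'' of Lemma~\ref{lem:cons-preservv} deliver, so you should cite those proofs rather than the theorem. In trade-off terms, the paper's route avoids any support analysis but must restate and reprove two lemmas up to covers, while yours reuses the lemmas untouched and concentrates all new work in a short support/shadow argument; it is also worth noting that your $h \circ R_i^*$ equals the paper's construction over $\mathbb{K}$ (since $h(nc^*)=nc$), so, like the paper's, your counterexample collection is in fact the same for every cover.
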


\begin{proof}
Let~$Y_1,\ldots,Y_m$ be a listing of the hyperedges of~$H$.  

(1) $\Longrightarrow$ (2).
We need to show that if~$H$ is acyclic, then pairwise
consistency up to the free cover of~$\mathbb{K}$ is a sufficient
condition for global consistency. This proof uses as a black box the
previously established fact that, for any
non-empty set~$X$ of indeterminates, the free commutative
monoid~$\free{X}$ has the \ftp, hence every acyclic hypergraph
has the (standard) local-to-global consistency property
for~$\free{X}$-relations - see Proposition~\ref{prop:freeftp} in
Section~\ref{sec:freemonoid}, and Theorem~\ref{thm:allequivalent} in Section~\ref{sec:sufficient}.

Let~$R_1(Y_1),\ldots,R_m(Y_m)$ be a collection
of~$\mathbb{K}$-relations and assume that it is pairwise consistent
up to the free cover~$h : \free{K^+} \onto \mathbb{K}$. Accordingly, 
let~$R^*_1,\ldots,R^*_m$ be a collection
of~$\free{K^+}$-relations that are~$h$-lifts of~$R_1,\ldots,R_m$,
respectively, and assume that the collection~$R^*_1,\ldots,R^*_m$ is
pairwise consistent. Since $H$ is acyclic, it has the~\ltgc~ for $\free{K^+}$-relations,
so the
collection~$R^*_1,\ldots,R^*_m$ of~$\free{K^+}$-relations is globally
consistent as~$\free{K^+}$-relations. But, then, the
collection~$R_1,\ldots,R_m$ of~$\mathbb{K}$-relations itself is
globally consistent up to the free cover of~$\mathbb{K}$, so it is
globally consistent by the absoluteness property stated in
Proposition~\ref{prop:absoluteness}.

(2) $\Longrightarrow$ (3).
This is obvious because the free cover of $\mathbb{K}$ is a cover of $\mathbb{K}$.

(3) $\Longrightarrow$ (1).
First we adapt the proof of Lemma~\ref{lem:newtseitin}
to show that there is no cover up to which the minimal
non-acyclic hypergraphs~$C_n$ and~$H_n$ with $n \geq 3$ have the
local-to-global consistency property.
As in Lemma~\ref{lem:newtseitin}, we prove this more generally for any non-trivial
uniform and regular hypergraph in Lemma~\ref{lem:acyclicnecessaryforminimalcover} below. After this is proved, we show
that the reduction that transfers the \ltgc~from any non-acyclic
hypergraph to the minimal cases also works up to covers. This
is done by adapting Lemma~\ref{lem:cons-preservv} to
the new context in Lemma~\ref{lem:cons-preserv-robust} below.
\end{proof}

The statement of the following lemma is almost identical
to its predecessor Lemma~\ref{lem:newtseitin},
the only difference being that the pairwise consistency
of the collection of $\mathbb{K}$-relations
is claimed up to every cover. We prove it by indicating how
the original arguments need to be adjusted.

\begin{lemma} \label{lem:acyclicnecessaryforminimalcover}
Let~$\mathbb{K}$ be a positive commutative monoid and
let~$X_1,\ldots,X_m$ be a schema that is~$k$-uniform and~$d$-regular
with~$k \geq 1$ and~$d \geq 2$. Then, there exists a collection
of~$\mathbb{K}$-relations of schema~$X_1,\ldots,X_m$ that is pairwise
consistent up to every cover of~$\mathbb{K}$ but not globally
consistent.
\end{lemma}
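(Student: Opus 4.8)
The plan is to reuse the very same collection $R_1,\ldots,R_m$ constructed in the proof of Lemma~\ref{lem:newtseitin}, and to observe that a family of lifts can be produced by the identical Tseitin-style construction carried out \emph{inside} each cover. Recall that in Lemma~\ref{lem:newtseitin} one fixes a non-zero element $c \in K$, sets $a := c + \cdots + c$ with $d^k$ summands, and defines $R_i$ to take value $a$ on the $X_i$-tuples whose coordinate sum is congruent to the prescribed residue modulo $d$ (namely $0$ for $i \neq m$ and $1$ for $i = m$) and value $0$ elsewhere. The conclusion that this collection fails to be globally consistent is immediate from Lemma~\ref{lem:newtseitin}, so the only new content is to upgrade pairwise consistency to pairwise consistency up to \emph{every} cover.

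So fix an arbitrary cover $h : \mathbb{K}^* \onto \mathbb{K}$. First I would choose an element $c^* \in K^*$ with $h(c^*) = c$, which exists by surjectivity of $h$; since $h(0) = 0 \neq c = h(c^*)$, the element $c^*$ is non-zero in $\mathbb{K}^*$. Crucially, $\mathbb{K}^*$ is itself a positive commutative monoid by the very definition of a cover. I would then set $a^* := c^* + \cdots + c^*$ with $d^k$ summands and define the $\mathbb{K}^*$-relations $R^*_i$ by exactly the Tseitin recipe, assigning $a^*$ to the tuples satisfying the same congruence as $R_i$ and $0$ to the rest.

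The verification splits into two routine observations. First, each $R^*_i$ is an $h$-lift of $R_i$: on a tuple $t$ satisfying the prescribed congruence we have $h(R^*_i(t)) = h(a^*) = h(c^*) + \cdots + h(c^*) = c + \cdots + c = a = R_i(t)$, using that $h$ is a homomorphism, and on every other tuple $h(R^*_i(t)) = h(0) = 0 = R_i(t)$. Second, the collection $R^*_1,\ldots,R^*_m$ is literally the collection produced by the construction of Lemma~\ref{lem:newtseitin} applied to the positive commutative monoid $\mathbb{K}^*$ with the non-zero element $c^*$; hence that lemma directly asserts that $R^*_1,\ldots,R^*_m$ is pairwise consistent as a collection of $\mathbb{K}^*$-relations. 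By Definition~\ref{defn:consistencycovers}, this exhibits the required $h$-lifts and shows that $R_1,\ldots,R_m$ is pairwise consistent up to the cover $h$. Since $h$ was arbitrary, pairwise consistency holds up to every cover.

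There is essentially no hard step here: the only point that needs care is that the cover $\mathbb{K}^*$ is guaranteed to be positive (which is built into the definition of a cover), so that Lemma~\ref{lem:newtseitin} is applicable inside $\mathbb{K}^*$. I would also emphasize that the witnesses of pairwise consistency need not be lifted from, or related to, the $\mathbb{K}$-witnesses of $R_1,\ldots,R_m$; they are produced afresh in $\mathbb{K}^*$ by the same congruence-counting formula, so no compatibility between a $\mathbb{K}$-witness and a $\mathbb{K}^*$-witness is ever required. The lifts themselves depend on the cover (through the choice of $c^*$), but the underlying collection $R_1,\ldots,R_m$ stays fixed across all covers, exactly as the statement demands.
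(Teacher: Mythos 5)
Your proposal is correct and follows essentially the same route as the paper's own proof: fix an arbitrary cover $h : \mathbb{K}^* \onto \mathbb{K}$, pull back $c$ to a non-zero $c^* \in \mathbb{K}^*$, form $a^*$ as the $d^k$-fold sum of $c^*$, define the lifts $R_i^*$ by the identical Tseitin recipe, and rerun the pairwise-consistency argument of Lemma~\ref{lem:newtseitin} inside $\mathbb{K}^*$, keeping the global-inconsistency argument for $R_1,\ldots,R_m$ unchanged. Your two emphasized points --- that positivity of $\mathbb{K}^*$ is built into the definition of a cover, and that the $\mathbb{K}^*$-witnesses are constructed afresh rather than lifted from $\mathbb{K}$-witnesses --- match the paper's reasoning exactly.
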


\begin{proof}
The construction of the $\mathbb{K}$-relations
$R_1,\ldots,R_m$ proceeds exactly as in Lemma~\ref{lem:newtseitin} 
until the point where it is argued that it is pairwise consistent. 
Here we need to show that it is pairwise consistent up to every cover of $\mathbb{K}$. Fix such a cover~$h : \mathbb{K}^* \to
  \mathbb{K}$ and argue as follows. 
  
  By the surjectivity of~$h$, there exists an
  element~$c^*$ of~$\mathbb{K}^*$ such that~$h(c^*) = c$. Since~$h$ is
  a homomorphism and~$c \not= 0$ in~$\mathbb{K}$, we have that also~$c^* \not= 0$
  in~$\mathbb{K}^*$. Let~$a^* := c^* + \cdots + c^*$ with $c^*$ 
  appearing $d^k$ times in the sum, which is computed in $\mathbb{K}^*$. 
  Using the notation $nx$ for $x + \cdots + x$ with $x$ appearing $n \geq 1$ times in the sum, which is computed in $\mathbb{K}$ or $\mathbb{K}^*$ depending on whether $x$ is an element of $\mathbb{K}$ or of $\mathbb{K}^*$, we have
  \begin{equation}
  h(a^*) = h(d^k c^*) = d^k h(c^*) = d^k c = a, \label{eqn:lifteq}
  \end{equation} 
  and~$a^* \not= 0$
  in~$\mathbb{K}^*$, again because~$h$ is a homomorphism and~$a \not=
  0$ in~$\mathbb{K}$. Next we define a collection~$R^*_1,\ldots,R^*_m$
  of $h$-lifts of~$R_1,\ldots,R_m$ by setting $R^*_i(t) = a^*$ for 
  every $X_i$-tuple $t$ such that $t \in R_i'$, and $R^*_i(t) = 0$ for every other $X_i$-tuple. By~\eqref{eqn:lifteq} we
  have~$h \circ R^*_i = R_i$, so $R^*_i$ is an $h$-lift of $R_i$. 
  The proof that the
  collection~$R^*_1,\ldots,R^*_m$ is
  pairwise consistent as a collection~$\mathbb{K}^*$-relations is
  identical to that in Lemma~\ref{lem:newtseitin} 
  for $R_1,\ldots,R_m$ but arguing with $c^*$ and $a^*$ in 
  $\mathbb{K}^*$ instead of arguing with $c$ and $a$ in $\mathbb{K}$.

  The proof that the collection~$R_1,\ldots,R_m$
  of~$\mathbb{K}$-relations is not globally consistent stays the same, which completes the proof.
\end{proof}

Next we argue that the two operations that transform an arbitrary  non-acyclic
hypergraph to a minimal one of the form $C_n$ with $n \geq 3$, or 
$H_n$ with $n \geq 4$,
preserve the same levels of
consistency up to a cover. The statement of the following lemma
is almost identical to that of Lemma~\ref{lem:cons-preservv}. To
prove it we will only indicate the differences in the arguments.

\begin{lemma} \label{lem:cons-preserv-robust} 
Let~$\mathbb{K}$ be a positive commutative monoid and let~$h :
\mathbb{K}^* \onto \mathbb{K}$ be a cover of~$\mathbb{K}$.  Let~$H_0$
and~$H_1$ be hypergraphs such that~$H_0$ is obtained from~$H_1$ by a
sequence of safe-deletion operations. For every collection~$D_0$
of~$\mathbb{K}$-relations over~$H_0$, there exists a collection~$D_1$
of~$\mathbb{K}$-relations over~$H_1$ such that, for every positive
integer~$k$, it holds that~$D_0$ is~$k$-wise consistent up to the cover 
if and only if~$D_1$ is~$k$-wise consistent up to the cover.
\end{lemma}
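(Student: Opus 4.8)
The plan is to mimic the proof of Lemma~\ref{lem:cons-preservv}, but to carry the whole argument out at the level of the covering monoid $\mathbb{K}^*$, exploiting the fact that the two constructions used there---taking a marginal and extending by a default value padded with zeros---commute with the surjective homomorphism $h$. As in Lemma~\ref{lem:cons-preservv}, it suffices to treat a single safe-deletion operation and then iterate, so I would assume $H_0$ is obtained from $H_1$ by deleting one covered edge or one vertex, and let $\Phi$ denote the map that sends a collection over $H_0$ to a collection over $H_1$ exactly as in that proof: the identity on the unaffected relations, the marginal $S_j[X]$ on a newly restored covered edge, and the zero-extension (agreeing with $S_i$ on the default value $u_0$ and vanishing elsewhere) on a newly restored vertex. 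I would then define $D_1 := \Phi(D_0)$. The point to keep in mind is that $\Phi$ is defined over an \emph{arbitrary} positive commutative monoid, so it applies equally to collections of $\mathbb{K}^*$-relations, and Lemma~\ref{lem:cons-preservv} instantiated at the positive commutative monoid $\mathbb{K}^*$ already tells us that a collection $D_0^*$ of $\mathbb{K}^*$-relations over $H_0$ is $k$-wise consistent if and only if $\Phi(D_0^*)$ is.

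The first substantive step is to record the commutation facts. Since $h$ is a monoid homomorphism it preserves finite sums, and since an $h$-lift $R^*$ of $R$ satisfies $h(R^*(r)) = R(r) = 0$ on every tuple $r$ outside the support of $R$, the extra terms in a marginal of $R^*$ collapse under $h$; one checks directly that $h \circ (R^*[Y]) = (h \circ R^*)[Y]$. The analogous identity for the zero-extension is immediate from $h(0) = 0$. Consequently, if $D_0^*$ is a collection of $h$-lifts of $D_0$, then $\Phi(D_0^*)$ is a collection of $h$-lifts of $\Phi(D_0) = D_1$. This is the bridge that lets me transport the consistency analysis between $\mathbb{K}$ and $\mathbb{K}^*$.

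With this bridge in place, the forward implication is painless: recalling that $k$-wise consistency up to the cover means precisely that \emph{some} collection of $h$-lifts is $k$-wise consistent as $\mathbb{K}^*$-relations, if $D_0$ is $k$-wise consistent up to the cover, witnessed by a $k$-wise consistent collection $D_0^*$ of $h$-lifts, then $\Phi(D_0^*)$ is simultaneously a collection of $h$-lifts of $D_1$ (by the commutation above) and $k$-wise consistent (by Lemma~\ref{lem:cons-preservv} over $\mathbb{K}^*$), so $D_1$ is $k$-wise consistent up to the cover. The reverse implication is where the real work lies, and I expect it to be the main obstacle: a $k$-wise consistent collection $D_1^*$ of $h$-lifts of $D_1$ \emph{need not} be of the form $\Phi(D_0^*)$, because its members may carry nonzero $\mathbb{K}^*$-values lying in $h^{-1}(0)$ on tuples where the corresponding $D_1$-relation vanishes. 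Thus one cannot simply invert $\Phi$ and quote Lemma~\ref{lem:cons-preservv}; instead I would reconstruct a lift of $D_0$ from $D_1^*$ directly.

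In the covered-edge case this reconstruction is immediate: deleting from $D_1^*$ the single $\mathbb{K}^*$-relation sitting on the restored covered edge yields a collection of $h$-lifts of $D_0$, and it is $k$-wise consistent because $k$-wise consistency is inherited by subcollections (every subfamily of size at most $k$ of the smaller collection is also one of the larger). In the vertex-deletion case I would instead replace each member $R_i^*$ of $D_1^*$ by its marginal $R_i^*[Y_i]$ onto $Y_i = X_i \setminus \{A\}$; by the commutation fact this marginal is an $h$-lift of $R_i[Y_i] = S_i$, so the resulting collection lifts $D_0$, and it is $k$-wise consistent by exactly the (If) computation of Claim~\ref{claim:removeattribute} carried out over $\mathbb{K}^*$---namely, marginalizing any witness $T^*$ of a subfamily $\{R_i^* : i \in I\}$ onto $\bigcup_{i \in I} Y_i$ yields, via Lemma~\ref{lem:easyfacts1}, a witness of $\{R_i^*[Y_i] : i \in I\}$, with no assumption that the $R_i^*$ were zero-extensions. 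This produces a $k$-wise consistent lift of $D_0$, completing the reverse implication and hence the lemma.
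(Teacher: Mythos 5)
Your proof is correct, and its skeleton is the same as the paper's: carry out the construction of Lemma~\ref{lem:cons-preservv} over the covering monoid $\mathbb{K}^*$ (which is legitimate since covers are by definition positive commutative monoids), observe that $h$ commutes with marginals and with zero-extension so that the construction maps $h$-lifts to $h$-lifts, and then invoke Claims~\ref{claim:removeattribute} and~\ref{claim:removecoverededge} instantiated at $\mathbb{K}^*$. Where you go beyond the paper is the reverse implication. The paper's proof only says to build lifts $R_i^*$ of the $D_1$-relations from given lifts $S_i^*$ of the $D_0$-relations and then cites the two claims ``in both directions,'' which literally covers only the case where the witnessing lifts of $D_1$ lie in the image of the construction; you correctly observe that an arbitrary $k$-wise consistent family of lifts of $D_1$ need not have this form, precisely because $h^{-1}(0)$ may contain nonzero elements of $\mathbb{K}^*$, so a lift of a zero-extension need not itself be a zero-extension. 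Your repair --- dropping the restored covered-edge relation and using that $k$-wise consistency passes to subcollections, and in the vertex case replacing each lift $\tilde R_i^*$ by $\tilde R_i^*[Y_i]$ and rerunning the (If) computation of Claim~\ref{claim:removeattribute} via Lemma~\ref{lem:easyfacts1}, which indeed never uses the zero-extension structure of the lifted family --- is exactly what is needed and is sound. (It is worth noting that in the paper's actual application of the lemma, the reverse direction is only needed at $k = m$, where one could alternatively route through the absoluteness of global consistency, Proposition~\ref{prop:absoluteness}; your argument proves the stated equivalence for every $k$ without that detour.)
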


\begin{proof}
  The construction is the same as in Lemma~\ref{lem:cons-preservv}
  just that besides the $\mathbb{K}$-relations $R_i$ we also need
  to construct their $h$-lifts $R^*_i$ from the $h$-lifts $S^*_i$ of the $S_i$. Concretely, the argument is as follows.
  In an edge-deletion operation with the notation as in the proof of Lemma~\ref{lem:cons-preservv}, the lift $R^*_i$ associated to an $R_i$ with $X_i \not= X$ is taken as $R^*_i := S^*_i$, and that associated to the $R_i$ with $X_i = X$ is taken as~$R^*_i := S^*_j[X]$. In a vertex-deletion operation with the notation as in the proof of Lemma~\ref{lem:cons-preservv}, the lift~$R^*_i$ associated to an~$R_i$ with~$A \not\in X_i$ 
  is taken as~$R^*_i := S^*_i$, and that associated to an $R_i$ with $A \in X_i$ is defined by~$R^*_i(t) := S^*_i(t[X_i])$ if~$t(A) = u_0$ and~$R^*_i(t) := 0$ if~$t(A) \not= u_0$.  
  Observe that, in both cases, since~$h \circ S^*_i = S_i$ holds for all indices $i \in [m]$ for which $S_i$ and $S^*_i$ exist, 
  also~$h \circ R^*_i = R_i$ holds for all $i \in [m]$, so $R^*_1,\ldots,R^*_m$ are $h$-lifts of $R_1,\ldots,R_m$.

  With these definitions, the proof follows
  from Claims~\ref{claim:removeattribute} and~\ref{claim:removecoverededge} applied to the positive
  commutative monoid~$\mathbb{K}^*$ instead of~$\mathbb{K}$, and to
  the~collections of~$\mathbb{K}^*$-relations~$R^*_i$ and~$S^*_i$
  instead of the collections of~$\mathbb{K}$-relations~$R_i$
  and~$S_i$.
  \end{proof}

\section{Concluding Remarks} \label{sec:conclusions}

In this paper, we carried out a systematic investigation of the  interplay between local consistency
and global consistency for $\mathbb K$-relations, where ${\mathbb K}$ is a positive commutative monoid. In particular, we characterized the positive commutative monoids $\mathbb K$ for which a schema $H$ is acyclic if and only if
$H$ has the \ltgc~for $\mathbb K$-relations; this characterization was in terms of the inner consistency property, which is a semantic notion, and also in terms of the transportation property, which is a combinatorial notion. Furthermore, we  showed that, by strengthening the notion of pairwise consistency to pairwise consistency up to the free cover of $\mathbb K$, we can characterize the \ltgc~for collections of $\mathbb K$-relations on acyclic schemas for arbitrary positive commutative monoids. 

We conclude by describing a few open problems  motivated by
the work reported here.

As seen earlier, there are finite positive commutative monoids that have the transportation property (e.g., $\mathbb B$) and others that do not (e.g., ${\mathbb N}_2$). How difficult is it to decide whether or not a given finite positive commutative monoid $\mathbb K$ has the transportation property? Is this problem decidable or undecidable? The same question can be asked when the given monoid is \emph{finitely presentable}.
Note that the transportation property is defined using an infinite set of first-order axioms in the language of monoids. Thus, a related question is whether or not  the transportation property is finitely axiomatizable.

We  exhibited several classes of monoids that have the transportation property. In each case, we gave  an explicit construction or a procedure for finding a witness to the consistency of two consistent $\mathbb K$-relations. In some cases (e.g., when the monoid has an expansion to a semifield), there is a suitable join operation that yields a \emph{canonical} such witness. However, in some other cases (e.g., when the northwest corner method is used), no \emph{canonical} such witness seems to exist.  Is there a way to compare the different witnesses to consistency and classify them according to some desirable property, such as maximizing some
carefully chosen objective function?

Beeri et al.\ \cite{BeeriFaginMaierYannakakis1983} showed that
hypergraph acyclicity is  also equivalent to  semantic
conditions other than the \ltgc~for ordinary relations, such as the existence
of a \emph{full reducer}, which is a sequence of \emph{semi-join} operations for computing a witness to global 
consistency. 
Does an analogous result hold  for positive commutative monoids $\mathbb K$ that have the transportation property? The main difficulty is that it is not clear if a suitable semi-join operation on $\mathbb K$-relations can be defined for such monoids.

Finally, the work presented here expands the study of relations with annotations over semirings to relations with annotations over monoids. As explained in the Introduction, consistency notions only require the use of an addition operation (and not a multiplication operation). What other fundamental problems in databases can be studied in this  broader framework of relations with annotations over monoids?

\section*{Acknowledgments} 
The research of Albert Atserias was 
partially supported by grants PID2019-109137GB-C22 (PROOFS) and 
PID2022-138506NB-C22 (PROOFS BEYOND), and Severo Ochoa and María 
de Maeztu Program for Centers and Units of Excellence in 
R\&D (CEX2020-001084-M) of the AEI. The research of Phokion Kolaitis
was partially supported by NSF Grant IIS-1814152.

\bibliographystyle{alpha}
\bibliography{biblio}

\end{document}